\documentclass[11pt]{scrartcl}

\usepackage{xy-format}
\usepackage{xy-theorem}
\usepackage{scrlayer-scrpage}
\usepackage{amsmath}
\usepackage{longtable}
\usepackage{tikz}
\usetikzlibrary{arrows.meta}
\clearpairofpagestyles
\newcommand{\C}{\mathbb{C}}
\newcommand{\Z}{\mathbb{Z}}
\newcommand{\R}{\mathbb{R}}
\newcommand{\Q}{\mathbb{Q}}
\newcommand{\tS}{\tilde{S}}
\newcommand{\sfT}{\mathsf{T}}
\newcommand{\sfS}{\mathsf{S}}
\newcommand{\Comm}{\mathrm{Comm}}

\theoremstyle{xyplain}
\newtheorem{corollary}[theorem]{Corollary}
\newtheorem*{maintheorem}{Main theorem}

\title{Genus-One Rigidity of Bounded Virasoro Pairings for 3D Gravity}
\author[1]{Xingyang Yu}
\affil[1]{Department of Physics, Virginia Tech, Blacksburg, VA 24061, USA}
\date{}

\begin{document}
\maketitle

\begin{abstract}
Motivated by the proposed relation between 3D gravity and the
doubled Virasoro TQFT, and by the associated program of ensemble holography, we
ask whether ensemble holography in AdS$_3$/CFT$_2$ can be understood as an
average over absolute 2D CFTs obtained by varying the topological
boundary condition of the doubled Virasoro TQFT. We show that, at genus one and
in the ordinary nondegenerate sector, every vacuum-normalized, modular-invariant
genus-one pairing that acts boundedly on the auxiliary $L^2$ label space is
diagonal. Under the ordinary block-diagonal vacuum--continuum ansatz, the same
conclusion holds for entrywise-positive, modular-invariant Borel-measure pairings
satisfying the two vacuum marginals, without assuming absolute continuity or
boundedness. The bounded class and this positive-measure class therefore contain
no distinct nontrivial torus pairings to average. Any nontrivial Virasoro boundary
ensemble must use genus-one pairings outside these classes, involve additional
sectors or vacuum--continuum couplings, or distinguish its members through
information not captured by the genus-one pairing. Technically, the proof identifies
the Virasoro $S$ and $T$ transformations with the even Weil representation of the
metaplectic group and applies the Cowling--Steger lattice-restriction theorem.
We also give elementary proofs for several explicit classes of candidate
boundary conditions and perform numerical checks of the general result.
\end{abstract}

\newpage
\tableofcontents

\section{Introduction}\label{sec:intro}

A recurring puzzle in AdS$_3$/CFT$_2$ is that a gravitational path integral can
look more like an ensemble average than the partition function of a single
boundary theory~\cite{2006.08648,2203.06511,2308.03829,2407.02649}. The
Maloney--Witten sum over the known smooth genus-one saddles, for example, cannot
be interpreted as a torus trace over a Hilbert space~\cite{0712.0155}. Narain
theories provide a controlled counterpoint: their genus-one partition function,
averaged over moduli space, is reproduced by a bulk abelian TFT
sum~\cite{2006.04855,2006.04839,Yu:2026gdf}. These examples
motivate a basic question in ensemble holography: when a bulk path integral
computes an average, what precisely is being averaged on the boundary?

Symmetry TFT (SymTFT)\footnote{We follow the standard usage of each community:
TQFT for the Virasoro topological theory (QG/holography), and TFT in SymTFT
(QFT/condensed matter).} gives a concrete way to
formulate this question.\footnote{See, e.g.,
\cite{Reshetikhin:1991tc,Turaev:1992hq,Barrett:1993ab,hep-th/9812012,
hep-th/0204148,Kirillov:2010nh,1008.0654,1012.0911,Kitaev:2011dxc,Fuchs:2012dt,
1212.1692,1412.5148,Kong:2014qka,Kong:2017hcw,Heckman:2017uxe,
Freed:2018cec,Thorngren:2019iar,Gaiotto:2020iye,Kong:2020cie,
Apruzzi:2021nmk,Freed:2022qnc,Kaidi:2022cpf,Antinucci:2022vyk,
Schafer-Nameki:2023jdn,2306.11783,Bhardwaj:2023kri,Baume:2023kkf,
Yu:2023nyn,2401.06128,2401.10165,DelZotto:2024tae,Argurio:2024oym,
Franco:2024mxa,Heckman:2024zdo,Gagliano:2024off,Cordova:2024iti,
Cvetic:2024dzu,Bhardwaj:2024igy,Bonetti:2024cjk,Apruzzi:2024htg,
2411.14997,Jia:2025jmn,Apruzzi:2025mdl,Heckman:2025lmw,Pace:2025hpb,
Luo:2025phx,Apruzzi:2025hvs,2510.06319,2603.12323} and references therein
for a partial list of foundational work, recent generalizations, and modern
overviews.} In this framework,
a $D$-dimensional QFT with specified symmetry data can be represented by a
$(D{+}1)$-dimensional SymTFT on a slab~\cite{Gaiotto:2020iye,Freed:2022qnc,
Kaidi:2022cpf}. One side carries the physical boundary theory, while the other
carries a topological boundary condition; pairing the two produces an absolute
theory~\cite{1212.1692,Cvetic:2024dzu}. In this construction, changing the
topological boundary can change the global completion of the Hilbert space without
changing the Hamiltonian~\cite{Yu:2026gdf}. This suggests
a SymTFT interpretation of ensemble holography: keep the physical boundary and
the bulk symmetry data fixed, and average over topological boundary conditions,
or equivalently over the resulting absolute QFTs~\cite{2310.06012,2504.08724,
2510.03392,2511.04311}. In Abelian Chern--Simons theory,
topological boundary conditions are classified by Lagrangian subgroups of the
discriminant group~\cite{1008.0654,2203.09537}, the Abelian prototype of the
Lagrangian-algebra description. Related TQFT-gravity constructions organize
boundary ensembles through maximal gaugings or Lagrangian topological boundaries
and relate them to sums over bulk topologies~\cite{2310.13044,2405.20366}. In the
Narain example this picture is literal: the relevant
topological boundaries are parametrized by Narain moduli~\cite{Yu:2026gdf,
2606.15732}. Their torus partition functions are different, and averaging over
them reproduces the known ensemble result~\cite{Yu:2026gdf}.

The same mechanism is especially tempting for 3D gravity. The Virasoro TQFT has
been proposed as a topological organization of fixed-topology contributions to
AdS$_3$ gravity, and its doubled version carries the left- and right-moving
sectors relevant for a non-chiral 2D boundary theory~\cite{2304.13650,2401.13900}.
Recent triangulation-based and random-ensemble constructions sharpen this
fixed-topology relation~\cite{2507.11652,2507.12696,2407.02649,2506.19817,
2604.09396}; the geometric and open-sector boundary conditions appearing there
are distinct from the topological boundary conditions considered below.
We can then try to keep the physical boundary fixed, vary the topological
boundary condition of the doubled
Virasoro TQFT, and interpret the resulting theories as an ensemble of absolute
2D CFTs~\cite{Yu:2026gdf,2605.12590}. The question addressed in this paper is
whether this proposal already produces a nontrivial ensemble at genus one:
\begin{quote}
\emph{Can the doubled Virasoro TQFT admit non-diagonal topological boundary
conditions that lead to distinct torus partition functions?}
\end{quote}

At genus one the question has a precise formulation. Let $P$ and $Q$ label the
left- and right-moving nondegenerate Virasoro representations. Following the
proposed topological-boundary dictionary~\cite{Yu:2026gdf}, a topological
boundary determines multiplicity data $N(P,Q)$ and hence a torus partition
function
\begin{equation}
Z_N(\tau,\bar\tau)
 =\int_0^\infty dP\,dQ\,
 \bar\chi_P(\bar\tau)\,N(P,Q)\,\chi_Q(\tau).
\end{equation}
The diagonal boundary pairs equal representations,
$N(P,Q)=\delta(P-Q)$. Any different genus-one spectrum would require $N\neq I$,
either through a nonconstant diagonal weight or through off-diagonal support.
Since $Z_N$ must be modular
invariant, the integral operator with kernel $N$\footnote{Here
``kernel'' means an integral kernel, not the null space of an
operator: in the convention above, $N$ acts on a test function as
$(Nf)(P)=\int_0^\infty dQ\,N(P,Q)f(Q)$.}
must commute with the Virasoro $S$ and $T$ transformations. This is a necessary
genus-one condition. It is not, by itself, sufficient to construct a full
topological boundary, which requires further algebraic data and consistency
conditions, such as the Frobenius, Cardy and sewing conditions familiar from
rational CFT~\cite{hep-th/0204148}.

We work in the ordinary nondegenerate
sector.\footnote{Here ``nondegenerate'' means the
Virasoro continuum with real momentum $P>0$ and no null states; ``ordinary''
excludes extra degenerate sectors and extended chiral algebras.} Its modular
transformations are~\cite{Lacki:1990jb,2304.13650}
\begin{equation}
S(P,Q)=2\sqrt2\cos(4\pi PQ),
\qquad
T(P)=e^{2\pi i(P^2-1/24)}.
\end{equation}
\begin{samepage}
\noindent With this setup, the key genus-one rigidity statement is the following.
\begin{center}
\setlength{\fboxsep}{6pt}
\fbox{%
\begin{minipage}{0.91\textwidth}
\begin{maintheorem}
The bounded joint commutant of $S$ and $T$ is scalar:
\begin{equation}\label{eq:intro-commutant}
\Comm(S,T)=\C\,I,
\end{equation}
where $I$ is the identity and $\Comm(S,T)$ denotes the bounded joint commutant.
\end{maintheorem}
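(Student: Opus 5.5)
We identify the pair $(S,T)$ on $L^2(\R_{>0},dP)$ with the even Weil representation of the metaplectic group, and then use lattice restriction to show that the commutant is trivial.

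\emph{Step 1: identification.} Extend functions on $(0,\infty)$ evenly to $\R$, so that $L^2(\R_{>0})\cong L^2(\R)_{\mathrm{even}}$. Under this map the kernel $2\sqrt2\cos(4\pi PQ)$ becomes, after rescaling $P\mapsto P/\sqrt2$ (or absorbing the factor into the normalization of the Fourier transform), the unitary Fourier transform restricted to even functions. Multiplication by $e^{2\pi iP^2}$ becomes the chirp $e^{i\pi x^2}$. Up to the constant phase $e^{-2\pi i/24}$ and a scalar phase on $S$, these are exactly the images of the generators $\begin{pmatrix}0&-1\\1&0\end{pmatrix}$ and $\begin{pmatrix}1&1\\0&1\end{pmatrix}$ under the Weil representation $\omega$ of $Mp(2,\R)$. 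Scalar phases do not affect commutants. Hence $\Comm(S,T)=\omega_{\mathrm{even}}(\Gamma)'$, where $\Gamma$ is the preimage of $SL(2,\Z)$ in $Mp(2,\R)$.

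\emph{Step 2: irreducibility on the full group.} The even Weil representation $\omega_{\mathrm{even}}$ is an irreducible unitary representation of $Mp(2,\R)$. It is one of the two irreducible summands of the oscillator representation, a standard fact, e.g.\ via $K$-types: the Hermite functions $h_{2n}$ each span a distinct one-dimensional $K$-type. By Schur's lemma its commutant under the full group is $\C I$.

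\emph{Step 3: restriction to the lattice.} Apply the Cowling--Steger theorem. For a simple Lie group $G$ with finite centre and a lattice $\Gamma$, and a nontrivial irreducible unitary representation $\pi$ whose matrix coefficients vanish at infinity, $\pi|_\Gamma$ remains irreducible. We then check the hypotheses.
\begin{itemize}
\item $Mp(2,\R)$ is a connected, non-compact double cover of $SL(2,\R)$ with finite centre.
\item The preimage of $SL(2,\Z)$ is a lattice, since it has finite covolume.
\item The matrix coefficients of $\omega_{\mathrm{even}}$ decay. It is a lowest-weight representation of weight $1/4$, and its coefficients $\langle\omega(g)h_0,h_0\rangle$ decay like $\|g\|^{-1/2}$, so they are $C_0$. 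Equivalently, the representation is nontrivial and irreducible, and Howe--Moore applies.
\end{itemize}
Therefore $\omega_{\mathrm{even}}|_\Gamma$ is irreducible, and Schur's lemma for the bounded commutant gives $\Comm(S,T)=\C I$.

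\emph{Main obstacle.} The delicate step is the precise normalization matching in Step 1. The $2\sqrt2$ factor and the $\cos(4\pi PQ)$ argument fix the scaling of the variable, and the image of $T$ must be exactly the chirp that corresponds to $T$ rather than $T^2$ or a fractional power. I would first verify $S^2=I$ on even functions, and check that $(ST)^3$ equals a scalar consistent with the metaplectic relations.

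A secondary point is that Cowling--Steger is stated for groups with finite centre, including nonlinear covers. If one prefers to avoid this, one can pass to $\omega_{\mathrm{even}}\otimes\overline{\omega_{\mathrm{even}}}$, or note that the metaplectic cover has finite centre $\Z/4$, so the theorem applies directly.
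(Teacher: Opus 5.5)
Your route is the paper's route: the even Weil identification via $x=\sqrt2\,P$, lifts that generate $\widetilde\Gamma$ up to central scalars, lattice restriction, and Schur. Steps 1--2 are fine. The gap is in Step 3, where you misquote the Cowling--Steger theorem. Its hypothesis, in Bekka's form used by the paper, is that $\pi$ is \emph{not square-integrable}. It is not that the matrix coefficients vanish at infinity, and with your hypothesis the statement is false. By Howe--Moore, every nontrivial irreducible unitary representation of $Mp(2,\R)$ has $C_0$ coefficients; your own remark that Howe--Moore applies shows your condition is automatic. This includes the discrete series. But a square-integrable $\pi$ sits inside $\lambda_G$, so $\pi|_\Gamma$ sits inside a multiple of $\lambda_\Gamma$. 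Its commutant is then a type II$_1$ algebra (coupling constant $d_\pi\,\mathrm{covol}(\Gamma)$), not $\C$. So the property you verify does not imply that $\omega_{\mathrm{even}}|_{\widetilde\Gamma}$ is irreducible.

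The repair uses the decay rate you already wrote down, but to reach the opposite conclusion. For the Gaussian,
\[
\bigl|\langle\omega(a(s))h_0,h_0\rangle\bigr|=\sqrt2\,e^{s/2}(1+e^{2s})^{-1/2}\sim\sqrt2\,e^{-s/2}.
\]
Because $h_0$ spans a one-dimensional $K$-type, this modulus is bi-$K$-invariant. The Haar density in $KA^+K$ coordinates is $\sinh(2s)\asymp e^{2s}$, so $\int_G|c|^2\,dg$ contains $\int^\infty e^{-s}\,e^{2s}\,ds=\infty$. For an irreducible representation, one nonzero coefficient outside $L^2(G)$ already rules out square-integrability. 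Since the center $\Z/4$ is finite, no quotient by the center is needed. This is exactly the check the paper performs before applying the theorem.

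Your fallback via $\omega_{\mathrm{even}}\otimes\overline{\omega_{\mathrm{even}}}$ would not work either. That representation is reducible, and its $\widetilde\Gamma$-invariant vectors detect only Hilbert--Schmidt elements of the commutant. $\Comm(T)$ contains no nonzero Hilbert--Schmidt operator, so this says nothing about the bounded commutant.
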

\end{minipage}}
\end{center}
\end{samepage}
\noindent Physically, this means that within the bounded class, modular invariance leaves
no nontrivial way to mix distinct left- and right-moving Virasoro
representations. The only bounded pairing is proportional to the diagonal one.
Concretely, we consider every
topological boundary whose genus-one pairing either defines a bounded operator
on the auxiliary label space, or is represented, under the ordinary
block-diagonal vacuum--continuum ansatz, by an entrywise-positive Borel measure
satisfying the two vacuum marginals. In the bounded case, the main theorem
reduces the pairing to a multiple of $\delta(P-Q)$, and vacuum normalization
fixes the coefficient to one. In the positive-measure case, the two marginals
first imply boundedness by a weighted Schur test, after which the same theorem
applies. Hence no topological boundary in these classes can produce a
nontrivial torus pairing. The comparison with Narain is analytic. Individual
Narain lattice pairings are positive atomic measures on the continuous
charge-label space and need not define bounded operators~\cite{Yu:2026gdf}.
For Virasoro, the two vacuum marginals close this positive-measure loophole
within the ordinary block-diagonal ansatz by forcing such a pairing into the
bounded class. Any Virasoro ensemble not excluded here must use information not
captured by the torus pairing, additional sectors, vacuum--continuum couplings,
or genus-one data outside the two classes.

\begin{figure}[p]
\centering
\newcommand{\roadmapref}[1]{{\footnotesize\color{black!60}#1}}
\begin{tikzpicture}[
  roadbox/.style={draw=black!45, rounded corners=2.5pt, line width=0.55pt,
    align=center, inner xsep=5pt, inner ysep=6.5pt, font=\small,
    minimum height=1.48cm},
  roadroute/.style={roadbox, text width=5.25cm},
  roadrowtop/.style={minimum height=2.15cm},
  roadrowmiddle/.style={minimum height=1.75cm},
  roadrowbottom/.style={minimum height=2.15cm},
  roadgate/.style={roadroute, fill=xysecrefblue!7},
  roadpositive/.style={roadroute, fill=xycitered!5},
  roadresult/.style={roadbox, fill=xyeqrefgreen!7,
    draw=xyeqrefgreen!65!black},
  roadsupport/.style={roadbox, text width=3.72cm, minimum height=1.72cm,
    fill=black!2, draw=black!28, font=\footnotesize},
  roadarrow/.style={-{Latex[length=2.2mm,width=1.5mm]},
    line width=0.7pt, draw=black!65}
]
  \node[roadroute, roadrowtop, fill=black!2] (candidate) at (-3.10,0)
    {\textbf{Candidate torus data} \quad $N(P,Q)$\\[-1pt]
     from an ordinary boundary\\[2pt]
     \roadmapref{physical input: Sec.~\ref{sec:vtqft}; formulation: Sec.~\ref{sec:torus2comm}}};
  \node[roadpositive, roadrowtop] (positive) at (3.10,0)
    {\textbf{Positive measure kernel}\\[-1pt]
     possibly singular or unbounded\\[2pt]
     \roadmapref{scope: Sec.~\ref{sec:scope}; result: Sec.~\ref{sec:vacuum}}};

  \node[roadgate, roadrowmiddle] (tgate) at (-3.10,-2.45)
    {\textbf{$T$: level matching}\\[-1pt]
     $u=P^2=\theta+k$: mixing survives\\[2pt]
     \roadmapref{complete solution: Sec.~\ref{sec:levelmatch}}};
  \node[roadpositive, roadrowmiddle] (marginals) at (3.10,-2.45)
    {\textbf{Two vacuum marginals}\\[-1pt]
     row and column normalization\\[2pt]
     \roadmapref{physical normalization: Sec.~\ref{sec:vacuum}}};

  \node[roadgate, roadrowbottom] (sgate) at (-3.10,-4.90)
    {\textbf{$S$: global mixing}\\[-1pt]
     cosine transform over all momenta\\[2pt]
     \roadmapref{second modular gate: Sec.~\ref{sec:torus2comm}}};
  \node[roadpositive, roadrowbottom] (schur) at (3.10,-4.90)
    {\textbf{Weighted Schur test}\\[-1pt]
     $\lVert N\rVert\leq 1$\\[2pt]
     \roadmapref{tool: Sec.~\ref{sec:prelim-comm}; result: Sec.~\ref{sec:vacuum}; proof: App.~\ref{app:schurproof}}};

  \node[roadresult, text width=5.25cm, minimum height=2.15cm] (scalar) at (0,-7.60)
    {\textbf{Bounded rigidity}\\[-1pt]
     $\Comm(S,T)=\C I$\\[2pt]
     \roadmapref{tool: Sec.~\ref{sec:prelim-weil}; theorem: Sec.~\ref{sec:layer2a}; proof: App.~\ref{app:weilproof}}};
  \node[roadresult, text width=5.25cm, minimum height=1.70cm] (vacuum) at (0,-10.05)
    {\textbf{Vacuum normalization}\\[-1pt]
     fixes the scalar to $1$\\[2pt]
     \roadmapref{Sec.~\ref{sec:vacuum}}};

  \node[roadresult, text width=12.5cm, minimum height=1.70cm] (nogo) at (0,-12.25)
    {\textbf{Unique pairing in the bounded and positive vacuum-marginal classes:}
     $N=I$ (diagonal)\\[1pt]
     $\Longrightarrow$ \quad
     \textbf{no distinct torus pairings to average in these classes}\\[2pt]
     \roadmapref{headline: Sec.~\ref{sec:intro}; implications and limits: Sec.~\ref{sec:open}}};

  \draw[rounded corners=3pt, dashed, line width=0.55pt, draw=black!38]
    (-6.45,-13.75) rectangle (6.45,-16.35);
  \node[fill=white, inner xsep=5pt, font=\small] at (0,-13.75)
    {\textbf{Independent mechanisms and checks}
     \quad \roadmapref{not inputs to the full bounded theorem}};
  \node[roadsupport] (branches) at (-4.20,-15.15)
    {\textbf{Explicit branch mechanisms}\\[2pt]
     \roadmapref{Secs.~\ref{sec:A}--\ref{sec:hardedge}}\\[-1pt]
     \roadmapref{proofs: App.~\ref{sec:diag}--\ref{app:hardedgeproof}}};
  \node[roadsupport] (tame) at (0,-15.15)
    {\textbf{Tame tails and scope}\\[2pt]
     \roadmapref{Sec.~\ref{sec:tame}; proof: App.~\ref{app:proofs}}\\[-1pt]
     \roadmapref{simple-current guardrail: Sec.~\ref{sec:B}}};
  \node[roadsupport] (checks) at (4.20,-15.15)
    {\textbf{Numerics and open gap}\\[2pt]
     \roadmapref{checks: Sec.~\ref{sec:numerics}}\\[-1pt]
     \roadmapref{quantitative layer: App.~\ref{app:gap}}};

  \draw[roadarrow] (candidate.east) -- (positive.west);
  \draw[roadarrow] (candidate.south) -- (tgate.north);
  \draw[roadarrow] (tgate.south) -- (sgate.north);
  \draw[roadarrow] (positive.south) -- (marginals.north);
  \draw[roadarrow] (marginals.south) -- (schur.north);
  \draw[roadarrow] (sgate.south) -- (scalar.north west);
  \draw[roadarrow] (schur.south) -- (scalar.north east);
  \draw[roadarrow] (scalar.south) -- (vacuum.north);
  \draw[roadarrow] (vacuum.south) -- (nogo.north);
\end{tikzpicture}
\caption{Reader's roadmap for the genus-one no-go and for the paper.  The gray
navigation text in each main box gives the subsection that develops the step and, where
appropriate, the appendix containing its complete proof.  Blue boxes mark the
bounded modular route, rose boxes the positive-measure upgrade, and green boxes
the shared rigidity and normalization conclusions; the neutral box gives their
common starting point.  The dashed lower band is a navigation guide to
independent concrete mechanisms, scope guardrails and numerical checks, not a
list of premises for the full bounded-commutant theorem.}
\label{fig:reader-roadmap}
\end{figure}

Because the argument behind this conclusion is technically involved,
Figure~\ref{fig:reader-roadmap} provides a roadmap before we turn to the
details.  Its central bounded step is representation-theoretic: the Virasoro
$S$ and $T$ transformations are identified with the even Weil representation
of the metaplectic group. The quadratic phase is
the metaplectic shear, while the cosine transform is the even Fourier transform.
The Cowling--Steger lattice-restriction theorem then implies that the restriction
to the modular lattice remains
irreducible~\cite{CowlingSteger:1991,Bekka:lattices}, and Schur's lemma gives the
scalar commutant. The
representation-theoretic theorem is not new; the point is that it applies
directly to the Virasoro modular problem and yields the genus-one no-go.

To make the rigidity mechanism concrete, we also give elementary proofs for several explicit classes of candidate
boundary conditions. Their common mechanism is easy to state. In the
squared-momentum variable $u=P^2$, the $T$ transformation only detects $u$
modulo an integer and therefore permits apparent branches $u\mapsto u+n$. The
$S$ transformation mixes
the full continuum, and its different oscillation frequencies prevent these
branches from combining into a non-diagonal invariant. Finite-dimensional
truncations provide an independent numerical check: in every tested basis the
joint commutant is 1D and aligned with the identity. These
computations corroborate the theorem but are not used as its proof.

The problem is the continuous counterpart of the classification of modular
invariants in rational CFT~\cite{Cappelli:1987xt,math/9902064}, but the logic is
different. Here the spectrum is continuous, and the result is a rigidity theorem
for the nondegenerate Virasoro modular representation rather than a
Cappelli--Itzykson--Zuber-type list. Earlier work derived the continuous modular
transformation of unitary $c\geq1$ Virasoro characters~\cite{Lacki:1990jb}, but
did not determine the bounded joint commutant relevant here. In the Virasoro-TQFT
setting, the diagonal continuum condensate has been constructed, while the
classification of possible additional condensates was left open~\cite{2412.11486}.
Related work using the same non-rational modular data addresses Verlinde-type
identities and open--closed duality rather than commutant
rigidity~\cite{2411.07285}. Our theorem resolves the bounded genus-one
modular-pairing part of this condensate-classification problem. To our knowledge,
this topological-boundary consequence of the
scalar bounded commutant has not previously been stated.

To be clear, we emphasize that our result concerns genus-one data, not the
microscopic uniqueness of the full boundary. Distinct topological boundaries
could still share the same diagonal torus partition function and differ at
higher genus or in their (higher-)categorical data.\footnote{The possibility of
the same torus partition function but different categorical data is naturally
described using the language of extended field theory~\cite{1212.1692,
Lurie:2009keu}. Although most familiar for TQFTs, this language can also be used
schematically for CFTs: extendedness means retaining lower-codimension data and
is independent of absoluteness. Here ``absolute'' means that the partition
function on a closed 2D spacetime (codimension zero) is a number rather than a
vector supplied by the 3D bulk. In 2D, a spatial circle (codimension one) carries
a Hilbert space, whereas a point (codimension two) is assigned categorical data;
boundary conditions and line defects appear as objects, and their junctions as
morphisms~\cite{Freed:2022qnc}.} Conversely, modular invariance of $N(P,Q)$ alone
does not establish that it comes from a topological boundary.
Our conclusion is precisely that every topological boundary whose genus-one data
are modular-invariant and vacuum-normalized and either define a bounded operator
or, under the ordinary block-diagonal vacuum--continuum ansatz, are represented
by an entrywise-positive Borel measure, must pass through the same diagonal torus
pairing.

\paragraph{Organization of the paper.}
Section~\ref{sec:vtqft} reviews the Virasoro TQFT and its relation to 3D gravity.
Section~\ref{sec:toolkit} collects the operator-theoretic preliminaries used
in the rest of the paper, phrased in the language of quantum mechanics.
Section~\ref{sec:setup} formulates the genus-one modular problem, and
Section~\ref{sec:results} presents the rigidity theorem,
its elementary realizations, the positivity extension and the numerical checks.
Section~\ref{sec:open} discusses the implications for ensemble holography and
the remaining directions. The detailed proofs are collected in the appendices.

\paragraph{Note added.}
The first version overstated the physical implications of the genus-one result,
as pointed out by Anatoly Dymarsky. The present version restricts the
interpretation to the bounded and ordinary positive vacuum-marginal classes
studied here. The mathematical results and proofs are unchanged.

\section{Virasoro TQFT and 3D gravity}\label{sec:vtqft}

In this section, we give a lightning review of Virasoro TQFT and its connection
to 3D gravity. The starting point is familiar from 2D CFT. The OPE and Virasoro Ward identities
expand a correlator in products of holomorphic and antiholomorphic conformal
blocks. Once the central charge, the external and intermediate representation
labels, the moduli of the punctured surface and a channel are specified, the blocks are fixed
kinematical functions. The spectrum, OPE coefficients and left--right pairing
determine how these functions are combined in a particular CFT. The Virasoro
TQFT packages the universal chiral part into a 3D theory; it does not by itself
choose a 2D CFT~\cite{2304.13650}. This distinction is useful below, because our
result constrains possible left--right pairings without attempting to
reconstruct all CFT correlators.

We work at
\begin{equation}\label{eq:cb}
c=1+6Q_L^2,\qquad Q_L=b+b^{-1},\qquad b>0,
\end{equation}
so $c\ge 25$. The ordinary nondegenerate Virasoro representations form a
continuum labeled by a real momentum $P>0$, with conformal weight
$h_P=Q_L^2/4+P^2$. There are also degenerate representations $L_{m,n}$ at the
Kac weights. In particular, the vacuum is a degenerate module at $h=0$ and does
not belong to the real-$P$ continuum~\cite{hep-th/0104158,2304.13650}. This last
point will matter later: the bounded commutant problem is first posed on the
ordinary continuum, while the vacuum returns through the normalization
conditions imposed on physical boundary data.

To see where the 3D structure enters, cut a Riemann surface into pairs of pants.
A basis of Virasoro conformal blocks is obtained by assigning representations to
the internal circles. A different cutting gives a different basis, related to
the first one by fusion and braiding kernels. For irrational Virasoro symmetry,
the change of channel is governed by the Ponsot--Teschner fusion
kernel~\cite{math/0007097,hep-th/0104158}. In the Racah--Wigner normalization,
the fusion kernel is the Virasoro $6j$ symbol~\cite{2411.07285}. The
Moore--Seiberg consistency relations make these changes of basis compatible
with sewing~\cite{Moore:1988qv}. Together with the inner product supplied by the
quantization of Teichm\"uller space, these data provide the non-rational
modular-functor input to the Virasoro TQFT construction~\cite{2304.13650}. In this
construction a Riemann surface is assigned a space of conformal blocks, while a
3D cobordism acts by the corresponding linear operator.

This construction is not a rational TQFT in disguise. Its state spaces are
infinite-dimensional, its ordinary line labels $L_P$ form a continuum, and
fusion is implemented by an integral transform rather than a finite sum. There
is no finite semisimple modular tensor category of the familiar RCFT
kind~\cite{2304.13650,2412.11486}. Nevertheless, the operations that matter here
still exist: cutting and gluing compose operators, mapping classes act on the
block spaces, and line networks are transformed by fusion and braiding kernels.
The resulting state space is a space of conformal blocks, not the physical
Hilbert space of a particular 2D CFT. Likewise, ``topological'' describes the 3D
organizing theory, not the 2D CFT whose blocks appear.

The torus gives the simplest example. The nondegenerate character is
\begin{equation}
\chi_P(\tau)=\frac{q^{P^2}}{\eta(\tau)},\qquad q=e^{2\pi i\tau}.
\end{equation}
Under $T$ it acquires a phase determined by
$h_P-c/24=P^2-1/24$, while $S$ mixes the continuum through a cosine transform;
the precise kernels are given in \eqref{eq:ST}~\cite{Lacki:1990jb}. The vacuum
character is instead the difference of two analytically continued Verma
characters at imaginary momenta, with the subtraction removing the vacuum null
descendant. Its $S$-transform defines the strictly positive
vacuum-to-continuum kernel $\rho_0(P)$ in \eqref{eq:rho0}, also known as the
Cardy/Plancherel density. Unlike the nondegenerate modular kernels, this vacuum
row depends on $b$~\cite{2304.13650}.

The proposed relation to 3D gravity gives this construction its holographic
interest. Brown--Henneaux boundary conditions produce left- and right-moving
Virasoro symmetries~\cite{Brown:1986nw}. A prescription for constructing the
gravity partition function on a fixed hyperbolic topology from the quantization
of Teichm\"uller space and the gluing and surgery rules of the Virasoro TQFT was
proposed in~\cite{2304.13650} and subsequently developed and tested on wormholes
and knot complements~\cite{2401.13900}. In bottom-up terms,
Virasoro symmetry controls the universal boundary kinematics, while the TQFT
organizes changes of channel and the cutting and gluing of the 3D manifold.

At the same time, the Virasoro TQFT should not be confused with a complete
definition of the 3D gravity path integral. The latter must additionally specify
the saddles and topologies of 3D manifolds to include, the quotient by large
diffeomorphisms, the weights of different contributions, and a nonperturbative
completion. The TQFT supplies the fixed-hyperbolic-topology building blocks and
their gluing rules; a sum over topologies and a prescription for off-shell or
non-hyperbolic geometries require further
input~\cite{2304.13650,2407.02649}.

Multi-boundary geometries motivate an ensemble interpretation: a fixed CFT
factorizes over disconnected boundaries, whereas connected bulk wormholes
encode connected moments of CFT data~\cite{2405.13111}. At genus one, the modular
sum over known torus saddles fails to define a physical Hilbert-space
trace~\cite{0712.0155}; in spectral language its density is continuous and not
positive definite~\cite{1407.6008}. This motivates reading the gravity result as
statistical information about a family of boundary theories. Within the
diagonal-condensation setup, factorization has been checked explicitly for
representative two-boundary torus and genus-two
wormholes~\cite{2412.11486}. This ensemble interpretation is not an input to our
proof; it motivates the more concrete boundary question that we study.

A single Virasoro TQFT organizes holomorphic conformal blocks. For a non-chiral
2D CFT with $(c,\bar c=c)$, the gravity construction uses two copies, one
left-moving and one right-moving~\cite{2304.13650}. We denote this doubled 3D
theory by
\begin{equation}
\mathcal D=\mathsf{Vir}_c\boxtimes\overline{\mathsf{Vir}}_c ,
\end{equation}
whose ordinary continuum lines are pairs $L_P\boxtimes\bar L_Q$. This is the 3D
bulk whose topological boundary conditions enter the question below. Such a
topological boundary is an auxiliary boundary condition of the doubled theory,
not the asymptotic AdS$_3$ boundary on which the physical CFT lives.
Figure~\ref{fig:symtft-slab} summarizes the Virasoro analogue of the Narain
averaging mechanism proposed in~\cite{Yu:2026gdf} and makes explicit which data
are held fixed and which are averaged over.

\begin{figure}[tbp]
\centering
\resizebox{0.99\linewidth}{!}{%
\begin{tikzpicture}[
  slabedge/.style={draw=black!70, line width=0.65pt},
  slabhidden/.style={draw=black!60, densely dashed, line width=0.6pt},
  slabarrow/.style={-{Latex[length=2.4mm,width=1.6mm]},
    draw=black!75, line width=0.8pt},
  slablabel/.style={font=\small, align=center},
  slabnote/.style={font=\footnotesize, text=black!60, align=center}
]
  \fill[black!1]
    (-5.25,-0.90) -- (-5.25,1.30) -- (-4.15,2.35) -- (-4.15,0.15) -- cycle;
  \draw[slabedge]
    (-5.25,-0.90) -- (-5.25,1.30) -- (-4.15,2.35) -- (-4.15,0.15) -- cycle;
  \node[slablabel] at (-4.70,-1.55)
    {\textbf{Absolute 2D theory}\\[-1pt]
     {\footnotesize genus-one pairing $N_\alpha(P,Q)$}};

  \draw[slabarrow] (-3.65,0.72) -- (0.62,0.72)
    node[midway, above=4pt, font=\small] {decompress};

  \fill[xysecrefblue!24]
    (1.00,1.30) -- (2.35,2.65) -- (2.35,0.25) -- (1.00,-1.10) -- cycle;
  \fill[xycitered!12]
    (4.25,1.30) -- (5.60,2.65) -- (5.60,0.25) -- (4.25,-1.10) -- cycle;
  \fill[black!1]
    (1.00,1.30) -- (2.35,2.65) -- (5.60,2.65) -- (4.25,1.30) -- cycle;
  \fill[white] (1.00,-1.10) rectangle (4.25,1.30);

  \draw[slabedge] (1.00,-1.10) rectangle (4.25,1.30);
  \draw[slabedge] (1.00,1.30) -- (2.35,2.65) -- (5.60,2.65) -- (4.25,1.30);
  \draw[slabedge] (1.00,-1.10) -- (2.35,0.25);
  \draw[slabedge] (4.25,-1.10) -- (5.60,0.25) -- (5.60,2.65);
  \draw[slabhidden] (2.35,2.65) -- (2.35,0.25) -- (5.60,0.25);

  \node[slablabel, anchor=south] at (1.45,2.94)
    {{\footnotesize averaged over}\\[-1pt]
     $\langle\mathcal B_\alpha|$\\[-1pt]
     {\footnotesize auxiliary topological boundary}};
  \node[slablabel, anchor=west] at (5.72,1.22)
    {$|\mathrm{Phys}\rangle$\\[-1pt]
     {\footnotesize physical boundary}\\[-1pt]
     {\footnotesize\color{black!60} held fixed}};
  \node[slablabel, anchor=north] at (3.30,-1.34)
    {\textbf{Doubled Virasoro TQFT}\\[-1pt]
     $\mathcal D=\mathsf{Vir}_c\boxtimes\overline{\mathsf{Vir}}_c$\\[-1pt]
     {\footnotesize\color{black!60} held fixed}};
\end{tikzpicture}
}
\caption{Decompression of an absolute 2D theory into the doubled Virasoro TQFT
on an interval.  In the proposed Virasoro analogue of the Narain averaging
mechanism of~\cite{Yu:2026gdf}, the ensemble average would hold the physical
boundary $|\mathrm{Phys}\rangle$ and the bulk theory $\mathcal D$ fixed while
averaging over the auxiliary topological boundaries
$\langle\mathcal B_\alpha|$.  Each choice of $\alpha$ would produce an absolute
theory and, at genus one, a pairing
$N_\alpha(P,Q)$ with torus partition function $Z_{N_\alpha}$.  Modular
invariance of this pairing is necessary for the full topological boundary
condition, but does not reconstruct it.}
\label{fig:symtft-slab}
\end{figure}

After the
operator-theoretic toolkit of Section~\ref{sec:toolkit},
Section~\ref{sec:setup} formulates its genus-one data as a left--right pairing
and explains why modular invariance is necessary but not sufficient for a full
boundary condition.

\section{Operator-theoretic preliminaries}\label{sec:toolkit}

The bounded genus-one obstruction studied below uses three tools from operator
theory. Throughout this section, the $L^2$ spaces are auxiliary spaces of
Virasoro representation labels, not physical CFT Hilbert spaces. The commutant
and direct-integral dictionary solves the $T$-phase condition in
Section~\ref{sec:levelmatch}. Distributional kernels and the Schur test control
the explicit branch operators of Sections~\ref{sec:hardedge}--\ref{sec:E} and
the positive measures of Section~\ref{sec:vacuum}, respectively. Finally, the
even Weil representation and lattice
restriction give the full bounded commutant in Section~\ref{sec:layer2a}.
Each tool is the continuum version of a fact familiar from quantum mechanics or
linear algebra. A reader fluent in operator theory may use this section as a
reference and proceed directly to Section~\ref{sec:setup}; the only substantial
external input is the lattice-restriction theorem described below.

\subsection{Commutants, kernels and the Schur test}\label{sec:prelim-comm}

For a set $\mathcal S$ of bounded operators, the \emph{commutant}
$\Comm(\mathcal S)$ is the set of all bounded operators commuting with every
element of $\mathcal S$. The rigidity statements of this paper are statements
about commutants: the smaller the commutant of the modular data, the fewer
invariant pairings exist.

The model case is diagonal. For $D=\operatorname{diag}(d_1,d_2,\dots)$ acting
on $\ell^2$, the space of square-summable sequences,
\begin{equation}
[N,D]_{ij}=(d_j-d_i)\,N_{ij},
\end{equation}
so $[N,D]=0$ forces $N_{ij}=0$ whenever $d_i\neq d_j$, while entries within
one eigenvalue block are unconstrained. The commutant of a diagonal matrix
therefore consists of exactly the block-diagonal matrices over its eigenvalue
blocks.

A \emph{multiplication operator} is the continuum version of a diagonal
matrix: $(Mf)(x)=m(x)f(x)$ on $L^2$, for a fixed measurable function $m$. It
is bounded when $m$ is essentially bounded, and its spectrum is the essential
range of $m$~\cite{ReedSimon:1980}. The finite-dimensional eigenvalue blocks
are replaced by spectral fibers. Literal point level sets may have measure
zero, so the invariant statement is phrased through the spectral projections
of $M$: a commuting operator preserves the corresponding spectral
decomposition and does not connect distinct spectral values
~\cite{ReedSimon:1980,Dixmier:1981}.

When the spectral fibers are labeled by a continuous parameter $\theta$,
``block diagonal with an arbitrary matrix in each block'' is written as a
\emph{direct integral}~\cite{Dixmier:1981},
\begin{equation}
\mathcal H=\int^{\oplus}\mathcal H_\theta\,d\theta,\qquad
N=\int^{\oplus}N(\theta)\,d\theta,\qquad
\|N\|=\operatorname*{ess\,sup}_\theta\,\|N(\theta)\|.
\end{equation}
Concretely, $N$ is one operator $N(\theta)$ for each $\theta$; the field
$\theta\mapsto N(\theta)$ is measurable, and boundedness means one common norm
bound for almost every $\theta$. In our application all fibers are identified
with the same $\ell^2$ space, so measurability can be read entry by entry in its
standard basis. Nothing beyond this definition is used. The modular $T$
transformation is a multiplication operator, its spectral fibers are the
fixed-$T$-phase sectors of Section~\ref{sec:levelmatch}, and Eq.~\eqref{eq:Tfibres}
is the corresponding direct-integral solution of level matching.

We write operators as integral kernels,
$(Nf)(P)=\int dQ\,N(P,Q)f(Q)$, and allow singular kernels such as the
diagonal pairing $\delta(P-Q)$. Two clarifications make this safe. First, an
identity between such kernels is read in the sense of
distributions~\cite{Hormander:1990}: both sides are paired with smooth test
functions of compact support in both variables, and the identity means that
all pairings agree. This is the sense in which delta-supported kernels solve
kernel equations below. Second, a kernel need not define a bounded operator
on $L^2$, and boundedness is the analytically decisive property throughout
this paper; it must either be assumed or derived.

The workhorse criterion for deriving it is the \emph{Schur
test}~\cite{HalmosSunder:1978}. Let $K(u,v)\ge0$ be a nonnegative kernel and
let $w>0$ be a positive weight controlling its rows and columns:
\begin{equation}\label{eq:schurtest}
\int K(u,v)\,w(v)\,dv\le A\,w(u),\qquad
\int w(u)\,K(u,v)\,du\le B\,w(v).
\end{equation}
Then $K$ defines a bounded operator with $\|K\|\le\sqrt{AB}$. The proof is a
weighted Cauchy--Schwarz inequality: splitting
$K=\bigl(K\,\tfrac{w(v)}{w(u)}\bigr)^{1/2}\bigl(K\,\tfrac{w(u)}{w(v)}\bigr)^{1/2}$,
\begin{equation}
\begin{aligned}
|\langle g,Kf\rangle|
&\le\iint|g(u)|\,K(u,v)\,|f(v)|\,du\,dv\\
&\le\Bigl(\iint|g(u)|^{2}\,K(u,v)\,\tfrac{w(v)}{w(u)}\,du\,dv\Bigr)^{1/2}
\Bigl(\iint|f(v)|^{2}\,K(u,v)\,\tfrac{w(u)}{w(v)}\,du\,dv\Bigr)^{1/2}\\
&\le\sqrt{AB}\,\|g\|_2\,\|f\|_2.
\end{aligned}
\end{equation}
This display is the density version of the Schur test. The physical positive
kernel may instead be a singular Borel measure. Proposition~\ref{prop:schur}
and Appendix~\ref{app:sector} give the measure version by applying the same
Cauchy--Schwarz argument directly with respect to $N(du,dv)$. Under the
ordinary block-diagonal vacuum--continuum ansatz, the two vacuum marginals give
$A=B=1$ with the vacuum row $r$ as weight. Positivity and both marginals then
give boundedness; diagonality still requires modular commutation.

\subsection{The Weil representation and lattice restriction}
\label{sec:prelim-weil}

The metaplectic group $Mp(2,\R)$ is the double cover of $SL(2,\R)$. It
carries a distinguished unitary representation $\omega$ on $L^2(\R)$, the
\emph{Weil}, or oscillator,
representation~\cite{Weil:1964,LionVergne:1980,Folland:1989,HoweTan:1992}.
Its action is generated by three familiar operations. Denote the shear, the
dilation and the inversion by
\begin{equation}
n(t)=\begin{pmatrix}1&t\\0&1\end{pmatrix},\qquad
a(s)=\begin{pmatrix}e^{s}&0\\0&e^{-s}\end{pmatrix},\qquad
w=\begin{pmatrix}0&-1\\1&0\end{pmatrix}.
\end{equation}
For chosen lifts of these elements, one has, up to phases,
\begin{equation}\label{eq:weilaction}
(\omega(n(t))f)(x)=e^{i\pi tx^2}f(x),\qquad
(\omega(a(s))f)(x)=e^{s/2}f(e^{s}x),\qquad
\omega(w)f=\hat f,
\end{equation}
respectively, where $\hat f(x)=\int e^{-2\pi ixy}f(y)\,dy$ is the Fourier
transform. These are the unitary implementations of elementary linear
canonical transformations; the double cover records their metaplectic sign.
Under $x\to-x$ the representation splits into an even and an odd part, and
each part is irreducible~\cite{Folland:1989,HoweTan:1992}. The phases left
undetermined in Eq.~\eqref{eq:weilaction} are central: they multiply the
identity and therefore drop out of every commutator.

In this paper the modular pair lives in the even part:
Section~\ref{sec:layer2a} identifies $T$, up to a constant phase, with the
shear at $t=1$, and the transformed $S$ with the inversion, written there as
the sans-serif matrices $\sfT$ and $\sfS$.

A unitary representation is \emph{irreducible} if its only closed invariant
subspaces are the zero space and the whole space. Schur's lemma then takes
the same form as in finite group theory: a bounded operator commuting with
every operator of an irreducible unitary representation is a multiple of the
identity~\cite{Dixmier:1981,Folland:1989}. This is the step that converts
irreducibility into a commutant statement.

The nontrivial question is whether irreducibility survives restriction to a
discrete subgroup. A \emph{lattice} $\Gamma<G$ is a discrete subgroup of
finite covolume, such as $SL(2,\Z)<SL(2,\R)$~\cite{BHV:2008}. A matrix
coefficient of a representation $\pi$ is the function
$g\mapsto\langle\pi(g)v,v'\rangle$, and $\pi$ is \emph{square-integrable} (a
discrete-series representation) if its matrix coefficients are
square-integrable over $G$~\cite{Knapp:1986}. The theorem of Cowling and
Steger, in Bekka's formulation, states: for $G$ connected simple with
finite center, $\Gamma<G$ a lattice, and $\pi$ irreducible and not
square-integrable, the restriction $\pi|_\Gamma$ is
irreducible~\cite{CowlingSteger:1991,Bekka:lattices}. We use this theorem as
a black box, in the way index theorems are commonly cited; it is restated as
Theorem~\ref{thm:CoS}, and the required failure of square-integrability for
the even Weil representation is a one-line Gaussian overlap computed in
Appendix~\ref{app:weilproof}.

With these tools the proof of the main theorem can be stated in one chain. The
modular pair generates the lattice preimage of $SL(2,\Z)$ inside $Mp(2,\R)$ up
to central scalars. The even Weil representation is irreducible and not
square-integrable, so lattice restriction keeps it irreducible on that
preimage; Schur's lemma then makes the joint commutant of $\tS$ and $T$ scalar.
Section~\ref{sec:layer2a} applies this chain to the Virasoro operators, and
Appendix~\ref{app:weilproof} supplies the phase analysis and proof.

\section{The genus-one modular problem}\label{sec:setup}

We now formulate the genus-one constraint on a topological boundary of the doubled
Virasoro TQFT. The problem is the continuous analog of a familiar rational-CFT
question. In a rational CFT, a matrix $N_{ij}$ records which left-moving primary
$i$ is paired with which right-moving primary $j$. Here the labels are continuous,
so the matrix becomes a kernel $N(P,Q)$ pairing left- and right-moving
nondegenerate Virasoro characters~\cite{Yu:2026gdf}. One may picture $P$ as the row
label and $Q$ as the column label of an infinite continuous matrix. The diagonal
kernel pairs $P$ only with itself. Any different torus spectrum would require
$N\neq I$, either through a nonconstant diagonal weight or through off-diagonal
support.

Modular invariance subjects this candidate matrix to two separate tests. The
$T$ transformation checks the relative phase of each proposed left--right pair;
it is the level-matching test. The $S$ transformation changes the character basis
by a cosine integral transform; it tests whether the whole pairing is unchanged
after the two cycles of the torus are exchanged. These are requirements on a
candidate $N$, not automatic consequences of writing one down. The division of
labor will be important: $T$ permits many integer-shifted non-diagonal pairs,
whereas $S$ is what eliminates their general mixing.

Section~\ref{sec:torus2comm} makes these two tests precise. We then solve the
$T$ test completely in Section~\ref{sec:levelmatch}: each fixed $T$-phase contains
an infinite fiber of integer-spaced momenta, and a $T$-invariant operator may act
by an arbitrary bounded matrix within each fiber. Section~\ref{sec:scope} fixes the analytic conventions
and explains how positivity and the vacuum enter. Section~\ref{sec:results} then
asks which of the level-matched matrices also pass the global $S$ test.

\subsection{From the torus spectrum to modular commutation}\label{sec:torus2comm}

In a rational CFT, a multiplicity matrix acts on vectors whose components are
indexed by the primary representations. Here that discrete index is replaced by the
continuous Virasoro momentum $P>0$, so the corresponding auxiliary label space is
\begin{equation}
 \mathcal H_P=L^2(\R_+,dP),\qquad
 \langle f,g\rangle_{\mathcal H_P}
 =\int_0^\infty dP\,\overline{f(P)}g(P).
\end{equation}
An element $f(P)$ is simply a square-integrable coefficient profile over the primary
labels. The measure $dP$ is the character normalization in which the modular
$S$-kernel is unitary; it is not the density of states of a particular CFT.
Likewise, $\mathcal H_P$ is only the auxiliary space on which the continuous
modular transformations act, not the physical Hilbert space of a 2D CFT. The continuum
multiplicity data can then be represented by an operator $N$ on $\mathcal H_P$.
When this operator has a generalized integral kernel $N(P,Q)$, it acts as
\begin{equation}
(Nf)(P)=\int_0^\infty dQ\,N(P,Q)f(Q)
\end{equation}
and formally gives the partition function
\begin{equation}
Z_N(\tau,\bar\tau)
=\int_0^\infty dP\,dQ\,
\bar\chi_P(\bar\tau)\,N(P,Q)\,\chi_Q(\tau).
\end{equation}
The diagonal pairing is $N(P,Q)=\delta(P-Q)$. More singular nonnegative data will
be treated later as a Borel measure\footnote{A Borel measure on
$(0,\infty)^2$ assigns a nonnegative weight, countably additively, to the sets
generated from open subsets of the label space. This formulation includes both
ordinary kernel densities $n(P,Q)\,dP\,dQ$ and singular measures supported on
lower-dimensional sets, such as the diagonal pairing. See~\cite{Bogachev:2007}.}
$N(dP,dQ)$. Such a measure initially pairs compactly supported test functions;
it need not define a bounded operator on $L^2$. Boundedness is assumed for the
main theorem and derived for positive, vacuum-normalized measures in
Section~\ref{sec:vacuum}.

The nondegenerate character and its modular transformations are~\cite{Lacki:1990jb,
2309.11540}
\begin{equation}
\chi_P(\tau)=\frac{q^{P^2}}{\eta(\tau)},\qquad q=e^{2\pi i\tau},
\end{equation}
and
\begin{equation}\label{eq:ST}
T(P)=e^{2\pi i(P^2-1/24)},\qquad
S(P,Q)=2\sqrt2\,\cos(4\pi PQ).
\end{equation}
Indeed, $h_P-c/24=P^2-1/24$. Equation~\eqref{eq:ST} is the exact modular kernel
for the unpunctured nondegenerate Virasoro characters. Although this kernel is
independent of the central-charge parameter $b$ of Eq.~\eqref{eq:cb}, the
$b$-dependence reappears below once the vacuum sector is included.

Writing the partition function schematically as
$Z_N=\langle\chi,N\chi\rangle$, we formulate invariance under the two
generators at the kernel level by the conditions
\begin{equation}
T^*NT=N,\qquad S^*NS=N,
\end{equation}
which we impose directly on the pairing operator. Since $S$ and $T$ are
unitary, these conditions are equivalent to
\begin{equation}\label{eq:commuting-condition}
[N,T]=0,\qquad [N,S]=0.
\end{equation}
Under the boundary dictionary above, these are the necessary genus-one
conditions that we test. Whenever the character integral converges, they are
also sufficient for modular invariance of $Z_N$.\footnote{The converse, from
invariance of the function $Z_N$ back to the operator equations, is never used
in this paper. In a rational CFT that step is standard: the finitely many
characters are linearly independent, so equal partition functions have equal
multiplicity matrices. The continuum analog would require that no two distinct
kernels $N$ produce the same function $Z_N$, with the vacuum character treated
separately because it lies outside the real-$P$ continuum.}

The two equations in \eqref{eq:commuting-condition} play different physical
roles. Since $T$ is diagonal in the character label, $[N,T]=0$ compares the
left and right phases entry by entry. By contrast, $S$ sends each character into
an integral over the full continuum, so $[N,S]=0$ compares all entries of $N$
after a global change of basis. Passing the first test therefore need not imply
passing the second.

The $T$ condition, $[N,T]=0$, is the level-matching condition.\footnote{For a closed
string, invariance of the torus amplitude under $\tau\to\tau+1$ requires
$L_0-\bar L_0\in\Z$, the level-matching condition~\cite{Polchinski:1998rq}. A
Virasoro character already sums descendants in integer steps, so at the level
of characters the condition reduces to equality of the two $T$-phases,
$P^2-Q^2\in\Z$.} For
a sufficiently regular kernel\footnote{``Sufficiently regular'' means that
$N(P,Q)$ is an ordinary function and that the relevant integrals converge. Equation~\eqref{eq:Tkernel}
then holds point by point, apart from a possible set of zero measure. If
the kernel contains delta functions or similar singular terms, the equation is
instead read after inserting it into an integral. For example,
$N(P,Q)=\delta(P-Q)$ satisfies the equation because the factor in parentheses
vanishes at $P=Q$.}, it becomes explicit: since $T$ acts by multiplication, the commutator
$[N,T]$ has kernel $\bigl(T(Q)-T(P)\bigr)N(P,Q)$, and dropping the overall
constant phase the condition reads
\begin{equation}\label{eq:Tkernel}
\bigl(e^{2\pi iP^2}-e^{2\pi iQ^2}\bigr)N(P,Q)=0,
\end{equation}
so its support can connect $P$ and $Q$ only when $P^2-Q^2\in\Z$. The $S$
condition, $[N,S]=0$, imposes an additional constraint. Unlike $T$, the modular $S$
transformation mixes each momentum label with the entire continuum. A
level-matched kernel must therefore also remain invariant under this mixing.

The simplest non-diagonal example separates the two tests. Consider the formal
graph kernel
\begin{equation}\label{eq:unitshift-example}
N_1(P,Q)=\delta\!\left(Q-\sqrt{P^2+1}\right).
\end{equation}
It pairs a label $P$ with a label whose squared momentum is larger by one. Hence
the two $T$-phases differ by $e^{2\pi i}=1$, and $N_1$ passes level matching.
But it fails the $S$ test for an elementary reason. If
$f\in C_c^\infty(0,1)$, then
$N_1f(P)=f(\sqrt{P^2+1})=0$, so $SN_1f=0$, whereas
\begin{equation}
(N_1Sf)(P)
=2\sqrt2\int_0^1 dQ\,
\cos\!\left(4\pi\sqrt{P^2+1}\,Q\right)f(Q)
\end{equation}
is nonzero for a suitable $f$. Thus $[N_1,S]\neq0$: the cosine transform
detects the displacement that $T$ cannot see. The kernel in
\eqref{eq:unitshift-example} is only a test candidate, not an asserted boundary
condition. We do not claim that it is a bounded or measure-preserving graph
operator, so it is not an example under the hypotheses of the graph-unitary
theorem below. Its purpose is only to show why both modular generators are needed. The
diagonal case, obtained by replacing the shift $1$ by $0$, is the identity
operator and passes both tests.

\begin{remark}[genus one only]\label{lem:Zinv}
A kernel passing the genus-one test need not extend to a topological boundary:
the full boundary requires further algebraic data and consistency conditions,
including a Frobenius-algebra structure and the Cardy and sewing
conditions~\cite{hep-th/0204148}. Note that the theorems below can only
exclude non-diagonal candidates at genus one, and never certify a topological
boundary.
\end{remark}

\subsection{Solving the level-matching condition}\label{sec:levelmatch}

\paragraph{The phase-fiber picture.}
Set $u:=P^2$, the representation-label contribution to
$h_P=(c-1)/24+P^2$ rather than a separate physical energy. Writing
$u=k+\theta$, with $k\in\mathbb N_0$ and $0\le\theta<1$, gives
\begin{equation}\label{eq:Ttheta}
T(u)=e^{2\pi i(u-1/24)}
=e^{2\pi i(\theta-1/24)}.
\end{equation}
Thus $T$ sees the fractional part $\theta=u\bmod1$ and forgets the integer
$k$. For example,
\begin{equation}
T(0.3)=T(1.3)=T(2.3)=\cdots,
\qquad T(0.4)\neq T(0.3).
\end{equation}
The fixed-$T$-phase fiber of $u\mapsto e^{2\pi iu}$ is
$\mathcal F_\theta:=\{\theta,1+\theta,2+\theta,\ldots\}$; its equally spaced
labels form a semi-infinite sequence. The $T$ condition alone may mix $0.3$ with
$2.3$, but not $0.3$ with $0.4$. For a kernel, Eq.~\eqref{eq:Tkernel} becomes
$u-v\in\Z$, which is exactly the same-fiber condition.
The answer to level matching is therefore
\begin{equation}\label{eq:Tfiber-summary}
\boxed{\begin{gathered}
{}[N,T]=0:\quad \text{different phase fibers cannot mix;}\\
\text{within one fiber any bounded mixing is allowed.}
\end{gathered}}
\end{equation}
This is much weaker than diagonality. It is simply the continuous version of a
diagonal matrix having degenerate eigenvalue blocks.

\paragraph{Making the change of variables unitary.}
We now put this elementary label-space picture into the exact Hilbert-space
normalization. This step adds no new physical condition; it only changes from
the coordinate $P$ to $u=P^2$ without changing norms. Since the measure becomes
$dP=du/(2\sqrt u)$, preserving the $L^2$ norm requires the half-Jacobian
factor $(dP/du)^{1/2}=1/(\sqrt2\,u^{1/4})$ on wavefunctions, just as for the
reduced radial wavefunction in quantum mechanics:
\begin{equation}
\int_0^\infty|f(P)|^2\,dP
=\int_0^\infty\biggl|\frac{f(\sqrt u)}{\sqrt2\,u^{1/4}}\biggr|^2\,du.
\end{equation}
The resulting unitary change of variables and its inverse are
\begin{equation}
U:\mathcal H_P\longrightarrow\mathcal H_u:=L^2((0,\infty),du),\qquad
(Uf)(u)=\frac{f(\sqrt u)}{\sqrt2\,u^{1/4}},\qquad
(U^{-1}g)(P)=\sqrt2\,\sqrt P\,g(P^2).
\end{equation}
Conjugating $T$ confirms Eq.~\eqref{eq:Ttheta} at the operator level:
\begin{equation}
(UTU^{-1}g)(u)=e^{2\pi i(u-1/24)}\,g(u).
\end{equation}
The half-Jacobian factors cancel between $U$ and $U^{-1}$, so conjugation only
replaces $P^2$ by $u$.

\paragraph{The complete bounded-operator solution.}
A wavefunction restricted to the fiber $\mathcal F_\theta$ is the sequence
$f_\theta=(f(\theta+k))_{k\ge0}\in\ell^2(\mathbb N_0)$, one for each $\theta$,
and $T$ acts on the whole sequence as the single scalar
$e^{2\pi i(\theta-1/24)}$. Consequently every bounded matrix $N(\theta)$ acting
on that fiber commutes with $T$, whereas matrix elements between different
fibers do not. This is the complete solution for arbitrary bounded
operators, including those without pointwise kernels. In direct-integral
notation, with the unitary map $(Wf)(\theta)=f_\theta$,
\begin{equation}\label{eq:Tfibres}
\mathcal H_u\cong\int_{[0,1)}^{\oplus}\ell^2(\mathbb N_0)\,d\theta,
\qquad
\Comm(T)\cong\int_{[0,1)}^{\oplus}\mathcal B(\ell^2(\mathbb N_0))\,d\theta,
\end{equation}
where $\mathcal B(\ell^2)$ denotes the bounded operators on one
fiber.\footnote{The field $\theta\mapsto N(\theta)$ is measurable, and
boundedness means $\operatorname*{ess\,sup}_\theta\|N(\theta)\|<\infty$.}
Equation~\eqref{eq:Tfibres} says exactly what the phase-fiber picture already
said: $\theta$ labels the fibers, $\ell^2(\mathbb N_0)$ records amplitudes on
their integer-spaced components, and $\mathcal B(\ell^2)$ is the arbitrary
bounded mixing allowed inside one fiber. The direct integral is the rigorous notation for that picture, not
an additional physical constraint.

\paragraph{A convenient shift subclass.}
At this point the level-matching problem is completely solved. For the explicit
regular classes studied later, it is convenient to expand the action inside a
fiber in integer shifts. This expansion is only a concrete subclass of the
full bounded solution~\eqref{eq:Tfibres}; it does not exhaust $\Comm(T)$. Let
\begin{equation}\label{eq:tau}
(\tau_nf)(u)=\mathbf 1_{\{u+n>0\}}\,f(u+n),\qquad n\in\Z,\qquad \tau_0=I.
\end{equation}
The indicator truncates the shift at the endpoint $u=0$. This endpoint is the
``hard edge'' used in Section~\ref{sec:hardedge}. Each $\tau_n$ commutes with $T$.
Writing $D_m$ for multiplication by $m(u)$, the explicit branch operators studied
in Sections~\ref{sec:hardedge} and~\ref{sec:E} take the form
\begin{equation}\label{eq:Nform}
\begin{aligned}
N&=D_m+\sum_{n\neq0}D_{a_n}\tau_n,\\
(D_{a_n}\tau_nf)(u)&=a_n(u)\,\mathbf 1_{\{u+n>0\}}\,f(u+n).
\end{aligned}
\end{equation}
When the branch set is finite, the sum in \eqref{eq:Nform} is a finite sum and
no convergence question arises; Section~\ref{sec:E} specifies the regularity and
convergence conditions under which an infinite series is admitted. As a kernel,
\begin{equation}
N(u,v)=m(u)\delta(u-v)+\sum_{n\neq0}a_n(u)\delta(v-u-n).
\end{equation}
Here $m$ is the coefficient on the diagonal branch and $a_n$ is the coefficient on
the shifted branch $v=u+n$.

Concretely, within one fixed-$T$-phase fiber the operator
$D_{a_n}\tau_n$ populates exactly the $n$-th diagonal of the fiber matrix
$N(\theta)$: for $u=\theta+k$ one has
$(D_{a_n}\tau_nf)(u)=a_n(\theta+k)\,f_\theta(k+n)$, so the only populated
entries are $(N(\theta))_{k,k+n}=a_n(\theta+k)$, on the active domain
$\theta+k+n>0$. Expanding $N$ in the shifts $\tau_n$ is therefore the same
as expanding a matrix in its diagonals. For a general bounded matrix such an
expansion need not converge in operator norm, so \eqref{eq:Nform} is a concrete
regular subclass of $\Comm(T)$, not a parametrization of the full commutant.

To impose the second modular condition on these level-matched operators, let
us now conjugate $S$ by the same map $U$. Substituting $v=Q^2$ and
$dQ=dv/(2\sqrt v)$, we find
\begin{equation}
\begin{aligned}
(USU^{-1}g)(u)
&=\frac{1}{\sqrt2\,u^{1/4}}
\int_0^\infty dQ\;2\sqrt2\,\cos(4\pi\sqrt u\,Q)\,\sqrt2\,\sqrt Q\,g(Q^2)\\
&=\int_0^\infty dv\;\sqrt2\,\frac{\cos(4\pi\sqrt{uv})}{(uv)^{1/4}}\,g(v),
\end{aligned}
\end{equation}
so the transformed operator $\tS:=USU^{-1}$ has kernel
\begin{equation}\label{eq:Stilde}
\tS(u,v)=\sqrt2\,\frac{\cos(4\pi\sqrt{uv})}{(uv)^{1/4}}.
\end{equation}
The factor $(uv)^{-1/4}$ is simply the pair of half-Jacobian factors of $U$
acting on the two arguments of the kernel.\footnote{A positive diagonal change
of normalization passes between the character and Plancherel conventions; see
Remark~\ref{rem:D}.} With the normalization in Eq.~\eqref{eq:Stilde}, $\tS$ is
the standard half-line cosine transform, hence a real symmetric unitary
involution: $\tS=\tS^{*}=\tS^{-1}$ and $\tS^2=I$.

The contrast with $T$ is now explicit. On a fixed $\theta$-fiber, $T$ is only
the scalar $e^{2\pi i\theta}$ and cannot see how the fiber entries are mixed.
The operator $\tS$, however, has a nonzero oscillatory kernel connecting the full
half-line. The remaining problem is therefore: which matrices acting inside the
$T$-phase fibers remain unchanged under this global cosine transform?

Because the shift class does not exhaust $\Comm(T)$, we present two kinds of
rigidity results: the elementary theorems of Sections~\ref{sec:hardedge}
and~\ref{sec:E} treat the class~\eqref{eq:Nform} directly, while the
full-commutant theorem of Section~\ref{sec:layer2a} covers all bounded
operators at once by a different, global method.

\subsection{What exactly is being constrained?}\label{sec:scope}

We use three analytic realizations of the same continuous multiplicity matrix.
The headline theorem treats $N$ as an arbitrary bounded operator, without a
pointwise kernel.  The elementary branch theorems use the explicit
delta-supported form~\eqref{eq:Nform}, while the positivity result starts from
a possibly singular measure on $(P,Q)$.  Kernel and distribution conventions
are given in Section~\ref{sec:prelim-comm}; the local realization of the tame
infinite branch sum is proved in Appendix~\ref{app:proofs}.

For a kernel or measure, \emph{entrywise positive} means a nonnegative Borel
measure on $(0,\infty)^2$: every region of label space receives nonnegative
multiplicity.  It is not the operator condition
$\langle f,Nf\rangle\geq0$; only entrywise positivity is used here.

The vacuum requires separate care because it is not one of the real-$P$
continuum states.  Its null descendant makes the vacuum character a difference
of two analytically continued Verma characters.  Applying $S$ gives the
strictly positive vacuum-to-continuum row~\cite{2309.11540,2304.13650}
\begin{equation}\label{eq:rho0}
\rho_0(P)=4\sqrt2\,\sinh(2\pi bP)\sinh(2\pi P/b)>0\qquad(P>0).
\end{equation}
In the $u=P^2$ normalization this row is
\begin{equation}\label{eq:rrow}
\begin{aligned}
r(u)&=(U\rho_0)(u)=\frac{\rho_0(\sqrt u)}{\sqrt2\,u^{1/4}}\\
&=4u^{-1/4}\sinh(2\pi b\sqrt u)\sinh(2\pi\sqrt u/b)>0.
\end{aligned}
\end{equation}
It grows exponentially and therefore lies outside $\mathcal H_u$.  Thus an
arbitrary bounded operator on $\mathcal H_u$ does not automatically act on
$r$; the notation $Nr$ must be justified rather than assumed.

We use the vacuum in only two controlled ways.  First, once the bounded
commutant theorem gives $N=cI$, the equation $Nr=r$ is unambiguous and fixes
$c=1$.  Second, under the ordinary block-diagonal ansatz consisting of one
vacuum block of multiplicity one and the continuum block, with no extra
discrete or degenerate sectors, full
modular invariance gives the two measure marginals
\begin{equation}\label{eq:vacuum-marginals-scope}
Nr=r,\qquad rN=r.
\end{equation}
For an entrywise-positive measure, the weighted Schur test of
Proposition~\ref{prop:schur} makes $N$ a bounded contraction.  The marginals
alone do not imply diagonality; that still requires modular commutation.  The
origin and scope of the block ansatz are recorded in
Remark~\ref{rem:crossblock}.

\begin{definition}[bounded genus-one pairing]\label{def:shadow}
A \emph{bounded genus-one pairing} is a bounded operator $N$ on
$L^2(\R_+,dP)$, equivalently its unitary image on $L^2((0,\infty),du)$,
satisfying $[N,T]=[N,S]=0$ in the first realization, equivalently
$[N,T]=[N,\tS]=0$ in the second. It is \emph{vacuum-normalized} if in addition
$Nr=r$ and $rN=r$, whenever these equations are meaningful.
\end{definition}

The boundary interpretation remains one-way: any topological boundary whose
genus-one pairing lies in this bounded class must produce such a pairing, but a
modular-invariant pairing need not contain the Frobenius, Cardy or sewing data
of a boundary.  The next section
proves that every bounded genus-one pairing is nevertheless scalar, $N=cI$;
vacuum normalization then selects the diagonal pairing $N=I$.

\section{Rigidity of the bounded genus-one pairing}\label{sec:results}

This section asks one question in increasingly concrete forms. After $T$ has
imposed level matching but left an arbitrary matrix inside each fixed-phase
fiber, can any non-diagonal mixing also survive $S$? We first answer this
globally for every bounded pairing. We then examine three tempting escape
routes, namely permutations, finitely many integer-shift branches and a summable
infinite tail, to make the rigidity mechanisms visible. Finally, we return to
the physical positivity condition and show that the two vacuum marginals make a
positive measure kernel bounded automatically.

The two logical chains used for the headline no-go are
\begin{equation}\label{eq:two-rigidity-routes}
\begin{aligned}
&N\ \text{bounded},\quad [N,S]=[N,T]=0
   &&\Longrightarrow\quad N=cI
   &&\overset{\text{vacuum}}{\Longrightarrow}\quad N=I,\\
&N\ \text{a positive measure},\quad \text{two vacuum marginals}
   &&\Longrightarrow\quad \|N\|\le1,
   &&\overset{[N,S]=[N,T]=0}{\Longrightarrow}\quad N=I.
\end{aligned}
\end{equation}
Thus the second line uses modular invariance only in its final implication. For the
no-go itself, only Theorem~\ref{thm:layer2a}, its vacuum-normalization remark and
Corollary~\ref{cor:vacpos} are needed. The graph, hard-edge and tame theorems are
independent explanations of how non-diagonal candidates fail in explicit
regular sectors. Complete proofs are in Appendices~\ref{app:weilproof},
\ref{app:sector} and~\ref{app:proofs}.

\subsection{The full bounded commutant}\label{sec:layer2a}

At the $T$-only level there is plenty of room: a bounded operator may mix the
labels $u=\theta+k$ arbitrarily within each fixed-phase fiber. The unit shift
$\tau_1$ is the simplest example. The question is whether several such mixings
could conspire to remain unchanged under the global $S$ transformation. The
answer is no, not merely for the shift class but for every bounded operator.

\begin{theorem}[full bounded commutant]\label{thm:layer2a}
In the symmetrized half-line realization, equivalently in the even Schrödinger
realization of the Weil representation,
\begin{equation}
\Comm(\tS,T)=\C\cdot I:
\end{equation}
the only bounded operator commuting with both $\tS$ and $T$ is scalar.
\end{theorem}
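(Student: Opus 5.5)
The proof identifies $(\tS,T)$ with the even Weil representation restricted to the metaplectic preimage of $SL(2,\Z)$, and then invokes Cowling--Steger and Schur's lemma. The first step is an explicit unitary $V:L^2((0,\infty),du)\to L^2_{\rm even}(\R)$. Undo the change of variables: $g(u)\mapsto f(x)$ with $u=x^2/2$ (or a similar scaling) and the half-Jacobian absorbed. Choose the scaling so that $T$ becomes multiplication by $e^{i\pi x^2}$ up to the constant phase $e^{-2\pi i/24}$, i.e.\ $\omega(n(1))$ up to a scalar. In the same variable, check that the kernel \eqref{eq:Stilde} becomes the even Fourier transform, i.e.\ $\omega(w)$ restricted to even functions, again up to a central phase. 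This is a direct substitution: $\cos(4\pi\sqrt{uv})$ with $u=x^2/2$, $v=y^2/2$ gives $\cos(2\pi xy)$, and on even $f$ one has $\hat f(x)=2\int_0^\infty\cos(2\pi xy)f(y)\,dy$. The constants $\sqrt2$ and $(uv)^{-1/4}$ should match the Jacobian factors exactly. I would verify this normalization carefully, since it is the only computation.

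Next, let $\Gamma\subset Mp(2,\R)$ be the full preimage of $SL(2,\Z)$. It is a lattice in $Mp(2,\R)$ because the covering has finite (order two) kernel. Since $n(1)$ and $w$ generate $SL(2,\Z)$, their lifts together with the central element generate $\Gamma$. Central elements act by scalars in $\omega$. So $\omega(\Gamma)$ is contained in the group generated by $V T V^{-1}$, $V\tS V^{-1}$ and the scalars, and
\[
\Comm(\tS,T)\cong\Comm\bigl(\omega_{\rm even}(\Gamma)\bigr).
\]
The phases undetermined in \eqref{eq:weilaction} drop out of commutators, as noted in Section~\ref{sec:prelim-weil}.

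The Cowling--Steger input needs care. $Mp(2,\R)$ is connected and simple with finite center as a Lie group, though it is not linear, and I would check that the cited formulation covers it; Bekka's version is stated for such groups. Irreducibility of $\omega_{\rm even}$ is classical. It remains to show that $\omega_{\rm even}$ is not square-integrable. Take the even Gaussian $\phi(x)=e^{-\pi x^2}$. Using $Mp=KAK$, its matrix coefficient along $a(s)$ is
\[
\langle\omega(a(s))\phi,\phi\rangle=(\cosh s)^{-1/2},
\]
which decays like $e^{-|s|/2}$. The Haar measure in $KAK$ coordinates carries the weight $\sinh(2|s|)\sim e^{2|s|}$. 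The squared coefficient integrates $e^{-|s|}e^{2|s|}$, which diverges, so $\phi$'s coefficient is not in $L^2$. For an irreducible representation, square-integrability of one nonzero coefficient is equivalent to that of all, so $\omega_{\rm even}$ is not discrete series. Cowling--Steger then gives that $\omega_{\rm even}|_\Gamma$ is irreducible, and Schur's lemma gives $\Comm=\C I$.

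The main obstacle is the applicability of the black-box theorem, together with the bookkeeping of the metaplectic lift. One must make sure that the generated group really is the lattice preimage and not an index-two subgroup missing a central element (which would be harmless anyway, being finite index, but I would note this). One must also confirm that the hypotheses on $G$ cover the nonlinear double cover. If that caused trouble, the fallback is to note that $\omega_{\rm even}$ is a genuinely projective representation of $SL(2,\R)$ and to apply the projective form of the lattice-restriction theorem.
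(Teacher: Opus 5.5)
Your proposal is correct and follows the paper's proof essentially step for step: the substitution $x=\sqrt2\,P$ (your $u=x^2/2$) turns $T$ and $\tS$ into the metaplectic shear and the even Fourier transform up to central phases, the Gaussian coefficient $(\cosh s)^{-1/2}$ against the $\sim e^{2s}$ Cartan Haar density rules out square-integrability, and Cowling--Steger in Bekka's form plus Schur's lemma on the lattice preimage gives the scalar commutant; Bekka's form applies because $Mp(2,\R)$ is connected with simple Lie algebra and finite center $\Z/4$, so your projective fallback is not needed. One step you should state explicitly: the modulus of the Gaussian coefficient is bi-$K$-invariant because the Gaussian spans a one-dimensional $K$-type, and this is what reduces the Haar integral to the $A^+$ direction (and with the unnormalized $e^{-\pi x^2}$ the coefficient is $(2\cosh s)^{-1/2}$, a harmless constant).
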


In physics language, there is no bounded deformation of the left--right
multiplicity matrix that preserves both modular transformations. The theorem
does not say that all boundary conditions are identical. It says that every
bounded genus-one pairing reduces to a scalar operator.

\paragraph{Applying the toolkit.}
Let us set $x=\sqrt2\,P$ and identify $L^2(\R_+,dP)$ unitarily with the even
subspace of $L^2(\R,dx)$. Up to the constant Virasoro phase, $T$ becomes
multiplication by $e^{i\pi x^2}$, while $S$, equivalently $\tS$ in the
$u$-coordinate, becomes the even Fourier transform with half-line kernel
$2\cos(2\pi xy)$. Thus the Virasoro operators are the images, up to scalar
phases, of the two modular generators reviewed in
Section~\ref{sec:prelim-weil},
\begin{equation}
\sfT=\begin{pmatrix}1&1\\0&1\end{pmatrix},\qquad
\sfS=\begin{pmatrix}0&-1\\1&0\end{pmatrix}.
\end{equation}

Their chosen lifts generate the lattice preimage of $SL(2,\Z)$ up to central
scalars. The even Weil representation is irreducible and not
square-integrable, so the lattice-restriction theorem and Schur's lemma reviewed
in Section~\ref{sec:prelim-weil} give the scalar commutant stated above.
Appendix~\ref{app:weilproof} checks non-square-integrability and gives the
complete central-phase analysis.

\begin{remark}[vacuum normalization]
With the vacuum normalization of Section~\ref{sec:vacuum}, Theorem~\ref{thm:layer2a}
fixes the unique bounded genus-one pairing as the diagonal one: any bounded
$N$ with $[N,\tS]=[N,T]=0$ is scalar, and $Nr=r$ forces $N=I$.
\end{remark}

\begin{remark}[soft theorem and complementary mechanisms]
Theorem~\ref{thm:layer2a} is a direct application of the Cowling--Steger
lattice-restriction theorem~\cite{CowlingSteger:1991,Bekka:lattices} and
Schur's lemma, rather than a new result in representation theory. The new
content is the Virasoro identification and its physical no-go interpretation. Sections~\ref{sec:hardedge}--\ref{sec:E} independently prove rigidity
explicitly and elementarily for the finite-branch and tame regular sectors, with no
representation-theoretic input: they expose the hard-edge and high-energy mechanisms
directly and also cover finite-branch non-tame coefficients such as
$\cos(\beta\sqrt u)$.
\end{remark}

\subsection{No permutation invariants}\label{sec:A}

In a rational CFT, a permutation invariant pairs each left-moving primary with one
right-moving primary, so its multiplicity matrix has exactly one nonzero entry in
each row and column~\cite{Cappelli:1987xt,math/9902064}. The continuum analog is
a measure-preserving relabelling $P\mapsto\varphi(P)$. It acts by composition,
$U_\varphi f=f\circ\varphi$, rather than by a general integral kernel. The
measure-preserving condition is essential: it is the continuum replacement for the
statement that a permutation matrix is unitary.

Could a non-diagonal pairing survive simply by relabelling the continuum? A
tempting $T$-compatible map is
\begin{equation}
P\longmapsto\sqrt{P^2+n},\qquad n\in\Z,
\end{equation}
because it changes $P^2$ by an integer and hence preserves the $T$-phase. But a
permutation invariant must also preserve the Virasoro Plancherel measure. The
weights along each fixed-$T$-phase fiber are strictly increasing, so no two
labels in that fiber can be exchanged. This restricted graph class is already
rigid under $T$; $S$ is not needed.

\begin{theorem}[graph-unitary rigidity]\label{thm:A}
On $\R_+$, use the Virasoro Plancherel/Cardy measure
\begin{equation}
d\mu(P)=\rho_0(P)\,dP.
\end{equation}
Let $\varphi$ be a measure-space automorphism of $(\R_+,d\mu)$ and
$U_\varphi f=f\circ\varphi$. If
$U_\varphi$ commutes with
$T(P)=e^{2\pi i(P^2-1/24)}$, then $\varphi(P)=P$ a.e.
\end{theorem}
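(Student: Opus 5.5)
The plan is to turn commutation with $T$ into a pointwise level-matching constraint on $\varphi$. Then I compare the Plancherel weight of a set with that of its image; strict monotonicity of the weight along each fixed-$T$-phase fiber will force the shift to vanish. The operator $S$ is not used.

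\emph{Step 1: level matching for $\varphi$.} Since $T$ acts by multiplication, $(U_\varphi Tf)(P)=T(\varphi(P))f(\varphi(P))$ and $(TU_\varphi f)(P)=T(P)f(\varphi(P))$. Commutation therefore gives
\begin{equation}
\bigl(T(\varphi(P))-T(P)\bigr)\,f(\varphi(P))=0 \quad\text{for $\mu$-a.e. }P,
\end{equation}
for every $f\in L^2(\R_+,d\mu)$. Take $f=\mathbf 1_{(0,R)}$, which lies in $L^2(d\mu)$ because $\rho_0$ is locally bounded. Then $f\circ\varphi=\mathbf 1_{\varphi^{-1}(0,R)}$. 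Because $\varphi$ is an automorphism, invertible modulo null sets, the sets $\varphi^{-1}(0,R)$ exhaust $\R_+$ up to a null set as $R\to\infty$. Hence $e^{2\pi i\varphi(P)^2}=e^{2\pi iP^2}$ a.e. So there is a measurable integer-valued function $n$ with
\begin{equation}
\varphi(P)^2=P^2+n(P)\quad\text{a.e.}
\end{equation}
Put $A_n=\{P:n(P)=n\}$. The goal is now to show $\mu(A_n)=0$ for every $n\neq0$.

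\emph{Step 2: pass to the $u$ coordinate.} With $u=P^2$ the measure becomes $d\mu=w(u)\,du$, where
\begin{equation}
w(u)=\frac{\rho_0(\sqrt u)}{2\sqrt u}
=2\sqrt2\,\frac{\sinh(2\pi b\sqrt u)}{\sqrt u}\,\sinh(2\pi\sqrt u/b).
\end{equation}
This weight is strictly increasing on $(0,\infty)$. Indeed, $\sinh(x)/x$ is positive and strictly increasing for $x>0$, and $\sinh(2\pi\sqrt u/b)$ is positive and strictly increasing, so their product is strictly increasing. On $A_n$, and in the $u$ variable, $\varphi$ acts as the translation $u\mapsto u+n$.

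\emph{Step 3: the weight comparison.} Fix $n\neq0$ and a bounded measurable $E\subset A_n$ of positive measure, so that every integral below is finite. Measure preservation of the automorphism, applied to $B=\varphi(E)$ via $\mu(\varphi^{-1}(B))=\mu(B)$, gives $\mu(\varphi(E))=\mu(E)$. Computing both sides in the $u$ variable,
\begin{equation}
\int_E w(u+n)\,du=\int_E w(u)\,du .
\end{equation}
Strict monotonicity of $w$ makes the left side strictly larger than the right if $n>0$, and strictly smaller if $n<0$, because $E$ has positive Lebesgue measure. This is a contradiction. Hence every bounded subset of $A_n$ is null, so $A_n$ is null. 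It follows that $n=0$ a.e., and therefore $\varphi(P)=P$ a.e.

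\emph{Expected main obstacle.} The delicate part is the measure-theoretic bookkeeping, not the idea. First, $\varphi(E)$ must be measurable and must satisfy $\mu(\varphi(E))=\mu(E)$; I will handle this by using the measurable inverse that comes with the automorphism. Second, Step 1 needs enough test functions $f$ in $L^2(d\mu)$ that $f\circ\varphi$ is nonvanishing almost everywhere, which the exhaustion by the sets $\varphi^{-1}(0,R)$ provides. The only analytic input is the elementary strict monotonicity of $w$; I would state it as a short lemma before Step 3.
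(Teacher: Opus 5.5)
Your proof is correct, and it takes a more elementary route than the paper. Both arguments share the first step: commutation with $T$ gives $\varphi(P)^2-P^2\in\Z$ a.e. Both also rest on the same analytic input, that $w(u)=\rho_0(\sqrt u)/(2\sqrt u)$ is strictly increasing. Your factorization into $\sinh(x)/x$ times $\sinh$ is a valid substitute for the paper's $x\coth x>1$ computation.

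The two proofs differ in how they use measure preservation. The paper partitions by the $T$-phase $\theta=u\bmod 1$. It disintegrates $w(u)\,du$ into the purely atomic fiber measures $\sum_k w(k+\theta)\delta_{k+\theta}$. It then invokes the Rohlin disintegration theorem to get, for a.e.\ $\theta$, a mass-preserving bijection $\sigma_\theta$ of the atoms, which must be the identity because the masses are distinct.

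You instead partition by the shift value $n(P)$. On $A_n$ the map is literally the translation $u\mapsto u+n$, so one bounded test set $E\subset A_n$ gives $\int_E[w(u+n)-w(u)]\,du=0$, which strict monotonicity contradicts. This needs only two ingredients:
\begin{itemize}
\item the measurable inverse of the automorphism, to get $\mu(\varphi(E))=\mu(E)$;
\item translation invariance of Lebesgue measure.
\end{itemize}
It therefore avoids extracting fiberwise bijections from a disintegration, a step the paper delegates to a cited theorem without spelling out.

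What the paper's version buys is the structural picture used throughout the rest of the paper. A $T$-compatible relabelling can only permute the atoms of each fixed-phase fiber, which is the continuum analogue of a rational-CFT permutation invariant. That formulation also makes clear that only injectivity of $w$ along each fiber is needed.

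Your argument extends to that weaker hypothesis as well. If $w(u+n)\neq w(u)$ on $A_n$, shrink $E$ to whichever of $A_n\cap\{w(u+n)>w(u)\}$ or $A_n\cap\{w(u+n)<w(u)\}$ has positive measure.
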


Here is the mechanism. Commutation with $T$ first imposes only equality of phases,
\begin{equation}\label{eq:graphTphase}
 e^{2\pi i\varphi(P)^2}=e^{2\pi iP^2}
 \qquad\Longrightarrow\qquad
 \varphi(P)^2-P^2\in\Z
 \quad\text{a.e.}
\end{equation}
Thus $T$ alone cannot distinguish labels whose values of $P^2$ differ by an integer. In
the variable $u=P^2$, the map $G(u)=\varphi(\sqrt u)^2$ preserves
$\theta=u\bmod1$. Meanwhile the Cardy/Plancherel measure becomes
\begin{equation}\label{eq:graphweight}
 \rho_0(P)\,dP=w(u)\,du,
 \qquad
 w(u)=\frac{\rho_0(\sqrt u)}{2\sqrt u}.
\end{equation}
For a fixed $\theta$, the $T$-fiber consists of the points $u=k+\theta$ and carries
the atomic measure $\sum_k w(k+\theta)\delta_{k+\theta}$. A measure-preserving
$G$ must therefore permute these atoms without changing their weights.

The Virasoro density makes that impossible. Up to a positive constant,
\begin{equation}
 w(P^2)=\frac{\sinh(2\pi bP)\sinh(2\pi P/b)}{P},
\end{equation}
and
\begin{equation}\label{eq:graphmonotone}
 P\frac{d}{dP}\log w(P^2)
 =2\pi bP\coth(2\pi bP)
  +\frac{2\pi P}{b}\coth(2\pi P/b)-1>0,
\end{equation}
because $x\coth x>1$ for $x>0$. Hence $w(k+\theta)$ is strictly increasing with
$k$: no two atoms on the same $T$-fiber have the same mass. The only
weight-preserving permutation is therefore the identity, which proves
$G(u)=u$ and hence $\varphi(P)=P$ almost everywhere. Appendix~\ref{app:sector}
records the disintegration argument and the associated measure-zero details.

This also explains why only $T$ is needed in Theorem~\ref{thm:A}. A formal shift
$u\mapsto u+n$ passes the phase test in \eqref{eq:graphTphase}, but it changes the
Plancherel weight and is not an allowed permutation. A general $T$-invariant
operator is much less rigid: it may mix several values of $k$ inside each
$\theta$-fiber rather than merely relabel them. The modular $S$ transformation is
needed to rule out that mixing, as in Theorem~\ref{thm:layer2a}. The graph theorem
rules out every measure-preserving graph relabelling, not only the special
permutations suggested by simple-current constructions~\cite{Fuchs:2004dz},
discussed in Section~\ref{sec:B}.

\subsection{Hard-edge finite-branch rigidity}\label{sec:hardedge}

Could several $T$-allowed integer branches combine into an $S$-invariant kernel?
Start with the simplest deformation,
\begin{equation}
N=I+\varepsilon D_a\tau_1.
\end{equation}
It passes $T$, because $u$ and $u+1$ have the same phase. It fails at the
physical endpoint $u=0$. On the strip $0<w<1$, the term in the $S$-commutator
that would evaluate the branch at $w-1$ is absent, while its partner remains.
The two terms therefore cannot cancel. This is the hard-edge mechanism.

More generally, $T$-invariance confines a local kernel to integer branches
$v-u=n\in\Z$. On a fixed $T$-fiber $\{\theta+k\}_{k\ge0}$,
$D_{a_n}\tau_n$ occupies one matrix diagonal, so finitely many branches form a
finite-band matrix. The result below is an explicit regular-sector theorem,
independent of and not needed for the full bounded theorem. A finite-branch
kernel here means \eqref{eq:Nform} with a finite branch set and a scalar diagonal,
\begin{equation}\label{eq:Nfinite}
N=c_0 I+\sum_{n\in F}D_{a_n}\tau_n,\qquad F\subset\Z\setminus\{0\}\ \text{finite},\ c_0\in\C.
\end{equation}
The scalar-diagonal hypothesis is part of the statement: a variable diagonal can
cancel off-diagonal commutators before the branches are removed, and is treated only
in the tame theorem (Theorem~\ref{thm:E}).

For a single branch, the commutator can be read directly from the kernel:
\begin{equation}\label{eq:hardedge-single-main}
\begin{split}
[D_a\tau_n,\tS](u,w)
={}&a(u)\mathbf 1_{\{u+n>0\}}\tS(u+n,w)\\
&-a(w-n)\mathbf 1_{\{w-n>0\}}\tS(u,w-n).
\end{split}
\end{equation}
If $n>0$, restrict to $0<w<n$. The second term then vanishes because the shifted
argument lies beyond the endpoint, while the first term is nonzero wherever $a$
is nonzero, apart from the measure-zero zero set of the cosine kernel. Thus a
positive branch cannot commute with $\tS$. For $n<0$ the reflected strip
$0<u<|n|$ gives the same conclusion. This proves the single-branch case without
any differentiability or high-energy assumption.

\begin{theorem}[hard-edge finite-branch rigidity]\label{thm:C}
Let $N$ be as in \eqref{eq:Nfinite} and suppose $[N,\tS]=0$, read in the local
kernel/distributional sense of Section~\ref{sec:setup} (the coefficients
$a_n\in L^\infty_{\mathrm{loc}}$ need not define a bounded $L^2$ operator). Then
$a_n=0$ a.e.\ on its
active domain $\{u+n>0\}$ for every $n\in F$ (so $N=c_0 I$) under any one of:
\emph{(i)} a single branch ($|F|=1$) with $a_n\in L^\infty_{\mathrm{loc}}$;
\emph{(ii)} a one-sided set ($F\subset\Z_{>0}$ or $F\subset\Z_{<0}$) with each
$a_n\in L^\infty_{\mathrm{loc}}$;
\emph{(iii)} a mixed set with each $a_n\in L^\infty_{\mathrm{loc}}\cap C^{|F|-1}_{\mathrm{loc}}$.
\end{theorem}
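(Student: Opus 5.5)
The plan is to commute each term of $N$ with $\tS$ separately, then restrict to strips adjacent to the hard edge where half of the commutator terms disappear. The scalar part $c_0I$ commutes with $\tS$. Summing the single-branch formula \eqref{eq:hardedge-single-main} over $n\in F$, the hypothesis $[N,\tS]=0$ reads
\begin{equation*}
\sum_{n\in F}\Bigl[a_n(u)\mathbf 1_{\{u+n>0\}}\tS(u+n,w)-a_n(w-n)\mathbf 1_{\{w-n>0\}}\tS(u,w-n)\Bigr]=0 .
\end{equation*}
Because $a_n\in L^\infty_{\mathrm{loc}}$ and $\tS$ is locally integrable away from the axes, the left-hand side is an $L^1_{\mathrm{loc}}$ function on $(0,\infty)^2$. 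The distributional identity therefore holds pointwise almost everywhere. By Fubini, for almost every $u$ it holds for almost every $w$, and vice versa. This reduction is the first step.

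For cases (i) and (ii) with $F\subset\Z_{>0}$, let $n_*=\min F$ and restrict to the strip $0<w<n_*$. There every indicator $\mathbf 1_{\{w-n>0\}}$ vanishes, so for almost every $u$,
\begin{equation*}
\sum_{n\in F}a_n(u)\,\frac{\cos\bigl(4\pi\sqrt{(u+n)w}\bigr)}{((u+n)w)^{1/4}}=0\qquad\text{for a.e. }w\in(0,n_*).
\end{equation*}
Substitute $s=\sqrt w$ and multiply by $w^{1/4}$. The identity becomes a finite combination of $\cos(\lambda_n s)$ with distinct frequencies $\lambda_n=4\pi\sqrt{u+n}$. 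This combination vanishes on an interval, and since it is real-analytic it vanishes identically. Linear independence of cosines with distinct positive frequencies, for instance through a nonvanishing Vandermonde-type Wronskian at $s=0$ of the even derivatives, then gives $a_n(u)=0$ for almost every $u$. Case (i) with $n>0$ is the instance $|F|=1$. The case $F\subset\Z_{<0}$ is the mirror argument: restrict to $0<u<\min_{n\in F}|n|$, where every first term vanishes. This leaves $\sum_n a_n(w-n)\tS(u,w-n)=0$ in $u$ for almost every $w$, with distinct frequencies $4\pi\sqrt{w-n}$.

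Case (iii) is the main obstacle. On the strip $0<w<n_+:=\min(F\cap\Z_{>0})$ the negative branches survive in the second sum, and their coefficients $a_m(w-m)$ depend on $w$. For almost every $u$ the identity then takes the form
\begin{equation*}
\sum_{n\in F_+}a_n(u)\,\phi_n(w)=\sum_{m\in F_-}a_m(w-m)\,\psi_m(u,w),
\end{equation*}
where $\phi_n(w)=\tS(u+n,w)$, $\psi_m(u,w)=\tS(u,w-m)$, and $F_\pm$ denote the positive and negative parts of $F$.

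To handle this, I would exploit the differing dependence on $u$ and $w$ and eliminate one family at a time. The left-hand side is a function of $w$ spanned by $|F_+|$ explicit analytic functions with $u$-dependent constants. On the right, the $w$-dependence passes through the $C^{|F|-1}$ coefficients. I would differentiate the identity in $w$ up to order $|F|-1$, which is exactly where the $C^{|F|-1}_{\mathrm{loc}}$ hypothesis is used. Regarded as a function of $u$ for each fixed $w$, the right-hand side is a combination of the fixed profiles $u^{-1/4}\cos(4\pi\sqrt{u(w-m)})$ with frequencies independent of $u$. The left-hand side instead has $u$-dependent frequencies $\sqrt{u+n}$.

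The elimination step would apply, for each fixed $w$, a constant-coefficient operator in the variable $\sqrt u$ that annihilates the $|F_-|$ cosine profiles on the right. This is a product of factors $\partial_s^2+\lambda_m^2$ with $s=\sqrt u$. What remains is an identity of the form $\sum_{n\in F_+}\mathcal L[a_n(u)\phi_n(w)]=0$, to be read in $w$. Its $w$-jets at a generic point should give a nonsingular generalized Vandermonde system forcing $a_n=0$ for $n\in F_+$. The negative branches then fall under case (ii).

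If this operator approach proves awkward, because it requires differentiating $a_n$ in $u$, I would instead induct on $|F|$ by peeling off an extremal branch $\max F$ or $\min F$. I would also check that the relevant Wronskian determinant is nonzero on a set of positive measure. Verifying nondegeneracy of this mixed determinant is the step I expect to need the most care.
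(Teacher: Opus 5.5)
Your cases (i) and (ii) are correct and are essentially the paper's argument. On the strip $0<w<\min F$, or the reflected strip $0<u<\min_{n\in F}|n|$, one family of terms drops out. The surviving cosine sum in $\sqrt w$ has distinct frequencies $4\pi\sqrt{u+n}$ and is killed by its Taylor coefficients at the edge.

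Case (iii), however, is not proved, and the route you sketch has three concrete defects.
\begin{itemize}
\item Your displayed identity on $0<w<n_+$ omits the first-family terms $a_m(u)\tS(u+m,w)$ of the negative branches, which are active whenever $u>|m|$. As written it holds only for $u<\min_{m\in F_-}|m|$. A successful elimination would therefore give $a_n(u)=0$ only on that initial interval, and the reduction ``the negative branches then fall under case (ii)'' would not follow.
\item The operator $\prod_{m\in F_-}(\partial_s^2+\lambda_m^2)$ has order $2|F_-|$ in $s=\sqrt u$ and falls on $a_n(s^2)$, which is only $C^{|F|-1}$. For $F=\{-1,1\}$ you would need two $u$-derivatives of $a_1$ but are granted one.
\item The nondegeneracy of the resulting mixed determinant, which you yourself flag as the crux, is left open.
\end{itemize}

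The missing idea is a regularity mismatch at the hard edge $w\to0^+$, which needs only $w$-regularity. For a.e.\ fixed $u$, collect all active first-family terms, positive \emph{and} negative $n$ with $u+n>0$, and multiply by $w^{1/4}$. This gives
$A_u(w)=\sum_{n\in F,\,u+n>0}\gamma_n(u)\cos\bigl(4\pi\sqrt{(u+n)w}\bigr)$ with $\gamma_n(u)=\sqrt2\,a_n(u)(u+n)^{-1/4}$, an entire function of $w$.

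The commutator identity then reads $A_u(w)=w^{1/4}B_u(w)$, with $B_u(w)=\sum_{m\in F_-}a_m(w-m)\tS(u,w-m)$. Since $w-m=w+|m|>1$ stays off the singular axis, $B_u\in C^{d-1}$ up to $w=0$, where $d$ is the number of active branches. This is exactly where the $C^{|F|-1}$ hypothesis enters.

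Suppose the lowest nonvanishing Taylor term of $A_u$ at $0$ were $Cw^r$ with $r<d$. Then $B_u=w^{-1/4}A_u\sim Cw^{r-1/4}$ would have a divergent $r$-th derivative as $w\to0^+$, a contradiction. Hence the Taylor coefficients of $A_u$ of orders $0,\dots,d-1$ all vanish, that is, $\sum_n\gamma_n(u)(u+n)^k=0$ for $k=0,\dots,d-1$. This is a Vandermonde system with distinct nodes $u+n$.

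It kills every active branch, positive and negative, at a.e.\ $u$ simultaneously. No differentiation in $u$, no induction on $|F|$, and no further determinant is needed.
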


Several branches do not repair the mismatch. Near the edge, the modular kernel
assigns branch $n$ the distinct oscillation frequency
$4\pi\sqrt{u+n}$. Expanding the commutator at the endpoint turns the possible
cancellation into a finite Vandermonde system with one node for each active
branch. Distinct branches give distinct nodes, so every coefficient must vanish.
Thus the physical endpoint converts the continuum commutator problem into
finite-dimensional linear algebra. These are frequencies of the modular kernel,
not new Hamiltonian frequencies. Appendix~\ref{app:sector} gives the Taylor,
Vandermonde, Fubini and local-regularity details.

\begin{remark}[exact edge vs.\ quantitative gap]
The hard-edge proof is an exact rigidity argument: it uses equality of the commutator
kernel on a positive-measure boundary strip near $u=0$ or $w=0$. It gives no
quantitative lower bound for high-energy localized commutators and does not address
the still-open uniform operator-norm gap question $\kappa_T>0$ (with $\kappa_T$
defined in Appendix~\ref{app:gap}); conversely the selected-window cancellations
discussed there are approximate high-energy probes, not exact vanishing of the kernel
on an open boundary strip.
\end{remark}

\begin{remark}[the rigidity edge is the physical edge]
Physically $u=P^2\ge0$ is the bottom of the nondegenerate continuum, and the vacuum
row $r(u)$ re-enters from below this same edge: the rigidity edge and the vacuum
edge coincide. The hard-edge proof moreover covers strictly more finite-branch cases
than the high-energy method (arbitrary $L^\infty$ single and one-sided branches,
including non-tame oscillatory coefficients); on the tame finite-branch overlap the same
off-diagonal vanishing is independently recovered by the high-energy
nonstationary-phase mechanism of Theorem~\ref{thm:E}, which does not probe the edge.
These are \emph{not} two independent proofs of the same class: the high-energy method
does not cover non-tame single branches such as $\cos(\beta\sqrt u)$.
\end{remark}

\subsection{Tame decomposable rigidity}\label{sec:tame}\label{sec:E}

The finite-branch theorem leaves a natural loophole. Perhaps an infinite
collection of individually small $T$-exact shifts can cancel after the $S$
transformation even though no finite collection can. Finite-band matrices are
not norm-dense in the full operator algebra of a $T$-fiber, so the finite theorem
alone cannot exclude this possibility.

The escape fails when the branch amplitudes vary slowly and their tail is
summable. At high energy, branch $n$ carries its own frequency
$4\pi\sqrt{u+n}$. A projection that follows one chosen frequency isolates that
branch, while summability permits the infinite sum to pass through the limiting
argument. The following definition records the regularity needed to make this
picture rigorous; it is a sufficient physical regularity class, not a proposed
description of every element of $\Comm(T)$.

\begin{definition}[tame class]\label{def:tame}
A coefficient $a:(0,\infty)\to\C$ is \emph{tame} if it is a slowly varying symbol of
order $0$ in $u$: for a fixed integer $J\ge1$,
\begin{equation*}\label{eq:T1}
\|a\|_{(J)}:=\max_{0\le j\le J}\ \sup_{u>0}\,(1+u)^{j}\,\bigl|\partial_u^j a(u)\bigr|\ <\ \infty.
\tag{T1}
\end{equation*}
Equivalently, in $\xi=\sqrt u$ the lifted coefficient $b(\xi)=a(\xi^2-n)$ obeys the
symbol bounds $|\partial_\xi^k b(\xi)|\le C_k\langle\xi\rangle^{-k}$ for $0\le k\le J$
(the ``no resonant high frequency in $\sqrt u$'' condition). The operator $N$ of
\eqref{eq:Nform} is \emph{tame} if $m$ and every $a_n$ are tame and the off-diagonal
tail is summable in sup norm,
\begin{equation*}\label{eq:T2}
\sum_{n\neq0}\|a_n\|_\infty<\infty.
\tag{T2}
\end{equation*}
\end{definition}

Condition \eqref{eq:T2} makes the series \eqref{eq:Nform} converge in operator norm.
$C_c^\infty$ and Schwartz coefficients are tame. The oscillatory coefficient
$a(u)=\cos(\beta\sqrt u)$ is not: it injects a resonant high frequency and lies
outside this projection argument. It is nevertheless ruled out in the
single-branch case by the hard-edge theorem. Theorem~\ref{thm:E} uses
\eqref{eq:T1} for $m$ and every $a_n$ together with \eqref{eq:T2}; no tameness is
needed in the finite-branch Theorem~\ref{thm:C}. A stronger Schwartz class is a
convenient sufficient example, but the proof uses only \eqref{eq:T1}--\eqref{eq:T2}.

\begin{theorem}[tame decomposable rigidity]\label{thm:E}
Let $N\in\Comm(T)$ be a decomposable operator of the form \eqref{eq:Nform} that is
tame in the sense of Definition~\ref{def:tame} (\eqref{eq:T1} for $m$ and every
$a_n$, and the $\ell^1$ tail \eqref{eq:T2}). If $[N,\tS]=0$, then $a_n\equiv0$ for
$n\neq0$ and $m$ is constant; i.e.\ $N=c\cdot1$. If, in addition, the whole operator
$N$ lies in any operator ideal containing no nonzero scalar identity (for instance the
compact operators), then $c=0$.
\end{theorem}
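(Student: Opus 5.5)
The quickest route is to reduce the statement to Theorem~\ref{thm:layer2a}, which we may assume. After that, only bookkeeping remains: reading off the branch coefficients of $N$ from the identity $N=cI$. The reduction needs two facts: that $N$ is bounded, and that it commutes with $T$ and $\tS$ in the operator sense.

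\textbf{Step 1 (boundedness).} By \eqref{eq:T1} with $j=0$, the diagonal $m$ and every $a_n$ are in $L^\infty$. Hence $D_m$ and each $D_{a_n}\tau_n$ are bounded, with $\|D_{a_n}\tau_n\|\le\|a_n\|_\infty$, because the truncated shift $\tau_n$ is a partial isometry on $L^2((0,\infty),du)$. The $\ell^1$ tail \eqref{eq:T2} then makes the series \eqref{eq:Nform} converge in operator norm, so $N$ is bounded. Since each $\tau_n$ and each $D_{a_n}$ commutes with the multiplication operator $T$, so does the norm limit; thus $N\in\Comm(T)$.

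\textbf{Step 2 (the $S$ condition).} The hypothesis $[N,\tS]=0$ may be stated distributionally, by pairing with test functions in $C_c^\infty$. For bounded $N$ and unitary $\tS$, this is equivalent to the operator identity, by density of $C_c^\infty$ in $L^2$. Theorem~\ref{thm:layer2a} then gives $N=cI$.

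\textbf{Step 3 (reading off the coefficients).} To show $a_n\equiv0$ and $m\equiv c$, pass to the fiber picture \eqref{eq:Tfibres}. Under the unitary $W$, the operator $N-cI$ becomes the measurable field $\theta\mapsto N(\theta)-c$. By the discussion after \eqref{eq:Nform}, its $(k,k+n)$ matrix entry is $a_n(\theta+k)$ for $n\neq0$, and its $(k,k)$ entry is $m(\theta+k)-c$. Norm convergence of the series lets us compute these entries term by term, since each matrix element is a continuous functional.

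Vanishing of $N-cI$ forces every entry to vanish for a.e.\ $\theta$. So $a_n=0$ and $m=c$ almost everywhere on the active domains. Because tame coefficients are $C^J$, hence continuous, these equalities hold everywhere. This yields $N=c\cdot1$.

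\textbf{Step 4 (the ideal clause).} Suppose $N$ also lies in an operator ideal containing no nonzero multiple of $I$. Then $cI$ lies in that ideal, which is possible only if $c=0$. For the compact operators this is immediate, since $I$ is not compact on an infinite-dimensional space.

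\textbf{Elementary alternative.} Since Section~\ref{sec:E} advertises a representation-free mechanism, one can instead give a direct high-energy proof that does not use Theorem~\ref{thm:layer2a}. Test the commutator against wave packets $f$ concentrated near a large $v$ that oscillate at frequency $4\pi\sqrt{u_0+n}$ for a chosen branch $n$.

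In the identity
\[
\langle g,[N,\tS]f\rangle=0,
\]
the kernel term coming from branch $n'$ carries the phase $\cos(4\pi\sqrt{(u+n')v})$. By nonstationary phase, after $J$ integrations by parts using \eqref{eq:T1}, every branch $n'\neq n$ contributes a decaying error. The $\ell^1$ tail \eqref{eq:T2} gives a uniform bound that lets the sum over $n'$ pass to the limit. What survives isolates $a_n(u_0)$, and so forces $a_n(u_0)=0$. With all off-diagonal branches removed, commutation of the multiplication operator $D_m$ with the cosine transform then forces $m$ to be constant.

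The main obstacle on this elementary route is making the nonstationary-phase estimate uniform in $n'$, in particular for $|n'|$ comparable to $u_0$, where the frequencies $\sqrt{u+n'}$ crowd together. On the route through Theorem~\ref{thm:layer2a} there is essentially no obstacle beyond Step 2's passage from the distributional commutator to the operator one.
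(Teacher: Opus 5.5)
Your main argument is correct, but it takes a genuinely different route from the paper's. You note that (T1) with $j=0$ and the tail condition (T2) make $N$ bounded; in any case $N\in\Comm(T)$ is already a bounded operator by hypothesis. Theorem~\ref{thm:layer2a} then gives $N=cI$ at once, and the coefficients are read off. That last step is even simpler than your fiber argument: for $f$ supported in an interval of length $<1$, the pieces $D_{a_n}\tau_n f$ ($n\neq0$) and $(D_m-c)f$ have pairwise disjoint supports.

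The paper instead proves Theorem~\ref{thm:E} in Appendix~\ref{app:proofs} with no representation theory:
\begin{itemize}
\item it realizes $[N,\tS]$ locally as an $L^2$ kernel (Lemma~\ref{lem:localkernel});
\item it applies a high-energy functional that projects onto the branch frequency $\alpha_{n_0}(u')=4\pi\sqrt{u'+n_0}$ (Lemmas~\ref{lem:E1}--\ref{lem:E2});
\item only after every branch is killed does it invoke the diagonal lemma.
\end{itemize}
Your reduction is shorter. It also shows that the derivative bounds in (T1) are not needed for the conclusion: any bounded operator of the form \eqref{eq:Nform} is scalar. But it inherits the Cowling--Steger black box and says nothing about how $\tS$ detects each branch. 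The paper's proof is deliberately an independent, elementary mechanism: distinct branch frequencies versus the common frequency $\beta$ of the diagonal and family-B terms. That mechanism is exactly what (T1) is spent on.

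Two corrections to your elementary sketch. First, a probe frozen at frequency $4\pi\sqrt{u_0+n}$, paired against a $u$-profile of fixed width, extracts $0$ as $T\to\infty$. The reason is that the resonance occurs only on the null set $u=u_0$. The paper's functional instead tracks $\alpha_{n_0}(u')$ inside the mollified $u'$-integral and takes $T\to\infty$ before $\varepsilon\to0$ (Remark~\ref{rem:E0}).

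Second, crowding of the frequencies is not the difficulty. For $u'$ in a compact set, the separation of $\alpha_{n_0}$ from the other $\alpha_{n'}$ is bounded below uniformly in $n'$. The genuine non-uniformity is in the family-B terms with $n'$ of order $T^2$. There the window contains $\xi^2=n'$, where the amplitude is singular and the instantaneous frequency sweeps through $\alpha_{n_0}$. The paper never needs uniform decay. It combines the termwise limits of Lemma~\ref{lem:E1} with the $n$-uniform bound $C\|a_n\|_\infty$ of Lemma~\ref{lem:E2}, and applies dominated convergence over $n$.
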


\paragraph{Mechanism.} The proof (Appendix~\ref{app:proofs}) is a high-energy
frequency projection. Multiplying the commutator kernel by $w^{1/4}$ and setting
$\xi=\sqrt w$, each branch $n$ oscillates at its own frequency
$\alpha_n(u)=4\pi\sqrt{u+n}$, while the diagonal and all ``reflected'' terms
oscillate at the common frequency $\beta(u)=4\pi\sqrt u$. A tracking-frequency
functional tuned to $\alpha_{n_0}$ extracts branch $n_0$ alone: every other term is
killed by nonstationary phase (Lemma~\ref{lem:E1}), and a bound uniform in the
branch label and the energy scale (Lemma~\ref{lem:E2}) lets the summable tail
\eqref{eq:T2} pass through the limit. Hence $a_{n_0}\equiv0$ for every
$n_0\neq0$; only then is $N$ diagonal, and the elementary diagonal lemma
(Lemma~\ref{lem:diag}, Appendix~\ref{app:sector}) forces $m$ constant.

This is a no-go for a physically natural class of candidate genus-one kernels
(ordinary/tame boundary data motivate such tail-controlled, positive,
measure-like classes, Section~\ref{sec:tame}) and an explicit regular-sector
mechanism. The full bounded theorem of
Section~\ref{sec:layer2a} is proved separately, in Appendix~\ref{app:weilproof}.

\begin{remark}[convention transfer]\label{rem:D}
The character ($S$, measure $dP$), Plancherel ($\rho_0\,dP$) and symmetrized ($\tS$)
realizations are related by smooth positive diagonal conjugations $D$; a branch shift
becomes $D\tau_nD^{-1}=(d(u)/d(u+n))\,\tau_n$, multiplying the branch coefficient by a smooth
slowly-varying (symbol-class) weight while leaving the phase frequencies
$\sqrt{u+n}$, the only quantities either proof uses, unchanged. Thus the theorems transport to the
corresponding image of the finite-branch or tame class in each convention. (The
statement is that commutation and rigidity are preserved under the relevant
conjugation, not that one fixed operator lies in the identical tame class or commutes
with all three realizations of $S$.)
\end{remark}

\subsection{Structural input: absence of simple-current constructions}\label{sec:B}

Could the usual rational-CFT simple-current mechanism evade the kernel rigidity?
At generic $c>25$, not within the ordinary Virasoro line category. A
nondegenerate line fuses into a continuum rather than a single inverse object,
while a degenerate line generically produces several shifted summands. Only the
tensor unit is invertible. This observation complements the permutation and
hard-edge no-go results, but it is a conditional structural input rather than a
headline theorem. Nondegenerate $W$-type extensions and conformal embeddings
remain outside the class considered here.

\begin{remark}[no simple currents]\label{prop:B}
Assuming the standard generic Liouville/Virasoro (Ponsot--Teschner) fusion rules, at
generic irrational $c>25$ the only invertible simple object in the ordinary/tame
Virasoro line category is the tensor unit: a nondegenerate line $L_P$ is not
invertible (the continuous-series product decomposes as a direct integral
$L_P\otimes L_Q\sim\int_0^\infty dR\,N(P,Q;R)\,L_R$~\cite{math/0007097,hep-th/0104158},
not a single object), and a degenerate $L_{m,n}$ fuses with a generic line into $mn$
shifted summands~\cite{1406.4290}, so only $L_{1,1}$ is invertible (generic $b$). Hence
there is no simple current and no simple-current extension. This is conditional on
those fusion rules, not a theorem for a rigorously constructed Virasoro MTC, and it
does not address conformal embeddings, integer-spin chiral extensions, extra
degenerate sectors, or signed/complex non-kernel genus-one data.
\end{remark}

Throughout, ``local'' means \emph{bosonic} local: chiral currents carry integer
conformal weight $h\in\Z$.\footnote{The same degenerate-weight argument run with
$M\in\tfrac12\Z$ (so that $E_{\deg}$ is correspondingly enlarged) applies to
$h\in\tfrac12\Z$, and hence to fermionic/super currents; we do not use it.
Fermionic/super extensions would in any case require a spin-statistics refinement
outside the present scope.}

\begin{remark}[no degenerate integer-spin currents]\label{lem:Bdeg}
Write $x=b^2$; the positive-Kac weight
$h_{r,s}=\tfrac14[(1-r^2)x+(1-s^2)x^{-1}+2(1-rs)]$ ($r,s\in\Z_{>0}$) is an integer $M$
iff
\begin{equation}
(1-r^2)\,x^2+[2(1-rs)-4M]\,x+(1-s^2)=0 .
\end{equation}
For $(r,s)\neq(1,1)$ this is a nonzero integer-coefficient polynomial of degree $\le2$
(its outer coefficients $1-r^2,\,1-s^2$ vanish together only at $r=s=1$), with finitely
many positive roots, so the exceptional set
\begin{equation}
E_{\deg}:=\!\!\bigcup_{(r,s)\neq(1,1),\,M\in\Z}\!\!\bigl\{x>0:(1-r^2)x^2+[2(1-rs)-4M]x+(1-s^2)=0\bigr\}
\end{equation}
is countable (Lebesgue-null). \emph{If $b^2\notin E_{\deg}$}, then apart from the
identity no degenerate primary has integer weight, hence there is no degenerate-type
bosonic integer-spin current. A transparent (strictly stronger) sufficient condition is
that $\{1,b^2,b^{-2}\}$ be $\Q$-linearly independent (i.e.\ $b^2$ neither rational
nor a quadratic irrational), the sharp exclusion set being $E_{\deg}$ itself. Two
cautions: ``$c$ irrational'' means only $b^2+b^{-2}$ irrational and does \emph{not}
imply $b^2\notin E_{\deg}$; and $E_{\deg}$ is invariant under $b\leftrightarrow1/b$
(since $h_{r,s}(1/b)=h_{s,r}(b)$). The label $(1,-1)$, with $h_{1,-1}=1$, lies outside
the positive Kac range and enters only as the weight-one null subtracted in the rigged
identity sector (Section~\ref{sec:vacuum}).
\end{remark}

\begin{remark}[scope of the extension exclusion: controlled / excluded / separate]\label{rem:Walgebra}
The Virasoro algebra is \emph{not} maximal at generic $c$: the $W_N$ algebras (e.g.\
the spin-$3$ $W_3$ extension) are genuine local chiral extensions by integer-spin
($h=3$) \emph{nondegenerate} currents, untouched by Remark~\ref{lem:Bdeg}. We make no
maximality claim; the status of the various mechanisms is a trichotomy.
\begin{itemize}
\item \emph{Controlled} (Theorems~\ref{thm:C}, \ref{thm:E}): candidate genus-one
kernels motivated by normalizable/tame ordinary boundary data, absolutely
continuous with respect to the Plancherel line, with tail-controlled
off-diagonal weight, plus the rigged identity sector. These theorems do not supply
the Frobenius, Cardy or sewing data of an actual boundary.
\item \emph{Excluded by scope}: a discrete atomic $W$-type chiral extension (extended
vacuum with a discrete Virasoro summand and an integer-spin primary) and conformal
embeddings; these are \emph{outside} the object class of
Theorems~\ref{thm:C}/\ref{thm:E}, not ruled out by any Virasoro maximality statement.
\item \emph{Separate / open}: signed or complex generalized kernels, genus-one
data outside the positive measure-kernel class and the block-diagonal vacuum ansatz,
and the quantitative gap $\kappa_T$
(Appendix~\ref{app:gap}).
\end{itemize}
Heuristically, Ponsot--Teschner fusion is a continuous direct integral rather than a
finite sum~\cite{math/0007097,hep-th/0104158}, so a discrete $W$-generator is invisible
to a normalizable Plancherel kernel; this is motivation only: a rigorous categorical
version would need a measurable algebra-object formalism we do not claim.
\end{remark}

Together the remarks above remove the simple-current and degenerate-current mechanisms
in the generic bosonic ordinary/tame setting; they do \emph{not} remove nondegenerate
$W$-type chiral extensions or conformal embeddings, which lie outside the
continuous/tame line category and remain scope/open.

\subsection{Vacuum normalization and positivity}\label{sec:vacuum}

A physically important loophole appears to remain: an entrywise-positive
multiplicity kernel may be singular or initially unbounded, and hence outside
Theorem~\ref{thm:layer2a}. When assumed, or obtained from the ordinary
block-diagonal vacuum--continuum ansatz, the two vacuum marginal equations close
this loophole for positive measure kernels. They control both directions of the
kernel and make the weighted Schur test apply.

There are therefore two different logical orders. For an already bounded
pairing, modular rigidity is applied first: $N=cI$, and vacuum normalization then
fixes $c=1$. For a positive measure kernel, the two vacuum marginals may instead
be imposed first: they imply $\|N\|\le1$, after which modular rigidity gives
$N=I$. Positivity supplies boundedness, not diagonality; diagonality still comes
from the joint $S,T$ commutant.

The physical normalization includes the identity/vacuum sector. The vacuum is a
rigged degenerate sector: with $P_+=\tfrac i2(b+b^{-1})$ ($h=0$) and
$P_-=\tfrac i2(b-b^{-1})$ ($h=1$), the vacuum character is
$\chi_{\mathrm{vac}}=\chi_{P_+}-\chi_{P_-}$, and analytic continuation of $S$ gives the
identity row
\begin{align}
S_{\mathbf1,Q}
&=2\sqrt2\bigl[\cosh(2\pi(b+b^{-1})Q)-\cosh(2\pi(b-b^{-1})Q)\bigr]\notag\\
&=4\sqrt2\,\sinh(2\pi bQ)\sinh(2\pi Q/b)=\rho_0(Q),
\end{align}
i.e.\ the strictly positive row $r(u)$ defined in \eqref{eq:rrow}. Vacuum
normalization is the pair of conditions $Nr=r$ and $rN=r$ (the vacuum multiplicity is
$1$ and couples to itself); here $b$ re-enters through $P_\pm$. For a bare bounded
$L^2$ operator these equations are not automatically meaningful because
$r\notin L^2$. Once rigidity has given $N=cI$, however, the pointwise equation
$cr=r$ is unambiguous and fixes $c=1$.

For an entrywise-positive kernel we instead formulate $N$ as a positive Borel
measure. Under the ordinary block-diagonal vacuum--continuum ansatz described in
Remark~\ref{rem:crossblock}, $Nr=r$ and $rN=r$ become two equations of measures.
One may think of $r$ as a positive weight attached to every continuum label: one
equation fixes the total weighted flow out of each label, and the other fixes the
flow into it. Together they are precisely the row and column estimates needed by
a weighted Schur test. A positive distribution is a positive Radon measure, so
this formulation also covers singular positive kernels.

\begin{proposition}[positivity + vacuum normalization $\Rightarrow$ contraction]\label{prop:schur}
Let $N$ be a positive Borel measure on $(0,\infty)^2$ satisfying the two vacuum
equations as measures,
\begin{equation}\label{eq:vacmeasure}
\int_{v} r(v)\,N(du,dv)=r(u)\,du,\qquad
\int_{u} r(u)\,N(du,dv)=r(v)\,dv .
\end{equation}
Then the bilinear form $B(g,f)=\iint \overline{g(u)}\,f(v)\,N(du,dv)$, defined first
for $f,g\in C_c((0,\infty))$, satisfies $|B(g,f)|\le\|g\|_{2}\|f\|_{2}$; hence $N$
defines a unique bounded operator on $L^2((0,\infty),du)$ with $\|N\|\le1$.
\end{proposition}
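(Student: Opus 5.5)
The proof is the measure version of the weighted Schur test of Section~\ref{sec:prelim-comm}, with the vacuum row $r$ as weight. The Cauchy--Schwarz splitting is applied directly with respect to the positive measure $N(du,dv)$, so no density is ever needed. Fix $f,g\in C_c((0,\infty))$. Since $N$ is positive and $r$ is continuous and strictly positive on $(0,\infty)$, the weights $r(v)/r(u)$ and $r(u)/r(v)$ are positive and continuous, hence bounded on the support of $g\otimes f$. Also, $N$ is locally finite there: integrating the first marginal over a compact set $K$ gives $\int_{K\times(0,\infty)} r(v)\,N(du,dv)=\int_K r(u)\,du<\infty$, and $r$ is bounded below on compacts. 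So $B(g,f)$ is a well-defined finite integral.

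The key estimate is
\begin{equation*}
|B(g,f)|\le\iint |g(u)|\,|f(v)|\,N(du,dv)
\le\Bigl(\iint |g(u)|^2\tfrac{r(v)}{r(u)}\,N(du,dv)\Bigr)^{1/2}
\Bigl(\iint |f(v)|^2\tfrac{r(u)}{r(v)}\,N(du,dv)\Bigr)^{1/2},
\end{equation*}
which is Cauchy--Schwarz in $L^2(N)$ after writing $|g(u)||f(v)|=\bigl(|g(u)|\sqrt{r(v)/r(u)}\bigr)\bigl(|f(v)|\sqrt{r(u)/r(v)}\bigr)$. I then evaluate each factor by Tonelli, since all integrands are nonnegative Borel functions. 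In the first factor I integrate in $v$ first. Here I must interpret the first equation in \eqref{eq:vacmeasure} as the statement that the pushforward of the measure $r(v)N(du,dv)$ to the $u$-axis equals $r(u)\,du$. Testing against the nonnegative function $|g(u)|^2/r(u)$, which is first compactly supported and then extended by monotone convergence, gives $\int |g(u)|^2/r(u)\cdot r(u)\,du=\|g\|_2^2$. The second factor gives $\|f\|_2^2$ in the same way from the column marginal. Hence $|B(g,f)|\le\|g\|_2\|f\|_2$.

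Finally, $C_c((0,\infty))$ is dense in $L^2((0,\infty),du)$, so $B$ extends uniquely to a bounded sesquilinear form of norm at most one. The Riesz representation theorem then produces a unique bounded operator $N$ with $\langle g,Nf\rangle=B(g,f)$ and $\|N\|\le1$.

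The main obstacle is measure-theoretic bookkeeping rather than analysis. I need to make precise that the marginal equations, stated as equalities of measures, may be tested against the unbounded nonnegative functions $|g|^2/r$ and $|f|^2/r$, and that Tonelli applies to the possibly singular $N$. Both points follow from $\sigma$-finiteness of $N$, obtained via the local finiteness argument above, together with monotone convergence. I would record these steps carefully, as the paper indicates for Appendix~\ref{app:sector}.
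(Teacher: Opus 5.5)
Your proposal is correct and follows essentially the same route as the paper's proof in Appendix~\ref{app:schurproof}. Both arguments obtain local finiteness of $N$ from the first marginal, apply Cauchy--Schwarz directly against $N$ with the split $(r(v)/r(u))^{1/2}(r(u)/r(v))^{1/2}$, evaluate the two factors by testing the marginals against $|g|^2/r$ and $|f|^2/r$, and extend the form from the dense subspace $C_c$. Your reading of the marginals as pushforward identities is the measure-theoretic step the paper uses implicitly.
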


\begin{corollary}[no unbounded positive escape]\label{cor:vacpos}
Let $N$ be as in Proposition~\ref{prop:schur} and suppose the bounded operator it
defines commutes with $T$ and $\tS$. Then $N=I$, and the measure is the diagonal
$\delta(u-v)\,du$. In particular there are no \emph{non-diagonal} unbounded or
singular positive kernels in this class: for entrywise-positive vacuum-normalized
modular-invariant kernels the class collapses to the diagonal measure.
\end{corollary}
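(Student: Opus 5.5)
The plan is to chain Proposition~\ref{prop:schur} with Theorem~\ref{thm:layer2a}, and then to identify the measure from the resulting operator.

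\textbf{Step 1 (boundedness).} By Proposition~\ref{prop:schur}, the positive measure $N$ defines a unique bounded operator on $L^2((0,\infty),du)$ with $\|N\|\le1$. This operator is determined by $\langle g,Nf\rangle=B(g,f)$ for $f,g\in C_c((0,\infty))$. By hypothesis it commutes with $T$ and $\tS$.

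\textbf{Step 2 (scalar).} Theorem~\ref{thm:layer2a} then gives $N=cI$ for some $c\in\C$. Since $B(f,f)\ge0$ for $f\ge0$, the constant $c$ is real and nonnegative. Concretely, for $f,g\in C_c$ we get
\begin{equation*}
\iint \overline{g(u)}f(v)\,N(du,dv)=c\int \overline{g(u)}f(u)\,du .
\end{equation*}

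\textbf{Step 3 (identification of the measure).} Products $\overline{g(u)}f(v)$ with $f,g\in C_c$ span a subspace dense in $C_c((0,\infty)^2)$ in the inductive-limit topology, by Stone--Weierstrass on compact product sets. Because $N$ is a positive Radon measure, hence locally finite, the Riesz representation theorem then forces $N(du,dv)=c\,\delta(u-v)\,du$, i.e.\ the pushforward of $c\,du$ to the diagonal.

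This local finiteness is the point needing care. I would justify it from the marginals: $r$ is continuous and strictly positive, so it is bounded below on compacts, and for any compact $K\times L$ we get $N(K\times L)\le (\min_L r)^{-1}\int_K r(u)\,du<\infty$.

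\textbf{Step 4 (normalization).} Substituting into the first marginal in \eqref{eq:vacmeasure} gives $c\,r(u)\,du=r(u)\,du$. Since $r>0$, this yields $c=1$. Hence $N=I$ and the measure is the diagonal $\delta(u-v)\,du$. The final sentence of the corollary is a restatement: any entrywise-positive vacuum-normalized modular-invariant kernel, however singular or a priori unbounded, falls in this class.

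\textbf{Main obstacle.} The step I expect to be the only delicate one is the passage in Step~3 from equality of operators back to equality of measures. It needs two things: the bilinear form on $C_c\times C_c$ determines a positive Radon measure on the open quadrant uniquely, and the diagonal support is captured correctly even though the diagonal has Lebesgue measure zero in the plane. I would handle this by testing against product functions and invoking uniqueness in Riesz's theorem.
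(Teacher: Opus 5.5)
Your proposal is correct and follows the paper's proof essentially step for step: Theorem~\ref{thm:layer2a} gives $N=cI$, density of product test functions plus uniqueness of the Radon measure identifies $N$ as $c\,\delta(u-v)\,du$, and the first marginal with $r>0$ forces $c=1$. Your local-finiteness check is the same one the paper gives in the proof of Proposition~\ref{prop:schur}, and the observation that $c\ge0$ is harmless but not needed.
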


The proposition supplies boundedness; the corollary uses modular invariance to
supply diagonality. Both equations in \eqref{eq:vacmeasure} are essential, and
neither positivity nor the Schur estimate alone makes a kernel diagonal.

For the explicit regular classes, the same final vacuum step gives the following
two corollaries.

\begin{corollary}[scalar-diagonal, finite branch]\label{cor:A}
Let $N=c_0I+\sum_{n\in F}D_{a_n}\tau_n$ be a finite-branch kernel as in
Theorem~\ref{thm:C} (any tier (i)--(iii)) with $[N,\tS]=0$. Then $N=c_0I$, and the
vacuum normalization $Nr=r$ forces $c_0=1$; i.e.\ $N=I$.
\end{corollary}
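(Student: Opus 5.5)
The plan is two steps: invoke Theorem~\ref{thm:C} to remove all branches, then apply the vacuum normalization to the scalar operator that remains. Since $N$ is exactly of the form \eqref{eq:Nfinite}, with a finite branch set $F$, a scalar diagonal $c_0$, and coefficients in one of the tiers (i)--(iii), and since $[N,\tS]=0$ is assumed in the local kernel/distributional sense, Theorem~\ref{thm:C} applies directly. It gives $a_n=0$ a.e.\ on each active domain $\{u+n>0\}$. Outside the active domain the operator $D_{a_n}\tau_n$ vanishes identically because of the indicator in \eqref{eq:tau}. Hence every branch term is the zero operator, and $N=c_0I$ as an operator, not merely as a kernel up to null sets.

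The second step is the vacuum equation. A priori $Nr$ is not meaningful for a general finite-branch kernel, since $r\notin L^2$ by \eqref{eq:rrow}. Once $N=c_0I$ has been established, however, $Nr$ is the pointwise product $c_0 r(u)$, exactly as in the discussion preceding Proposition~\ref{prop:schur}. Even before the first step, $Nr$ makes sense pointwise: each $D_{a_n}\tau_n r$ is $a_n(u)r(u+n)$, which is locally bounded. So the equation can be read in either order without ambiguity.

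The equation $Nr=r$ then reads $c_0 r(u)=r(u)$ for a.e.\ $u>0$. Because $r(u)>0$ strictly for all $u>0$ by \eqref{eq:rrow}, we may divide at any single point of positive density and obtain $c_0=1$, so $N=I$. The column condition $rN=r$ is not needed, but it is consistent with this conclusion.

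I expect no real obstacle here, since all analytic difficulty is contained in Theorem~\ref{thm:C}. The only point needing care is to check that the sense in which $[N,\tS]=0$ is imposed matches the hypothesis of that theorem, namely equality of kernels paired against $C_c^\infty$ test functions. One must also confirm that the vacuum equation is interpreted pointwise and not as an $L^2$ identity.
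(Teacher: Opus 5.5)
Your proposal is correct and follows the paper's route: Theorem~\ref{thm:C} kills every branch, so $N=c_0I$, and then the pointwise equation $c_0r=r$ together with $r>0$ forces $c_0=1$. Your extra remark is true but not needed: for a finite-branch $N$ with locally bounded coefficients, $Nr$ is already defined pointwise even before the branches are removed.
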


\begin{corollary}[variable-diagonal, tame]\label{cor:B}
Let $N=D_m+\sum_{n\neq0}D_{a_n}\tau_n$ be tame (Definition~\ref{def:tame}) with
$[N,\tS]=0$. By Theorem~\ref{thm:E}, $N=D_m$ with $m$ constant; vacuum normalization
$Nr=r$ then gives $m\equiv1$, i.e.\ $N=I$. This is the more genuinely CIZ-like
statement.
\end{corollary}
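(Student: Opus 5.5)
The statement to prove is Corollary~\ref{cor:B}, and the plan is a two-step reduction: rigidity first, then the vacuum normalization. First, the hypotheses are exactly those of Theorem~\ref{thm:E}: $N=D_m+\sum_{n\neq0}D_{a_n}\tau_n$ with $m$ and every $a_n$ satisfying \eqref{eq:T1} and the tail condition \eqref{eq:T2}. Two preliminary checks are needed.
\begin{itemize}
\item $N\in\Comm(T)$. Each $\tau_n$ commutes with the multiplication operator $T$ because $e^{2\pi i(u+n)}=e^{2\pi iu}$, and each $D_{a_n}$ commutes trivially. By \eqref{eq:T2} the series converges in operator norm, so the sum commutes as well.
\item $[N,\tS]=0$, which is assumed.
\end{itemize}
Theorem~\ref{thm:E} then gives $a_n\equiv0$ for every $n\neq0$ and $m\equiv c$ constant, so $N=cI$.

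Second, we impose the vacuum normalization. As noted in Section~\ref{sec:scope}, $r\notin\mathcal H_u$, so $Nr$ is not a priori defined for a general bounded operator. Here, however, $N=D_m$ is a multiplication operator, so $Nr$ is meaningful pointwise: $(Nr)(u)=m(u)r(u)=c\,r(u)$. The condition $Nr=r$ reads $c\,r(u)=r(u)$ for a.e.\ $u>0$. Since $r(u)>0$ for every $u>0$ by \eqref{eq:rrow}, we can divide and get $c=1$, i.e.\ $m\equiv1$ and $N=I$. The column condition $rN=r$ is then automatic and is not needed.

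The only delicate point is the order of operations. The equation $Nr=r$ must be applied after the reduction to a multiplication operator. Before that reduction, the infinite branch sum acting on the exponentially growing row $r$ need not converge. Under \eqref{eq:T2} each $a_n$ is only bounded, while $r(u+n)/r(u)$ grows like $e^{c\,n/\sqrt u}$. So we should not read the vacuum equation on the original series. Once Theorem~\ref{thm:E} is in hand, this obstacle disappears.
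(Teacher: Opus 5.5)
Your proof is correct and is essentially the paper's own argument: Theorem~\ref{thm:E} reduces $N$ to $cI$, and only afterwards is $Nr=r$ read pointwise, with $r>0$ forcing $c=1$; your check that $N\in\Comm(T)$ via norm convergence under \eqref{eq:T2} is a harmless addition. One small aside: for fixed $u$ and large $n$ the ratio $r(u+n)/r(u)$ grows like $e^{2\pi(b+b^{-1})\sqrt{n}}$ rather than $e^{cn/\sqrt u}$ (the latter holds only when $n\ll u$), though your point that the vacuum equation should not be imposed on the raw branch series is unaffected.
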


The proofs of Proposition~\ref{prop:schur} and Corollary~\ref{cor:vacpos} are
collected in Appendix~\ref{app:sector}.

\begin{remark}[both equations are needed, and the commutant theorem is still needed]\label{rem:schursharp}
One-sided normalization is not enough: $N(du,dv)=r(u)\,du\otimes\mu(dv)$ with
$\mu\ge0$, $\int r\,d\mu=1$ satisfies the first equation of \eqref{eq:vacmeasure}
but gives the unbounded rank-one map $f\mapsto(\int f\,d\mu)\,r$ (as $r\notin L^2$).
Conversely, the Schur bound alone never forces diagonality: on a two-point fiber
with weights $r=(1,2)$ the doubly $r$-normalized matrix
$\bigl(\begin{smallmatrix}1/2&1/4\\[1pt]1/4&7/8\end{smallmatrix}\bigr)$
satisfies both equations, has norm exactly $1$, and is not diagonal. Positivity plus
normalization buys \emph{boundedness}; diagonality still comes from the scalar
commutant (Theorem~\ref{thm:layer2a}).
\end{remark}

\begin{remark}[where the measure-level equations come from]\label{rem:crossblock}
Equations \eqref{eq:vacmeasure} are the rigged vacuum--continuum \emph{cross-blocks}
of full genus-one modular invariance for ordinary boundary data. Writing the full
kernel as
$N_{\mathrm{full}}=1\oplus N$ on $\mathrm{vac}\oplus\mathrm{continuum}$ (vacuum
multiplicity $1$; no vacuum--continuum coupling, i.e.\ no non-identity $h=0$ or
$\bar h=0$ states) and the full $S$ with vacuum row/column $r$, the
vacuum--continuum blocks of $[N_{\mathrm{full}},S_{\mathrm{full}}]=0$ are exactly
\eqref{eq:vacmeasure}, the $T$ cross-blocks are trivially satisfied, and the
continuum block is $[N,\tS]=0$. This is a conditional rigged-block statement under
the ordinary block-diagonal ansatz (motivation for the hypotheses of
Proposition~\ref{prop:schur}, in the spirit of Remark~\ref{lem:Zinv}), not a
standalone theorem; boundary data with vacuum--continuum couplings or extra
sectors (the extra-sector and extension directions of Section~\ref{sec:open}) lie
outside it. For the positive class this also reverses the logical order used
elsewhere in this section:
the normalization may be imposed \emph{first} (its rows are monotone integrals of a
positive measure, so no convergence issue arises), and rigidity then follows.
\end{remark}

\begin{remark}[boundary-level scope beyond the proved kernel class]
Corollary~\ref{cor:vacpos} proves diagonal uniqueness for entrywise-positive
Borel-measure genus-one data under the stated block-diagonal vacuum ansatz,
provided both vacuum marginals and modular invariance hold. What remains
conjectural at the boundary level is whether every ordinary/tame topological
boundary at generic irrational $c>25$ is captured by this analytic class and
whether additional categorical extension mechanisms are absent. Here
\emph{ordinary/tame} means built from the continuous Virasoro line category and
its rigged identity sector, excluding arbitrary extra Virasoro spectra. Genus-one
data outside the bounded or positive-measure classes also remain open.
\end{remark}

\subsection{Numerical corroboration}\label{sec:numerics}

The analytic results do not rely on numerics, but the computations provide useful
corroboration and guardrails. They played three roles: discovery of the
1D commutant and branch frequencies $\sqrt{u+n}$, failure of
the tested finite-shift high-energy Weyl-sequence candidates, and evidence for the still-open quantitative
gap $\kappa_T>0$ in regular symbol/tame regimes. The code and figures are
reproducible, with $54$ unit tests.\footnote{A self-contained source archive,
\nolinkurl{virasoro-boundary-rigidity-arxiv-v1-code.tar.gz}, accompanies the arXiv
submission as ancillary material. It contains the package, scripts, tests,
dependencies and exact run instructions. The commutant-SVD table is produced by
\nolinkurl{scripts/symmetrized_commutant.py} (cross-checked against
\nolinkurl{scripts/tinvariant_commutant.py}); the high-energy figure by
\nolinkurl{scripts/highenergy_shift.py}, with the dense-scale and operator-norm
validations in \nolinkurl{scripts/dense_validation.py} and
\nolinkurl{scripts/symbol_fullline.py}; the full suite runs via \texttt{pytest -q}
($54$ tests).}

\begin{figure}[!t]
\centering
\includegraphics[width=0.72\linewidth]{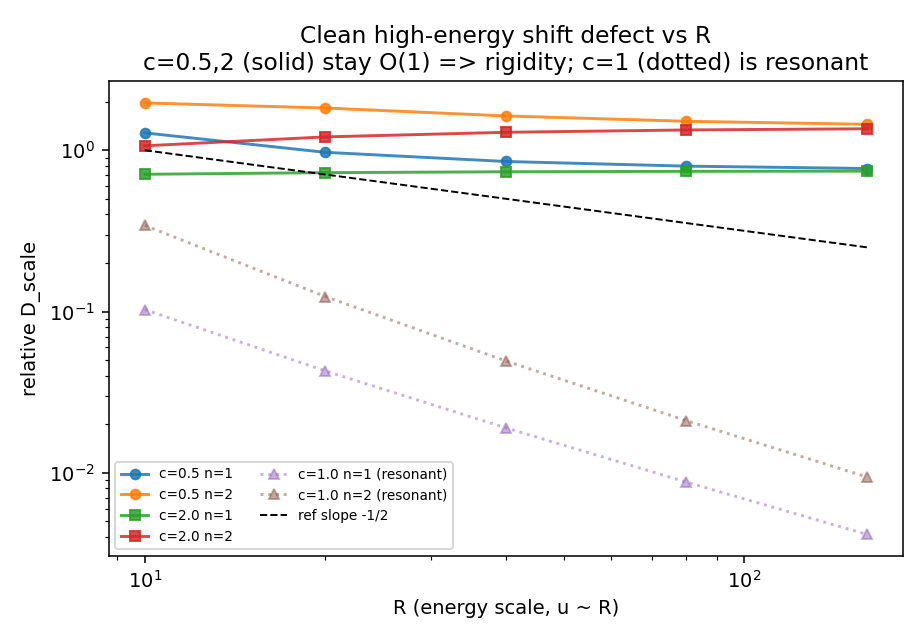}
\caption*{\textbf{Numerical illustration.} Scale-resolved defect of a single high-energy shift
vs.\ energy $R$. The non-resonant probes ($c=0.5,2$, solid) plateau at $O(1)$;
the resonant channel ($c=1$, dotted) decays.}
\end{figure}

\paragraph{Commutant SVD.} We discretize $u=P^2=k+\theta$ ($k=0,\dots,K$,
$\theta\in[0,1)$ on $A$ midpoint or Gauss--Legendre nodes). $T$-exact operators are
$\theta$-block-diagonal fiber matrices $M_{kl}(\theta)$, and $S$ is the symmetric
$b$-free kernel $\tS$. For the linear map $L:M\mapsto[N,\tS]$, the nullspace is
1D and equals the identity in every diagonal, banded and general
$T$-exact subspace tested. The table below shows the general subspace at $A=8$.
The character and symmetrized bases agree on the null dimension, as do midpoint and
Gauss--Legendre quadratures.

\begin{center}
\small
\begin{tabular}{@{}rrcc@{}}
\toprule
$K$ & $\dim M$ (general) & null count at $10^{-8},10^{-10},10^{-12}$ & $\sigma_2/\sigma_{\max}$ \\
\midrule
4  & 200  & $1/1/1$ & $0.851$ \\
6  & 392  & $1/1/1$ & $0.853$ \\
8  & 648  & $1/1/1$ & $0.845$ \\
10 & 968  & $1/1/1$ & $0.836$ \\
12 & 1352 & $1/1/1$ & $0.840$ \\
\bottomrule
\end{tabular}
\parbox{0.96\linewidth}{\small\textbf{Commutant-SVD check.} Symmetrized-basis $T$-exact commutant
of $S$ (general fiber subspace, $A=8$): the nullspace is 1D (the
identity; residual $0$, overlap $1.00000$) and the relative gap
$\sigma_2/\sigma_{\max}$ stays open over the tested range $K=4,\ldots,12$.}
\end{center}

\paragraph{High-energy probes.} In the localized, scale-resolved probes studied
here, a single high-energy shift $\tau_n$ is not an approximate symmetry: the
clean, non-resonant defects
$D_{\mathrm{scale}}(c=0.5,2)$ stay $O(1)$ as the energy $R\to\infty$; only the
resonant $c=1$ channel decays. Finite multi-branch combinations can cancel in
selected scale channels under an $L^2$/coefficient norm, but this does not survive a
dense-scale / wide-band validation. Under the operator (multiplier) norm the
reciprocal defect plateaus at a positive floor that grows with the band, consistent
with Lemma~\ref{lem:fullline}. We deliberately avoid the global Frobenius ratio,
since $\|\tS\|_F\sim\sqrt U$ diverges with the cutoff and manufactures a spurious
decaying trend.

\section{Discussion}\label{sec:open}

The preceding results isolate what genus-one modular invariance does, and does
not, allow in the doubled Virasoro TQFT. We now summarize their implications for
boundary averaging, explain their relation to the Narain case, and clarify the
scope and open directions that remain.

The nondegenerate Virasoro torus pairing is rigid in two nested classes. For an
arbitrary bounded operator, Theorem~\ref{thm:layer2a} gives
$\Comm(\tS,T)=\C I$, and vacuum normalization gives $N=I$. For a positive Borel
measure kernel, Proposition~\ref{prop:schur} derives boundedness from the two
vacuum marginals, so Corollary~\ref{cor:vacpos} again gives the diagonal measure.
The graph, finite-branch and tame theorems provide independent elementary
mechanisms in explicit regular sectors. Thus diagonality is not merely a convenient
ansatz: it is the only genus-one pairing in these classes compatible with the
nondegenerate modular data. The representation-theoretic step is a corollary of
Cowling--Steger; the new physics statement is the resulting obstruction to
manufacturing a non-diagonal torus ensemble from pairings in either class.

The Narain average clarifies the scope of this obstruction. Different lattice
data give distinct torus kernels, and averaging them reproduces the
Chern--Simons/$U(1)$ gravity answer~\cite{2006.04855,2006.04839,Yu:2026gdf}.
On a fixed continuous charge-label space, each lattice pairing is a positive
atomic measure and need not define a bounded $L^2$ operator. Scalarity of the
bounded commutant therefore does not by itself distinguish Virasoro from Narain.
The additional Virasoro conclusion comes from the vacuum. Under the ordinary
block-diagonal ansatz, the two vacuum marginals turn an entrywise-positive Borel
measure into a contraction, after which bounded rigidity forces the diagonal
pairing. A Narain-like Virasoro construction, if one exists, must therefore use
genus-one data outside both the bounded and positive vacuum-marginal classes or
use information not captured by the genus-one pairing.

Our result therefore answers the ordinary genus-one bounded-pairing part of the
\S5.2 proposal of~\cite{Yu:2026gdf}; it does not disprove the broader
Virasoro-TQFT/3D-gravity ensemble program. That program may depend on data not
visible in the torus partition function, on enlarged chiral sectors, or on genuinely
generalized boundary data.

For every boundary in the stated ordinary sector whose vacuum-normalized,
modular-invariant genus-one pairing defines a bounded operator, that operator is
$I$, so any normalized average of such pairings still returns $I$. We do
\emph{not} prove that every conceivable topological boundary has bounded genus-one
data.
Boundedness is a motivated hypothesis on the individual inputs. In the
entrywise-positive, block-diagonal kernel ansatz it is stronger than necessary:
the vacuum equations derive the contraction bound. Outside that positive
measure-kernel setting, signed, complex or non-kernel distributional genus-one data
remain possible.

This input/output distinction matters for gravity. A distributional
Maloney--Witten sum over modular images is an \emph{output} of the gravitational
average; its non-normalizability does not by itself provide distinct individual
Virasoro genus-one data to average. Such distinct inputs would have to evade the
stated bounded or positive-kernel hypotheses, or differ in information invisible at
genus one.

The theorem concerns a necessary torus-level condition. It neither constructs nor
classifies topological boundaries, and it does not impose the
Frobenius--Cardy--sewing or Virasoro $6j$ constraints~\cite{2411.07285}. The main
open directions are:
\begin{enumerate}
\item A rigorous operator-level proof or disproof of $\kappa_T>0$ (uniform
operator--symbol error control of Lemma~\ref{lem:fullline}, Appendix~\ref{app:gap}).
\item Signed or complex generalized kernels, non-kernel
Poincar\'e-series-type functionals, and arbitrary rigged kernels not controlled by
the hard-edge theorem. The positive measure-kernel case satisfying both vacuum
marginals and modular invariance is already closed.
\item Vacuum--continuum couplings, extra degenerate sectors, nondegenerate
$W$-type extensions and conformal embeddings outside the continuous/tame line
category.
\item The higher-genus and categorical upgrade: Frobenius algebra data,
Cardy/sewing consistency, Virasoro $6j$ symbols and the eventual classification
of actual topological boundaries.
\item The role of these additional data in AdS$_3$ ensemble holography.
\end{enumerate}

\addsec{Acknowledgements}
The author thanks Scott Collier, Elliott Gesteau, Yikun Jiang, and Justin Kulp for
helpful discussions. The author is particularly grateful to Anatoly Dymarsky for
pointing out an incorrect physical claim in the first version of this paper. The
author thanks Anthropic for providing access to Claude through its Max plan during
the early stages of this project. The numerical components of this work were
developed with assistance from GPT and Claude. This work was partially supported
by NSF grant PHY-2310588.

\appendix
\section{Metaplectic proof of the bounded commutant}\label{app:weilproof}

We spell out the representation-theoretic input behind
Theorem~\ref{thm:layer2a}. Under $x=\sqrt2\,P$,
$L^2(\R_+,dP)\cong L^2_{\mathrm{even}}(\R)$ carries $T=\rho(\sfT)$ (multiplication by
$e^{i\pi x^2}$, up to the constant Virasoro phase) and
$\tS=\rho(\sfS)|_{\mathrm{even}}$ (the even Fourier transform), where
$\rho=\omega_{\mathrm{even}}$ is the even oscillator representation of $Mp(2,\R)$,
\begin{equation}
\sfT=\begin{pmatrix}1&1\\0&1\end{pmatrix},\qquad
\sfS=\begin{pmatrix}0&-1\\1&0\end{pmatrix}.
\end{equation}
(In the Lie-theoretic literature these matrices are written $n(1)$ and $w$.)
The oscillator representation is a genuine unitary representation of $Mp(2,\R)$ and
splits into the irreducible even and odd pieces~\cite{Folland:1989,Howe:1988,HoweTan:1992}.
The residual metaplectic and Virasoro phases are central scalars, hence irrelevant for
commutants. Since $\sfT$ and $\sfS$ generate $SL(2,\Z)$, the chosen lifts
$\rho(\sfT),\rho(\sfS)$ generate the lattice preimage $\widetilde\Gamma$ \emph{up to
central scalars}; as central scalars lie in the commutant of every operator, the
subgroup they generate and the full preimage have the same commutant, so
\begin{equation}
\Comm(\tS,T)=\langle\rho(\sfT),\rho(\sfS)\rangle'=\rho(\widetilde\Gamma)',
\end{equation}
where $\widetilde\Gamma$ is the inverse image of $SL(2,\Z)$ in $Mp(2,\R)$.

The even oscillator representation is not square-integrable. In the Schrodinger
model, take $\varphi_0(x)=2^{1/4}e^{-\pi x^2}$ (unit norm) and
$a(s)=\operatorname{diag}(e^s,e^{-s})$, for which
$\omega_{\mathrm{even}}(a(s))\varphi(x)=e^{s/2}\varphi(e^s x)$. Then
\begin{equation}
\langle\omega_{\mathrm{even}}(a(s))\varphi_0,\varphi_0\rangle
=\sqrt2\,e^{s/2}(1+e^{2s})^{-1/2}\sim\sqrt2\,e^{-s/2}\qquad(s\to\infty).
\end{equation}
The Gaussian spans a 1D $K$-type. Hence for
$k_1,k_2\in K$ the matrix coefficient at $k_1a(s)k_2$ differs from the displayed
coefficient only by phases, and its modulus is bi-$K$-invariant. The Haar density
on the positive chamber is $\sim e^{2s}\,ds$, so the full $KAK$ integral is
incompatible with $L^p(G)$ for $p\le4$; in particular the coefficient is not in
$L^2(G)$, hence $\omega_{\mathrm{even}}$ is not square-integrable.

\begin{theorem}[Cowling--Steger lattice restriction; Theorem~A of Bekka~\cite{Bekka:lattices}, after~\cite{CowlingSteger:1991}]\label{thm:CoS}
Let $G$ be a connected simple Lie group with finite center, $\Gamma<G$ a lattice,
and $\pi$ an irreducible unitary representation of $G$ that is not square-integrable.
Then the restriction $\pi|_\Gamma$ is irreducible.
\end{theorem}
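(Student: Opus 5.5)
I would not re-derive the full analytic machinery. Instead I plan to follow Bekka's route, which reduces Theorem~\ref{thm:CoS} to three ingredients: Frobenius reciprocity for the induced representation, the Howe--Moore/spectral-gap decay of $L^2_0(G/\Gamma)$, and the characterization of square-integrable representations as subrepresentations of the regular representation $\lambda_G$.

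\emph{Step 1 (reformulation).} Let $A$ be a bounded operator on $\mathcal H_\pi$ commuting with $\pi(\Gamma)$. I will turn $A$ into a $G$-intertwiner
\[
\Phi_A:\mathcal H_\pi\to \operatorname{Ind}_\Gamma^G(\pi|_\Gamma),\qquad
(\Phi_A\xi)(g)=A\,\pi(g)^{-1}\xi .
\]
Here $\operatorname{Ind}_\Gamma^G(\pi|_\Gamma)$ is realized as $\Gamma$-equivariant $\mathcal H_\pi$-valued functions on $G$. Finite covolume of $\Gamma$ makes $\Phi_A\xi$ square-integrable over $G/\Gamma$. The untwisting isomorphism $\operatorname{Ind}_\Gamma^G(\pi|_\Gamma)\cong\pi\otimes L^2(G/\Gamma)$ then identifies the commutant $\pi(\Gamma)'$ with $\operatorname{Hom}_G(\pi,\pi\otimes L^2(G/\Gamma))$. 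Splitting $L^2(G/\Gamma)=\C\oplus L^2_0(G/\Gamma)=:\C\oplus\sigma$, the $\C$-component gives $\operatorname{Hom}_G(\pi,\pi)=\C$ by Schur's lemma for $G$. So it suffices to prove
\[
\operatorname{Hom}_G(\pi,\pi\otimes\sigma)=0 .
\]

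\emph{Step 2 (iterating).} Suppose a nonzero intertwiner $V:\pi\to\pi\otimes\sigma$ exists. By irreducibility of $\pi$ it is a multiple of an isometry, so $\pi\subset\pi\otimes\sigma$. Tensoring both sides with $\sigma$ and iterating gives
\[
\pi\subset\pi\otimes\sigma^{\otimes k}\quad\text{for every }k .
\]
Now I invoke the decay of $\sigma$. For $G$ simple with finite center and $\Gamma$ a lattice, $\sigma$ has no invariant vectors and has a spectral gap. Hence $\sigma$ is strongly $L^{p}$ for some finite $p$, meaning a dense set of its matrix coefficients lies in $L^{p}(G)$. For $SL(2,\R)$ and $SL(2,\Z)$ this is the Selberg/Kim--Sarnak type bound, and in general it is Cowling's theorem. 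Taking $k\ge p/2$ makes $\sigma^{\otimes k}$ strongly $L^{2}$. By the Cowling--Haagerup--Howe criterion, $\sigma^{\otimes k}$ is then contained in a multiple $\infty\cdot\lambda_G$. Fell absorption, $\pi\otimes\lambda_G\cong\dim(\pi)\cdot\lambda_G$, then gives $\pi\subset\infty\cdot\lambda_G$.

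\emph{Step 3 (conclusion).} An irreducible unitary representation contained in a multiple of $\lambda_G$ has square-integrable matrix coefficients, so it belongs to the discrete series. This contradicts the hypothesis on $\pi$. Hence $\pi(\Gamma)'=\C$, and $\pi|_\Gamma$ is irreducible. In our application the hypothesis is supplied by the Gaussian computation above.

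The main obstacle is the quantitative decay in Step 2: showing that $L^2_0(G/\Gamma)$ is strongly $L^p$ for a finite $p$. This needs a spectral gap, which for non-cocompact lattices rests on the absence of small eigenvalues and on the explicit structure of Eisenstein series. I would cite it rather than reprove it. A secondary technical point is the Frobenius-reciprocity identification in Step 1 for non-uniform $\Gamma$. There I would check measurability of $\Phi_A$ and the $L^2$ estimate $\|\Phi_A\xi\|^2=\operatorname{vol}(G/\Gamma)\|A\xi\|^2$, which holds because $\pi(g)$ is unitary.
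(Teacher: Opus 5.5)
The paper does not prove Theorem~\ref{thm:CoS}. It quotes it as Theorem~A of Bekka and only checks the hypotheses for $G=Mp(2,\R)$: connectedness, simple Lie algebra, finite center $\Z/4$, and that the preimage $\widetilde\Gamma$ of $SL(2,\Z)$ is a lattice. It then establishes non-square-integrability of $\omega_{\mathrm{even}}$ via the Gaussian coefficient.

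Your proposal reconstructs a proof instead, and the outline is correct.
\begin{itemize}
\item The map $A\mapsto\Phi_A$ reduces irreducibility of $\pi|_\Gamma$ to $\operatorname{Hom}_G(\pi,\pi\otimes L^2_0(G/\Gamma))=0$. Only injectivity of $A\mapsto\Phi_A$ is needed, not a full identification. Injectivity follows from continuity of $g\mapsto\pi(g)A\pi(g)^{-1}\xi$ at $g=e$.
\item The tensor-power trick, strong $L^p$ decay of $L^2_0(G/\Gamma)$, the $L^2$ containment criterion and Fell absorption then force $\pi\subset\infty\cdot\lambda_G$. This contradicts the hypothesis that $\pi$ is not square-integrable.
\end{itemize}

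Your route makes the dependence transparent. The only deep input is the uniform decay of $L^2_0(G/\Gamma)$, and in the paper's application that input is mild. Since $\widetilde\Gamma$ contains the center $\Z/4$, one has $G/\widetilde\Gamma=PSL(2,\R)/PSL(2,\Z)$. Its $L^2_0$ is tempered, because the modular surface has no exceptional Laplace eigenvalues. So already $k=2$ gives $\sigma^{\otimes 2}\subset\infty\cdot\lambda_G$.

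The paper's route gives brevity and a statement valid for every lattice in every simple $G$. In that generality, for rank-one groups without property (T), the step from spectral gap to strong $L^p$ does not follow from a uniform bound. It needs a lattice-specific spectral gap together with the structure of the non-tempered dual, as you acknowledge.

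One step needs fixing. The identity $\|\Phi_A\xi\|^2=\operatorname{vol}(G/\Gamma)\|A\xi\|^2$ is false. The function $g\mapsto\|A\pi(g)^{-1}\xi\|$ is right-$\Gamma$-invariant, so it descends to $G/\Gamma$, but it is not constant. Constancy for all $\xi$ would say $A^*A\in\pi(G)'$, which is essentially what you are trying to prove. What holds, and is all you need, is the bound $\|\Phi_A\xi\|^2\le\operatorname{vol}(G/\Gamma)\|A\|^2\|\xi\|^2$.
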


This is precisely Theorem~A of Bekka, \emph{J.\ Funct.\ Anal.}\ \textbf{143} (1997),
33--41~\cite{Bekka:lattices}, which gives the Cowling--Steger
restriction theorem~\cite{CowlingSteger:1991} in exactly this form.
Theorem~\ref{thm:CoS} is stated for ordinary unitary representations of Lie groups and
lattices; no projective or central-character variant is needed here. The group
$Mp(2,\R)$ is connected, has simple Lie algebra, and has finite center $\Z/4$,
so it satisfies the group hypothesis. Because the center $\Z/4$ is \emph{finite},
square-integrability and square-integrability-mod-center coincide, so the
matrix-coefficient test against Haar measure on $G$ (the $\sim e^{2s}\,ds$ density
used above) is the correct one; no quotient by the center is required. The inverse
image $\widetilde\Gamma$ of $SL(2,\Z)$ is a lattice under the finite covering map.
Applying Theorem~\ref{thm:CoS} gives that
$\omega_{\mathrm{even}}|_{\widetilde\Gamma}$ is irreducible, and Schur's lemma gives
$\rho(\widetilde\Gamma)'=\C\cdot I$.

\section{Proofs of the elementary theorems and the positive-kernel proposition}\label{app:sector}

\subsection{The diagonal lemma}\label{sec:diag}

\begin{lemma}[diagonal rigidity]\label{lem:diag}
If $N=D_m$ is diagonal, $(Nf)(u)=m(u)f(u)$ with $m\in L^\infty$, and $[N,\tS]=0$,
then $m$ is a.e.\ constant.
\end{lemma}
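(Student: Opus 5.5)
The plan is to rewrite the commutation $[D_m,\tS]=0$ as a kernel identity and then show that it forces $m(u)=m(v)$ on a set of full measure in $(0,\infty)^2$. By \eqref{eq:Stilde}, the commutator has kernel
\[
[D_m,\tS](u,v)=\bigl(m(u)-m(v)\bigr)\,\sqrt2\,\frac{\cos(4\pi\sqrt{uv})}{(uv)^{1/4}} ,
\]
and this kernel vanishes as a distribution on $(0,\infty)^2$.

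To justify that reading, I would first note that $m\in L^\infty$ makes $D_m$ bounded, so $[D_m,\tS]$ is a bounded operator. Pairing it with $g\otimes f$, where $f,g\in C_c^\infty((0,\infty))$, gives
\[
\iint \overline{g(u)}\,\bigl(m(u)-m(v)\bigr)\,\tS(u,v)\,f(v)\,du\,dv=0 .
\]
This integral is absolutely convergent: the kernel is locally bounded away from the axes and the supports are compact, so Fubini applies. Since products $g\otimes f$ span a dense subspace of test functions, the locally integrable function $(m(u)-m(v))\tS(u,v)$ vanishes almost everywhere on $(0,\infty)^2$.

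Next, the cosine factor vanishes only on the curves $4\pi\sqrt{uv}\in\pi/2+\pi\Z$, that is, on the hyperbolas $uv=\mathrm{const}$. This is a countable union of null sets in the plane. Off it we may divide, so $m(u)=m(v)$ for almost every pair $(u,v)$.

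Finally, Fubini gives a fixed $v_0$ such that $m(u)=m(v_0)$ for almost every $u$. Hence $m$ is a.e. constant.

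The only delicate step is the passage from the operator identity to an a.e. kernel identity, because $\tS$ is defined by an oscillatory, non-absolutely-convergent integral on all of $L^2$. I would handle it by testing only against compactly supported smooth functions, where the integrals converge absolutely. If needed, I would approximate $m$ on compacts or mollify, which is unnecessary here since $m\in L^\infty_{\mathrm{loc}}$ suffices.

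An alternative cross-check also works. Because $\tS^2=I$, the condition $[D_m,\tS]=0$ says $D_m=\tS D_m\tS$. A multiplication operator commuting with the Hankel-type cosine transform that is invariant under dilations behaves like a Fourier multiplier and a multiplication operator at once. Only constants are both, but the kernel argument above is more direct.
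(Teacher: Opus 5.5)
Your argument is correct and is essentially the paper's proof: the commutator kernel is $(m(u)-m(w))\tS(u,w)$, the cosine factor vanishes only on the null countable union of hyperbolas $\{\cos(4\pi\sqrt{uw})=0\}$, so $m(u)=m(w)$ a.e.\ and Fubini makes $m$ constant. You go beyond the paper's one-line proof by justifying, through compactly supported test pairings, the passage from the operator identity to an a.e.\ kernel identity, and by spelling out the final Fubini step; your closing ``cross-check'' is imprecise, since $\tS$ intertwines a dilation with its inverse rather than commuting with dilations, but it plays no role in the argument.
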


\begin{proof}
$[D_m,\tS](u,w)=(m(u)-m(w))\,\tS(u,w)$, and
$\tS(u,w)=\sqrt2\cos(4\pi\sqrt{uw})/(uw)^{1/4}$ vanishes only on the measure-zero set
$\{\cos(4\pi\sqrt{uw})=0\}$; hence $m(u)=m(w)$ for a.e.\ $(u,w)$, so $m$ is a.e.\
constant.
\end{proof}

\begin{remark}[logical order]\label{rem:order}
The lemma is applied \emph{only} to a purely diagonal operator. For a general
$N=D_m+\sum_{n\neq0}D_{a_n}\tau_n$ the correct order is
\begin{equation}
[N,\tS]=0\ \Rightarrow\ a_n=0\ (n\neq0)\ \Rightarrow\ N=D_m\ \Rightarrow\
[D_m,\tS]=0\ \Rightarrow\ m\ \text{const}.
\end{equation}
One must \emph{not} infer $[D_m,\tS]=0$ directly from $[N,\tS]=0$: the diagonal and
off-diagonal parts of the commutator can cancel each other. The off-diagonal branches
are killed first (Sections~\ref{sec:hardedge}, \ref{sec:E}), and only then is the
diagonal lemma invoked.
\end{remark}

\subsection{Proof of Theorem~\ref{thm:A} (graph rigidity)}\label{app:graphproof}

\begin{proof}[Proof of Theorem~\ref{thm:A}]
$U_\varphi T=TU_\varphi$ means $e^{2\pi i\varphi(P)^2}f(\varphi(P))=e^{2\pi iP^2}f(\varphi(P))$
for all $f$ and a.e.\ $P$, so $\varphi(P)^2-P^2\in\Z$ a.e. In the
squared-momentum variable $u=P^2$, set $G(u)=\varphi(\sqrt u)^2$. Then
$G(u)-u\in\Z$ a.e., so $G$ preserves
$\theta:=u\bmod1$ and maps the fiber over each $\theta$ into itself. Pushing
$\rho_0(P)\,dP$ forward by $P\mapsto u=P^2$ gives $w(u)\,du$ with
$w(u)=\rho_0(\sqrt u)/(2\sqrt u)$, which disintegrates over $\theta\in[0,1)$ as
\begin{equation}\label{eq:disint}
\int f(u)\,w(u)\,du=\int_0^1\Bigl(\sum_{k\ge0,\,k+\theta>0}f(k+\theta)\,w(k+\theta)\Bigr)d\theta,
\end{equation}
i.e.\ the fiber measure over $\theta$ is the purely atomic
$\sum_k w(k+\theta)\,\delta_{k+\theta}$. As $G$ is an invertible measure-preserving
transformation preserving each fiber, the standard Rohlin disintegration theorem
\cite{Bogachev:2007} applied to the fiber decomposition gives, for a.e.\ $\theta$, a
bijection $\sigma_\theta$ of the atom set
$\{k+\theta\}_k$ with $w(\sigma_\theta(k)+\theta)=w(k+\theta)$. But $w$ is
\emph{strictly increasing}. Indeed, the numerator of
\begin{equation}
\frac{d}{dP}\left[\frac{\sinh(2\pi bP)\sinh(2\pi P/b)}{P}\right]
\end{equation}
is
\begin{equation}
\begin{aligned}
&\sinh(2\pi bP)\sinh(2\pi P/b)\\
&\quad\times
\bigl[2\pi bP\coth(2\pi bP)+2\pi(P/b)\coth(2\pi P/b)-1\bigr]>0,
\end{aligned}
\end{equation}
where we used $x\coth x>1$ for $x>0$. Thus $u\mapsto w(u)$ is injective and distinct atoms of a
fiber carry distinct masses. A mass-preserving bijection of a set with distinct masses
is the identity; hence $\sigma_\theta=\mathrm{id}$ for a.e.\ $\theta$, so $G(u)=u$ a.e.\
and $\varphi(P)=P$ a.e.
\end{proof}

\subsection{Proof of Theorem~\ref{thm:C} (hard edge)}\label{app:hardedgeproof}

The commutator kernel of a single branch is
\begin{equation}\label{eq:hesingle}
[D_a\tau_n,\tS](u,w)=a(u)\,\mathbf 1_{\{u+n>0\}}\,\tS(u+n,w)
-a(w-n)\,\mathbf 1_{\{w-n>0\}}\,\tS(u,w-n).
\end{equation}

\begin{lemma}[single-branch hard edge]\label{lem:singlebranch}
For $n\neq0$ and $a\not\equiv0$ on $\{u+n>0\}$, $[D_a\tau_n,\tS]\neq0$. No tameness is
assumed; this covers oscillatory coefficients such as $a(u)=\cos(\beta\sqrt u)$.
\end{lemma}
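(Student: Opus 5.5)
The plan is to prove the claim directly from the kernel formula \eqref{eq:hesingle}, using the hard-edge strip described in Section~\ref{sec:hardedge}. By the reflection symmetry of the problem it suffices to take $n>0$. For $n<0$, replace the strip $0<w<n$ below by $0<u<|n|$, on which the first term vanishes instead. Alternatively, take adjoints: $(D_a\tau_n)^*$ is a branch of shift $-n$ with coefficient $\bar a(\cdot-n)$, and $\tS$ is self-adjoint.

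\textbf{Step 1: isolate one term.} Fix $n>0$ and restrict the commutator kernel to the strip $\Sigma=\{(u,w):u>0,\ 0<w<n\}$. There $w-n<0$, so the indicator $\mathbf 1_{\{w-n>0\}}$ kills the second term. Since $u+n>0$ automatically, the kernel on $\Sigma$ reduces to
\begin{equation}
a(u)\,\tS(u+n,w)=\sqrt2\,a(u)\,\frac{\cos\bigl(4\pi\sqrt{(u+n)w}\bigr)}{((u+n)w)^{1/4}} .
\end{equation}

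\textbf{Step 2: make ``$[D_a\tau_n,\tS]=0$'' precise.} I will use the local distributional sense of Section~\ref{sec:setup}. Pair the commutator with test functions $g(u)h(w)$, where $g\in C_c^\infty(0,\infty)$ and $h\in C_c^\infty(0,n)$. Because $a\in L^\infty_{\mathrm{loc}}$ and $\tS$ is locally bounded away from the axes, both compositions $D_a\tau_n\tS$ and $\tS D_a\tau_n$ are given by absolutely convergent integrals against these test functions. Their difference is therefore the integral of the kernel above against $g\otimes h$, justified by Fubini.

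\textbf{Step 3: show this pairing is nonzero.} Vanishing for all such $g,h$ would force $a(u)\tS(u+n,w)=0$ for a.e.\ $(u,w)\in\Sigma$. For each fixed $u$, the zero set of $w\mapsto\cos(4\pi\sqrt{(u+n)w})$ in $(0,n)$ is finite, hence of measure zero. By Fubini, $a(u)=0$ for a.e.\ $u>0$, which is the active domain $\{u+n>0\}$. This contradicts $a\not\equiv0$ there.

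\textbf{Main obstacle.} The only delicate point is Step 2, namely making sure the commutator identity really holds pointwise a.e.\ as an $L^1_{\mathrm{loc}}$ kernel on $\Sigma$ rather than only formally. The concern is that $\tS$ is not absolutely integrable at infinity, so $\tS D_a\tau_n$ applied to a compactly supported function involves the integral $\int a(w'-n)\tS(u,w'-n)f(w')\,dw'$. I would handle this by testing only against compactly supported $f$ in the $w$-variable. The inner integrals are then over compact sets, and no oscillatory or tameness input is needed, which is exactly why coefficients such as $\cos(\beta\sqrt u)$ are covered.
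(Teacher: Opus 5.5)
Your proof is correct and is essentially the paper's argument: restrict to the hard-edge strip $0<w<n$ (resp.\ $0<u<|n|$ for $n<0$), where the indicator kills one term of the commutator kernel. You then conclude via Fubini, using $a\not\equiv0$ and the fact that the zero set of the cosine kernel is null. Your Step~2 tests against $g\otimes h$ with $h\in C_c^\infty(0,n)$; for such $h$ one has $\tau_n h=0$, so only the $D_a\tau_n\tS$ term contributes, and this simply makes explicit the local kernel-level justification that the paper leaves implicit.
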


\begin{proof}
Use \eqref{eq:hesingle}. If $n>0$, restrict to the strip $0<w<n$: there $w-n<0$, the
second term vanishes, and the kernel equals $a(u)\,\mathbf 1_{\{u+n>0\}}\,\tS(u+n,w)$.
If $n<0$, restrict to $0<u<|n|$: there $u+n<0$, the first term vanishes, and the kernel
equals $-a(w-n)\,\tS(u,w-n)$. Either way the surviving term is a product of $a$
(nonzero on a positive-measure set, by hypothesis) and $\tS(\cdot,\cdot)$ (nonzero off
a measure-zero set), hence nonzero on a set of positive measure. So
$[D_a\tau_n,\tS]\neq0$, which is case~(i).
\end{proof}

\begin{proof}[Proof of Theorem~\ref{thm:C}, cases (ii)--(iii)]
By the reflection $n\mapsto-n$, $u\leftrightarrow w$ it suffices to treat an $F$
containing a positive element; set $n_+=\min(F\cap\Z_{>0})$. On the strip $0<w<n_+$
every positive-branch second-family term $a_n(w-n)\mathbf 1_{\{w-n>0\}}$ vanishes
($w-n<0$ for $n\ge n_+$), and the negative-branch arguments $w-n=w+|n|\ge1$ stay away
from the singularity. Multiplying the (a.e.-zero) commutator kernel by $w^{1/4}$, for
a.e.\ fixed $u$,
\begin{equation}\label{eq:AB}
A_u(w)=w^{1/4}B_u(w),\qquad
A_u(w)=\!\!\sum_{n\in F,\,u+n>0}\!\!\gamma_n(u)\cos\!\bigl(c_n(u)\sqrt w\bigr),
\end{equation}
where $\gamma_n(u)=\sqrt2\,a_n(u)/(u+n)^{1/4}$, $c_n(u)=4\pi\sqrt{u+n}$, and $B_u$
collects the negative-branch second-family terms; with $d=\#\{n\in F:u+n>0\}$, $A_u$
is entire in $w$ (each $\cos(c\sqrt w)=\sum_k(-1)^kc^{2k}w^k/(2k)!$) and $B_u\in
C^{d-1}$ near $w=0$.

In the one-sided case (ii) there are no negative branches, so $B_u\equiv0$, hence
$A_u\equiv0$ on the strip and every Taylor coefficient of $A_u$ vanishes; only
$L^\infty_{\mathrm{loc}}$ is used. In the mixed case (iii), if the lowest nonzero
Taylor coefficient of $A_u$ at $w=0$ were $Cw^r$ with $C\neq0$, $0\le r<d$, then
$B_u(w)=A_u(w)/w^{1/4}\sim Cw^{r-1/4}$, whose $r$-th derivative diverges as
$w\to0^+$, contradicting $B_u\in C^{d-1}$. Either way $A_u$ vanishes to order $\ge d$
at $w=0$, giving
\begin{equation}\label{eq:vander}
\sum_{n\in F,\,u+n>0}\gamma_n(u)\,c_n(u)^{2m}=0,\qquad m=0,\dots,d-1.
\end{equation}
The nodes $c_n(u)^2=16\pi^2(u+n)$ are distinct in $n$, so this $d\times d$ Vandermonde
system forces $\gamma_n(u)=0$, i.e.\ $a_n(u)=0$ a.e.\ on $\{u+n>0\}$.
\end{proof}

\begin{remark}[measure hygiene]
The chain is: (i) kernel uniqueness $\Rightarrow$ the commutator kernel is a.e.\ zero;
(ii) Fubini fixes a.e.\ $u$, giving the $w$-direction a.e.\ equality; (iii)
$A_u(w)-w^{1/4}B_u(w)$ is continuous on $w\in(0,n_+)$, so a.e.\ zero there
$\Rightarrow$ identically zero on the interval; (iv) only then does one take $w\to0^+$
for the Taylor / obstruction / Vandermonde step.
After Fubini we fix $u$ outside the finite threshold set $\{-n:n\in F\}$ and outside the
null set where some $a_n$ lacks its chosen Lebesgue representative; the active branch set
$\{n\in F:u+n>0\}$ is then locally constant and the Vandermonde argument applies. The
excluded set is null, so the conclusion $a_n=0$ a.e.\ is unchanged.
\end{remark}

\begin{remark}[sharp regularity, optional]
With the positive hard-edge strip used above, only the coefficients appearing on the
$B_u$ side (the negative branches, since they are evaluated as functions of
$w-n=w+|n|$) need $C^{d-1}$ regularity. The corner-side coefficients $a_n(u)$ enter
\eqref{eq:AB} as constants in $w$ for fixed $u$ and need only $L^\infty$. If the
reflected edge is used instead, the corresponding ``opposite-side'' branches are the
ones requiring this regularity. The uniform $C^{|F|-1}_{\mathrm{loc}}$ hypothesis in
(iii) is the cleaner symmetric statement.
\end{remark}

\subsection{Proofs of Proposition~\ref{prop:schur} and Corollary~\ref{cor:vacpos}}\label{app:schurproof}

\begin{proof}[Proof of Proposition~\ref{prop:schur}]
Local finiteness first: for compacts $K,K'\Subset(0,\infty)$,
\begin{equation}
N(K\times K')\le\bigl(\min\nolimits_{K'}r\bigr)^{-1}\iint_{K\times K'}r(v)\,N(du,dv)
\le\bigl(\min\nolimits_{K'}r\bigr)^{-1}\int_K r(u)\,du<\infty
\end{equation}
by the first equation of \eqref{eq:vacmeasure}, so $N$ is locally finite, hence
Radon on $(0,\infty)^2$, and Tonelli applies to the nonnegative integrands below.
Since $N$ lives on $(0,\infty)^2$, where $0<r<\infty$, we may split
$1=(r(v)/r(u))^{1/2}(r(u)/r(v))^{1/2}$ $N$-a.e.\ and apply the Cauchy--Schwarz
inequality with respect to $N$:
\begin{equation}
\begin{aligned}
\iint|g(u)||f(v)|\,N(du,dv)
&\le\Bigl(\iint|g(u)|^2\tfrac{r(v)}{r(u)}\,N(du,dv)\Bigr)^{1/2}
\Bigl(\iint|f(v)|^2\tfrac{r(u)}{r(v)}\,N(du,dv)\Bigr)^{1/2}\\
&=\|g\|_2\,\|f\|_2,
\end{aligned}
\end{equation}
the last step by \eqref{eq:vacmeasure}: the first equation, tested against
$|g(u)|^2/r(u)$, gives the $g$-factor, and the second, tested against
$|f(v)|^2/r(v)$, gives the $f$-factor. The bounded form extends uniquely from the
dense subspace $C_c$ to $L^2$.
\end{proof}

\begin{proof}[Proof of Corollary~\ref{cor:vacpos}]
Theorem~\ref{thm:layer2a} gives $N=cI$ as an operator. For $f,g\in C_c$ the form of
$cI$ is $c\int\overline{g}f\,du$; finite sums of products $\overline{g(u)}f(v)$ are
uniformly dense on compact rectangles, and a Radon measure is determined by such
pairings, so the measure equals $c\,\delta(u-v)\,du$. The first vacuum equation then
reads $c\,r(u)\,du=r(u)\,du$, and $r>0$ gives $c=1$.
\end{proof}

\section{Full proof of Theorem~\ref{thm:E} (high-energy method)}\label{app:proofs}

We work in the symmetrized realization of Section~\ref{sec:setup}: $u=P^2\in(0,\infty)$,
$\tS$ as in \eqref{eq:Stilde}, and $N=D_m+\sum_{n\neq0}D_{a_n}\tau_n$ as in
\eqref{eq:Nform}. (The finite-branch Theorem~\ref{thm:C} is proved at the hard edge in
Appendix~\ref{app:sector} and needs none of this machinery.)

\paragraph{The commutator kernel.} On the domains where the shifts are defined the
branch shift has kernels $(\tau_n\tS)(u,w)=\tS(u+n,w)$ and
$(\tS\,\tau_n)(u,w)=\tS(u,w-n)$, so
\begin{equation}\label{eq:commkernel}
[N,\tS](u,w)=(m(u)-m(w))\,\tS(u,w)
+\sum_{n\neq0}\bigl[a_n(u)\,\tS(u+n,w)-a_n(w-n)\,\tS(u,w-n)\bigr].
\end{equation}
(In the constant-coefficient case $m\equiv c_0$, $a_n\equiv c_n$, so the diagonal term
vanishes.)

\begin{lemma}[local kernel realization]\label{lem:localkernel}
Let $E=I\times J\Subset(0,\infty)^2$ be a compact rectangle, and write
$K_0(u,w)=(m(u)-m(w))\tS(u,w)$ and
\begin{equation}
K_n(u,w)=a_n(u)\mathbf 1_{\{u+n>0\}}\tS(u+n,w)
-a_n(w-n)\mathbf 1_{\{w-n>0\}}\tS(u,w-n).
\end{equation}
Under \eqref{eq:T2},
\begin{equation}
\sum_{n\neq0}\|K_n\|_{L^2(E)}\le C_E\sum_{n\neq0}\|a_n\|_\infty<\infty.
\end{equation}
Consequently $K_E:=K_0+\sum_{n\neq0}K_n$ converges in $L^2(E)$ and represents the
Schwartz kernel of the bounded commutator $[N,\tS]$ restricted to $E$: for
$f\in C_c^\infty(I)$ and $h\in C_c^\infty(J)$,
\begin{equation}
\langle f,[N,\tS]h\rangle=\iint_E \overline{f(u)}\,K_E(u,w)\,h(w)\,du\,dw .
\end{equation}
In particular, if $[N,\tS]=0$ as a bounded operator, then $K_E=0$ a.e.\ on every such
$E$.
\end{lemma}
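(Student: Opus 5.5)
The plan has three parts: show the series of branch kernels is summable on compact rectangles, identify the sum with the distributional kernel of the bounded commutator $[N,\tS]$, and then read off the vanishing statement. Fix $E=I\times J$ with $I=[u_1,u_2]$, $J=[w_1,w_2]\subset(0,\infty)$.

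The first step is a uniform pointwise bound on each $K_n$ over $E$. For the first term of $K_n$, the indicator forces $u+n>0$. Then $|\tS(u+n,w)|\le\sqrt2\,((u+n)w)^{-1/4}$. Here $w\ge w_1>0$, but $u+n$ may approach $0$ for the finitely many negative $n$ with $-n\in(u_1-1,u_2)$. In that case $(u+n)^{-1/4}$ is still square-integrable in $u$ over $I$, so
\[
\|a_n(u)\mathbf 1_{\{u+n>0\}}\tS(u+n,w)\|_{L^2(E)}\le \sqrt2\,\|a_n\|_\infty\, w_1^{-1/4}\,|J|^{1/2}\Bigl(\int_I (u+n)_+^{-1/2}du\Bigr)^{1/2}.
\]
The last integral is at most $2|I|^{1/2}$, uniformly in $n$, because $\int_{x}^{x+L}t_+^{-1/2}dt\le 2\sqrt L$. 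The second term is handled symmetrically with the roles of $u$ and $w$ exchanged. This gives $\|K_n\|_{L^2(E)}\le C_E\|a_n\|_\infty$ with $C_E$ independent of $n$. Summing and using \eqref{eq:T2} yields the displayed estimate, so $K_E$ converges in $L^2(E)$. The term $K_0$ lies in $L^2(E)$ because $m$ is bounded (\eqref{eq:T1} with $j=0$) and $\tS$ is bounded on $E$.

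The second step identifies $K_E$ with the commutator. For $f\in C_c^\infty(I)$ and $h\in C_c^\infty(J)$, I will compute $\langle f,[N,\tS]h\rangle$ through the finite truncations $N_M=D_m+\sum_{0<|n|\le M}D_{a_n}\tau_n$. Each piece of $N_M$ is a multiplication operator composed with a shift, so $\langle f,D_{a_n}\tau_n\tS h\rangle$ and $\langle f,\tS D_{a_n}\tau_n h\rangle$ are absolutely convergent double integrals. Their kernels are exactly the two terms of $K_n$, by changes of variables $w\mapsto w\mp n$ together with $\tS=\tS^*$ real symmetric. Note that $\tS h$ is defined pointwise, since $h$ is compactly supported away from $0$.

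Next, \eqref{eq:T2} gives $N_M\to N$ in operator norm, so $\langle f,[N_M,\tS]h\rangle\to\langle f,[N,\tS]h\rangle$. On the kernel side, $\sum_{|n|\le M}K_n\to K_E$ in $L^2(E)$, and pairing against $\overline f\otimes h\in L^2(E)$ passes to the limit. This proves the displayed identity.

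The final claim follows from density. If $[N,\tS]=0$, then $K_E$ annihilates all products $\overline f\otimes h$. Their linear span is dense in $L^2(E)$, or one can appeal to uniqueness of distributions on the interior of $E$, so $K_E=0$ a.e.

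The main obstacle I expect is the uniformity of the bound in $n$ near the hard edge $u+n\downarrow0$. Only finitely many $n$ are affected, and the $-1/4$ singularity is $L^2$-integrable, so I would settle this with the explicit integral bound above rather than any oscillatory estimate. No tameness beyond sup norms is needed for this lemma.
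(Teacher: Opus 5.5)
Your proposal is correct and follows the paper's proof essentially step for step. The steps match: a uniform-in-$n$ bound on $\|K_n\|_{L^2(E)}$ from the integrable $x^{-1/2}$ edge singularity of $|\tS|^2$; identification of $K_E$ through finite truncations, which converge in operator norm by \eqref{eq:T2}; and density of the span of the products $\overline f\otimes h$ in $L^2(E)$. Your global estimate $|\tS(x,y)|\le\sqrt2\,(xy)^{-1/4}$, combined with $\int_x^{x+L}t_+^{-1/2}\,dt\le 2\sqrt L$, gives the uniform constant $C_E$ in one stroke, which is slightly cleaner than the paper's split into finitely many near-edge branches and the remaining branches.
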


\begin{proof}
On $E$, the kernel $\tS(x,y)$ is bounded whenever $x,y$ stay in a compact subset of
$(0,\infty)$, and near one axis its square has only the integrable singularity
$x^{-1/2}$ or $y^{-1/2}$. Thus the finitely many branch terms for which $u+n$ or
$w-n$ can approach $0$ on $E$ have finite $L^2(E)$ norm bounded by
$C_E\|a_n\|_\infty$. All remaining active terms keep both arguments away from the axes;
for large positive arguments we use $|\tS(x,y)|\le C_E x^{-1/4}$ or
$C_E y^{-1/4}$, and in particular $|\tS|\le C_E$ on $E$. Hence, uniformly in $n$,
$\|K_n\|_{L^2(E)}\le C_E\|a_n\|_\infty$, proving the $L^2(E)$ summability.

For a finite branch set $F$, the partial commutator has kernel
$K_0+\sum_{n\in F}K_n$ on $E$. The partial operators converge in operator norm because
$\sum_n\|D_{a_n}\tau_n\|\le\sum_n\|a_n\|_\infty<\infty$, while the partial kernels
converge in $L^2(E)$ by the first paragraph. Pairing against
$f\otimes h\in L^2(E)$ and passing to the limit gives the displayed kernel identity.
If $[N,\tS]=0$, all such rank-one pairings vanish. Finite sums of rank-one functions
are dense in $L^2(E)$, hence $K_E=0$ in $L^2(E)$ and therefore a.e.
\end{proof}

\paragraph{Two frequency families.} Fix $n_0\neq0$ and a base point $u$ with
$u+n_0>0$; the functional below integrates over a nearby variable $u'$, so we keep the
$u'$-dependence explicit. Multiplying \eqref{eq:commkernel} (at argument $u'$) by
$w^{1/4}$ and setting $\xi=\sqrt w$, with
$\beta(u'):=4\pi\sqrt{u'}$ and $\alpha_n(u'):=4\pi\sqrt{u'+n}$,
\begin{align}
w^{1/4}\tS(u',w)&=\tfrac{\sqrt2}{u'^{1/4}}\cos(\beta(u')\,\xi),\notag\\
w^{1/4}\tS(u'+n,w)&=\tfrac{\sqrt2}{(u'+n)^{1/4}}\cos(\alpha_n(u')\,\xi)=:A_n(\xi)
&&\text{(family A)},\\
w^{1/4}\tS(u',w-n)&=\tfrac{\sqrt2}{u'^{1/4}}\cos\!\bigl(\beta(u')\sqrt{\xi^2-n}\bigr)
\bigl(\tfrac{\xi^2}{\xi^2-n}\bigr)^{1/4}=:B_n(\xi)
&&\text{(family B, $\xi^2>n$)}\notag,
\end{align}
so that, a.e.,
\begin{equation}\label{eq:families}
w^{1/4}[N,\tS](u',w)=(m(u')-m(\xi^2))\tfrac{\sqrt2\cos(\beta(u')\xi)}{u'^{1/4}}
+\sum_{n\neq0}\bigl[a_n(u')A_n(\xi)-a_n(\xi^2-n)B_n(\xi)\bigr].
\end{equation}
Family A carries the distinct, branch-labeled frequency $\alpha_n(u')$; the diagonal
and all of family B carry instantaneous frequency $\to\beta(u')$. Since
$n_0\neq0\Rightarrow\alpha_{n_0}(u')\neq\beta(u')$, branch $n_0$ is the \emph{only} term
living at frequency $\alpha_{n_0}(u')$.

\paragraph{The tracking-frequency functional.} Fix $g\in C_c^\infty((0,\infty))$,
$g\ge0$, $\int g=1$, $\operatorname{supp}g\subset[a_0,b_0]$ with $0<a_0<b_0$, and a
mollifier $\varphi_\varepsilon(s)=\varepsilon^{-1}\varphi(s/\varepsilon)$,
$\varphi\in C_c^\infty([-1,1])$, $\int\varphi=1$; take $u\notin\Z$ and
$\varepsilon<\tfrac12\operatorname{dist}(u,\Z)$, so that
$\operatorname{supp}\varphi_\varepsilon(\cdot-u)$ avoids $\Z$, and set
$\delta_\varepsilon(u)=\inf_{\operatorname{supp}\varphi_\varepsilon}u'>0$ and
$\Delta_\varepsilon(u)=\inf\{u'+n:u'\in\operatorname{supp}\varphi_\varepsilon(\cdot-u),\,
n\in\Z\setminus\{0\},\,u'+n>0\}>0$ (both bounded below as $\varepsilon\to0$ for
$u\notin\Z$; $u\in\Z$ is null and excluded throughout). For a kernel $K(u',w)$ set
\begin{equation}\label{eq:proj}
P^{u,n_0}_{\varepsilon,T}(K):=\iint\varphi_\varepsilon(u'-u)\,\bigl[w^{1/4}K(u',w)\bigr]_{w=\xi^2}
\,\cos\!\bigl(\alpha_{n_0}(u')\,\xi\bigr)\,\frac{g(\xi/T)}{T}\,d\xi\,du'.
\end{equation}
The test frequency $\alpha_{n_0}(u')$ \emph{tracks the integration variable $u'$};
freezing it at $\alpha_{n_0}(u)$ would extract $0$ (Remark~\ref{rem:E0}). Changing back
$w=\xi^2$, $P^{u,n_0}_{\varepsilon,T}(K)=\iint K(u',w)\,\Theta_{\varepsilon,T}(u',w)\,dw\,du'$
with $\Theta_{\varepsilon,T}(u',w)=\varphi_\varepsilon(u'-u)\,w^{-1/4}\cos(\alpha_{n_0}(u')\sqrt w)\,g(\sqrt w/T)/(2T)$,
which has compact support ($u'\in\operatorname{supp}\varphi_\varepsilon$,
$w\in[a_0^2T^2,b_0^2T^2]$) and lies in $L^2$
($\|\Theta_{\varepsilon,T}\|_2^2=\|\varphi_\varepsilon\|_2^2\,\|g\|_2^2/(2T)<\infty$). On
that region $[N,\tS]$ is bounded ($\tS$ is smooth off the axes), hence in $L^2$, so
\eqref{eq:proj} is a genuine compactly supported test pairing: no Dirac slice of
the non-Hilbert--Schmidt $\tS$ is invoked. In Step~1 below we apply this pairing
first to finite branch sums; the uniform estimate of Lemma~\ref{lem:E2} then gives
absolute convergence of the resulting scalar series.

\begin{lemma}[uniform frequency projection]\label{lem:E1}
Let $K\subset(0,\infty)$ be compact and $g$ as above.
\emph{(i)} Let $\mathrm{ph}_{u'}(\xi)$ be one of: a pure cosine phase
$\alpha_n(u')\xi$ with $n\neq n_0$; the diagonal phase $\beta(u')\xi$; or a family-B
phase $\beta(u')\sqrt{\xi^2-n}$ (fixed $n$, on $\xi^2>n$). Assume the support keeps the
relevant arguments in a compact subset of $(0,\infty)$ (so $u'+n$ stays away from $0$),
the uniform separation $\inf_{u'\in K}|\alpha_{n_0}(u')-\Phi(u')|>0$ holds with
$\Phi=\alpha_n$ resp.\ $\Phi=\beta$, and $m_T$ is a symbol factor uniform in $u'$
($|m_T|\le C_0$, $|\partial_\xi m_T|\le C_1\xi^{-1}$ for large $\xi$). Then
\begin{equation}
\sup_{u'\in K}\Bigl|\tfrac1T\!\int g(\xi/T)\,m_T(\xi)\cos(\alpha_{n_0}(u')\xi)\cos(\mathrm{ph}_{u'}(\xi))\,d\xi\Bigr|\to0,
\end{equation}
covering family-A branches $n\neq n_0$ ($\Phi=\alpha_n$) and family B / the diagonal
($\Phi=\beta$).
\emph{(ii)} $\displaystyle\sup_{u'\in K}\bigl|\tfrac1T\!\int g(\xi/T)\cos^2(\alpha_{n_0}(u')\xi)\,d\xi-\tfrac12\bigr|\to0$.
\end{lemma}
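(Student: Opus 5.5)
Both parts reduce, after a change of variables $\xi=T\eta$, to oscillatory integrals over the fixed compact window $\eta\in[a_0,b_0]$ with large parameter $T$. I would prove them by product-to-sum plus one integration by parts. The key is that all constants depend on $u'$ only through quantities that are continuous on the compact set $K$, which makes the estimates uniform.

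For (ii), write $\cos^2(\alpha_{n_0}(u')\xi)=\tfrac12+\tfrac12\cos(2\alpha_{n_0}(u')\xi)$. The constant part gives exactly $\tfrac12$ because $\int g=1$. For the remainder, substitute $\xi=T\eta$ to get $\tfrac12\int g(\eta)\cos(2\alpha_{n_0}(u')T\eta)\,d\eta$. One integration by parts bounds this by $\|g'\|_1/(4\alpha_{n_0}(u')T)$. Since $\alpha_{n_0}(u')=4\pi\sqrt{u'+n_0}$ is bounded below on $K$ (the hypothesis keeps $u'+n_0$ away from $0$), this tends to $0$ uniformly.

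For (i), write
\begin{equation*}
\cos(\alpha_{n_0}\xi)\cos(\mathrm{ph}(\xi))=\tfrac12\cos\bigl(\alpha_{n_0}\xi-\mathrm{ph}(\xi)\bigr)+\tfrac12\cos\bigl(\alpha_{n_0}\xi+\mathrm{ph}(\xi)\bigr).
\end{equation*}
Each term is $\mathrm{Re}\,e^{i\psi_\pm(\xi)}$ with $\psi_\pm=\alpha_{n_0}\xi\mp\mathrm{ph}$. The claim is that $|\psi_\pm'(\xi)|\ge c>0$ uniformly in $u'\in K$ and $\xi\in[a_0T,b_0T]$ for $T$ large.
\begin{itemize}
\item For the pure cosine phases, $\psi_\pm'=\alpha_{n_0}\mp\Phi$ is constant in $\xi$. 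The difference is bounded below by the assumed separation, and the sum by $\alpha_{n_0}>0$.
\item For the family-B phase, $\mathrm{ph}'(\xi)=\beta\,\xi/\sqrt{\xi^2-n}=\beta+O(\xi^{-2})$ uniformly on $K$. So for $T\ge T_0(K)$, $\psi_\pm'$ is within $\tfrac12$ of the separation of its constant limit.
\end{itemize}
In both cases $\psi_\pm''=O(\xi^{-3})$.

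Then integrate by parts once using $e^{i\psi}=(i\psi')^{-1}\partial_\xi e^{i\psi}$. With amplitude $h(\xi)=T^{-1}g(\xi/T)m_T(\xi)$, the boundary terms vanish by the compact support of $g$. The result is bounded by
\begin{equation*}
\int\Bigl|\partial_\xi\bigl(h/\psi'\bigr)\Bigr|\,d\xi.
\end{equation*}
The three contributions are:
\begin{itemize}
\item the derivative of $g(\xi/T)/T$, which gives $T^{-1}\|g'\|_1\,C_0/c$;
\item the term with $\partial_\xi m_T$, at most $C_1/(a_0T)$ on the support times $\int T^{-1}g(\xi/T)\,d\xi/c=1/c$;
\item the term with $\psi''/\psi'^2$, which is $O(T^{-3})$.
\end{itemize}
All three are $O(T^{-1})$ with constants uniform over $u'\in K$, which gives the claimed uniform convergence to $0$.

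The only delicate point is the family-B case: the uniform lower bound on $|\psi'|$ holds only asymptotically, and one must check that the $O(\xi^{-2})$ error is uniform in $u'\in K$. This follows because $\beta$ is bounded on $K$ and $n$ is fixed. The mild singular factor $(\xi^2/(\xi^2-n))^{1/4}$ is smooth with symbol bounds for $\xi\ge a_0T$, so it can be absorbed into $m_T$.
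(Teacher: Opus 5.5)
Your proposal is correct and is essentially the paper's proof: product-to-sum, a uniform lower bound $|\psi'|\ge\delta=\inf_K\min(|\alpha_{n_0}-\Phi|,\alpha_{n_0}+\Phi)$ on $[a_0T,b_0T]$ with $|\psi''|=O(\xi^{-3})$ (zero for pure cosines), and one integration by parts. This gives an $O(T^{-1})$ average uniformly in $u'\in K$, with (ii) handled via $\cos^2=\tfrac12(1+\cos 2\alpha_{n_0}\xi)$. The only cosmetic adjustment is that $m_T$ may be complex, since it carries the coefficients $a_n$, so you should write $\cos\psi=\tfrac12(e^{i\psi}+e^{-i\psi})$ rather than $\mathrm{Re}\,e^{i\psi}$; the same bound holds for each exponential.
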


\begin{proof}
Write $\cos\cdot\cos=\tfrac12[\cos(\text{sum})+\cos(\text{diff})]$; each combined phase
$\psi$ has $|\psi'|\ge\delta:=\inf_K\min(|\alpha_{n_0}-\Phi|,\,\alpha_{n_0}+\Phi)>0$ on
$\operatorname{supp}g(\cdot/T)=[a_0T,b_0T]$ for $T$ large, and $|\psi''|=O(\xi^{-3})$
(family B) or $0$ (pure cosines). One integration by parts gives
$|\int A_Te^{i\psi}|\le\delta^{-1}\!\int|A_T'|+\delta^{-2}\!\int|A_T||\psi''|=O(1)$
uniformly in $u'\in K$, so the $\tfrac1T$-average $\to0$ uniformly. For (ii),
$\cos^2=\tfrac12(1+\cos2\alpha_{n_0}\xi)$: the oscillatory half $\to0$ uniformly, the
constant half gives $\tfrac12$.
\end{proof}

\begin{lemma}[uniform in $(n,T)$, stable as $\varepsilon\to0$]\label{lem:E2}
There is $C=C(u,n_0,g)$, independent of $n$, of $T$, and of $\varepsilon$ down to
$\varepsilon\to0$, with
\begin{equation}
\bigl|P^{u,n_0}_{\varepsilon,T}([D_{a_n}\tau_n,\tS])\bigr|\le C\,\|a_n\|_\infty
\qquad(n\neq0,\ T\ \text{large}).
\end{equation}
\end{lemma}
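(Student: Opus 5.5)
We must bound $P^{u,n_0}_{\varepsilon,T}$ applied to the single-branch commutator kernel $K_n$ of Lemma~\ref{lem:localkernel}. The plan is to split $K_n$ into its family-A and family-B pieces, exactly as in \eqref{eq:families}. We then estimate each piece by putting absolute values inside the integral \eqref{eq:proj}, so that no oscillation is used and the constant cannot depend on $n$. Throughout, $\int\varphi_\varepsilon(u'-u)\,du'=1$ (up to $\|\varphi\|_1$), so the $u'$-integration costs at most a factor $\|\varphi\|_1$, independently of $\varepsilon$. It therefore suffices to bound the inner $\xi$-integral uniformly for $u'$ in $\operatorname{supp}\varphi_\varepsilon(\cdot-u)$.

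\emph{Family A.} The integrand is $a_n(u')A_n(\xi)\cos(\alpha_{n_0}(u')\xi)\,g(\xi/T)/T$ with $|A_n(\xi)|\le\sqrt2\,(u'+n)^{-1/4}$. For active $n$ we have $u'+n\ge\Delta_\varepsilon(u)$, and $\Delta_\varepsilon(u)$ is bounded below as $\varepsilon\to0$ because $u\notin\Z$ and the support avoids $\Z$. Hence $|A_n|\le\sqrt2\,\Delta_\varepsilon(u)^{-1/4}$ uniformly in $n$. Together with $\tfrac1T\int g(\xi/T)\,d\xi=1$, this gives a bound $C\|a_n\|_\infty$ for the family-A piece.

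\emph{Family B.} The integrand is $a_n(\xi^2-n)B_n(\xi)\cos(\cdots)\,g(\xi/T)/T$, supported on $\xi\in[a_0T,b_0T]$ and $\xi^2>n$. We bound $|a_n(\xi^2-n)|\le\|a_n\|_\infty$ and $|B_n(\xi)|\le\sqrt2\,u'^{-1/4}\bigl(\xi^2/(\xi^2-n)\bigr)^{1/4}$, where $u'^{-1/4}\le\delta_\varepsilon(u)^{-1/4}$. The factor $\xi^2/(\xi^2-n)$ is the only place where uniformity in $n$ can fail.

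For $n<0$ this factor is at most $1$, so that case is immediate. For $n>0$ the factor blows up as $\xi^2\downarrow n$. It is, however, integrable: substituting $s=\xi^2$ gives
\[
\frac1T\int g(\xi/T)\Bigl(\frac{\xi^2}{\xi^2-n}\Bigr)^{1/4}d\xi
\le\frac{\|g\|_\infty}{T}\int_{\max(n,a_0^2T^2)}^{b_0^2T^2}\frac{s^{1/4}}{2\sqrt s\,(s-n)^{1/4}}\,ds .
\]
We split this integral at $s=2n$. On $s\ge2n$ the ratio is at most $2^{1/4}$, and that part contributes $O(1)$. On $n<s<2n$, which matters only when $n\lesssim b_0^2T^2$, the integral is at most $Cn^{-1/4}\int_n^{2n}(s-n)^{-1/4}ds\,n^{-1/4}\sim C\,n^{1/4}$. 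Dividing by $T$ and using $n\le b_0^2T^2$ gives $C\,T^{-1/2}\le C$.

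\emph{Main obstacle.} The key step is this last uniform estimate, i.e.\ the near-threshold singularity of family B when $n$ is comparable to $T^2$. We would double-check it in the variable $\xi$ directly, since the window length is $T$ and the singular factor is $L^1$ with $(\xi-\sqrt n)^{-1/4}$-type blowup near $\xi=\sqrt n$.

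Summing the two families gives $C=C(u,n_0,g)$, depending only on $\Delta_\varepsilon(u)$, $\delta_\varepsilon(u)$, $\|g\|_\infty$, $a_0$, $b_0$ and $\|\varphi\|_1$. All of these are stable as $\varepsilon\to0$. Finally, the pairing is legitimate by Lemma~\ref{lem:localkernel} applied to the compact support of $\Theta_{\varepsilon,T}$.
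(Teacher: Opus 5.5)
Your proof is essentially the paper's: the same family-A/family-B split with absolute values taken inside \eqref{eq:proj}. Family A is controlled by $\Delta_\varepsilon(u)^{-1/4}$, and family B by $\delta_\varepsilon(u)^{-1/4}$ times a uniform bound on the threshold factor $(\xi^2/(\xi^2-n))^{1/4}$. The only difference is how you get that last bound. The paper rescales $\xi=Ty$, $\lambda=n/T^2$ and uses $M(g)=\sup_\lambda\int g(y)\,(y^2/(y^2-\lambda))_+^{1/4}\,dy<\infty$, whereas you split at $s=2n$.

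There is one arithmetic slip in that split. On $n<s<2n$ the integrand is $\tfrac12 s^{-1/4}(s-n)^{-1/4}\le\tfrac12 n^{-1/4}(s-n)^{-1/4}$, so the piece is at most $\tfrac23 n^{1/2}$, not $\sim n^{1/4}$; you used the factor $n^{-1/4}$ twice. After multiplying by $\|g\|_\infty/T$ and using $n\le b_0^2T^2$, this piece is bounded by $\tfrac23 b_0\|g\|_\infty$. That bound is uniform in $n$, $T$ and $\varepsilon$, which is all the lemma needs. It is not $O(T^{-1/2})$, however. For $n\asymp T^2$ the threshold lies inside the window, and this contribution genuinely does not decay.
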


\begin{proof}
The $w^{1/4}$-kernel of the branch is
\begin{equation}
a_n(u')A_n(\xi)-a_n(\xi^2-n)B_n(\xi).
\end{equation}
\emph{Family A (uniform in $n$).} $|a_n(u')A_n(\xi)|\le\sqrt2\,\|a_n\|_\infty(u'+n)^{-1/4}$
and $\tfrac1T\int g(\xi/T)|\cos|\,d\xi\le\|g\|_1$. The prefactor $(u'+n)^{-1/4}$ is largest
on the single near-edge active branch and is bounded there by $\Delta_\varepsilon(u)^{-1/4}$,
uniformly in $n$; integrating $\varphi_\varepsilon$ over $u'$ gives
$\le\sqrt2\,\|g\|_1\,\Delta_\varepsilon(u)^{-1/4}\|a_n\|_\infty$.
\emph{Family B (uniform in $n$).} Substitute $\xi=Ty$, $y\in[a_0,b_0]$,
$\lambda:=n/T^2$:
\begin{equation}
\begin{aligned}
\tfrac1T\Bigl|\int g\,a_n(\xi^2-n)B_n\cos\Bigr|
&\le\tfrac{\sqrt2}{u'^{1/4}}\|a_n\|_\infty
\int_{a_0}^{b_0}g(y)\bigl(\tfrac{y^2}{y^2-\lambda}\bigr)_+^{1/4}dy\\
&\le\tfrac{\sqrt2}{u'^{1/4}}M(g)\|a_n\|_\infty,
\end{aligned}
\end{equation}
$M(g):=\sup_{\lambda\in\R}\int_{a_0}^{b_0}g(y)(y^2/(y^2-\lambda))_+^{1/4}dy<\infty$ (for
$\lambda<0$ the ratio $<1$; for $\lambda>b_0^2$ the support is empty; for
$\lambda\in[a_0^2,b_0^2]$ the only singularity $y=\sqrt\lambda$ gives integrand
$\sim|y-\sqrt\lambda|^{-1/4}$, integrable and uniform on the compact $\lambda$-range).
Integrating $\varphi_\varepsilon$ gives $\le\sqrt2\,\delta_\varepsilon(u)^{-1/4}M(g)\|a_n\|_\infty$.
So $C_\varepsilon=\sqrt2\bigl(\|g\|_1\,\Delta_\varepsilon(u)^{-1/4}+M(g)\,\delta_\varepsilon(u)^{-1/4}\bigr)$,
independent of $n$ and $T$; for $u\notin\Z$ both $\Delta_\varepsilon(u)$ and
$\delta_\varepsilon(u)$ stay bounded below as $\varepsilon\to0$, so $C_\varepsilon$ stays
bounded.
\end{proof}

\begin{proof}[Proof of Theorem~\ref{thm:E}]
\emph{Step 0 (kernel realization).} By Lemma~\ref{lem:localkernel}, on every compact
rectangle away from the axes the formal kernel \eqref{eq:commkernel} converges in
local $L^2$ and represents the bounded commutator $[N,\tS]$. Since $[N,\tS]=0$, this
local kernel is a.e.\ zero; hence
$P^{u,n_0}_{\varepsilon,T}([N,\tS])=0$ for every $\varepsilon,T$ and every base point
$u$ for which the support of $\varphi_\varepsilon(\cdot-u)$ stays inside
$\{u'>0,\ u'+n_0>0\}$.

\emph{Step 1 (fix $\varepsilon$; expand and exchange).} Apply
$P^{u,n_0}_{\varepsilon,T}$ first to finite branch sums. Lemma~\ref{lem:localkernel}
identifies the $L^2$ limit of these local kernels with the kernel of $[N,\tS]$ on
$\operatorname{supp}\Theta_{\varepsilon,T}$. For each fixed $\varepsilon,T$,
Lemma~\ref{lem:E2} gives
\begin{equation}
\sum_{n\neq0}\bigl|P^{u,n_0}_{\varepsilon,T}([D_{a_n}\tau_n,\tS])\bigr|
\le C\sum_{n\neq0}\|a_n\|_\infty<\infty,
\end{equation}
with $C$ independent of $T$ and stable as $\varepsilon\to0$. Hence the projected
finite branch sums converge to the projected infinite branch part as an absolutely
convergent scalar series; no pointwise domination of the full non-Hilbert--Schmidt
kernel is needed. Since $P^{u,n_0}_{\varepsilon,T}([N,\tS])=0$, letting
$T\to\infty$ and using the same summable bound for dominated convergence (counting
measure on $n$) gives
\begin{equation}
0=\lim_T P_{\varepsilon,T}([D_m,\tS])+\sum_{n\neq0}\lim_T P_{\varepsilon,T}([D_{a_n}\tau_n,\tS]).
\end{equation}
By Lemma~\ref{lem:E1}: the diagonal (frequency $\beta(u')\neq\alpha_{n_0}(u')$; $m$
tame, so $m(\xi^2)$ is a symbol factor) $\to0$; every branch $n\neq n_0$ $\to0$; and
the family-B half of branch $n_0$ $\to0$. Only the family-A half of branch $n_0$
survives, via \ref{lem:E1}(ii):
\begin{equation}
0=\tfrac{\sqrt2}{2}\int\varphi_\varepsilon(u'-u)\,\frac{a_{n_0}(u')}{(u'+n_0)^{1/4}}\,du'.
\end{equation}

\emph{Step 2 ($\varepsilon\to0$).} $u'\mapsto a_{n_0}(u')(u'+n_0)^{-1/4}\in
L^1_{\mathrm{loc}}$ near $u$; at every Lebesgue point $u$ with $u\notin\Z$,
$0=\tfrac{\sqrt2}{2}\,a_{n_0}(u)(u+n_0)^{-1/4}$, so $a_{n_0}(u)=0$. A.e.\ $u$ is a
Lebesgue point and the excluded integer set $\Z$ is null, so $a_{n_0}=0$ a.e.\ on
$\{u+n_0>0\}$. As $n_0\neq0$ was arbitrary, all off-diagonal branches vanish.

\emph{Step 3 (diagonal, last).} Only now is $N=D_m$, so
$[D_m,\tS]=(m(u)-m(w))\tS=0$ a.e.; $\tS\neq0$ a.e.\ forces $m$ constant
(Lemma~\ref{lem:diag}). Hence $N=c\cdot I$. The operator-ideal corollary follows
because no nonzero scalar identity lies in the stated ideal.
\end{proof}

\begin{remark}[why tracking, not freezing]\label{rem:E0}
With a \emph{frozen} test frequency $\alpha_{n_0}(u)$, the inner $\tfrac1T$-average
against family A of branch $n_0$ is
$\tfrac1T\int g\,\cos(\alpha_{n_0}(u')\xi)\cos(\alpha_{n_0}(u)\xi)\,d\xi
\to\tfrac12\,\mathbf 1_{\{\alpha_{n_0}(u')=\alpha_{n_0}(u)\}}=\tfrac12\,\mathbf 1_{\{u'=u\}}$,
supported on a Lebesgue-null set in $u'$; integrating $\varphi_\varepsilon$ then yields
$0$, so the functional would extract $0$ rather than $a_{n_0}(u)$. Tracking
$\alpha_{n_0}(u')$ matches branch $n_0$ at every $u'$ (Lemma~\ref{lem:E1}(ii)), giving
$\tfrac12\,a_{n_0}(u')(u'+n_0)^{-1/4}$ pointwise, whose mollified average
$\to\tfrac12\,a_{n_0}(u)(u+n_0)^{-1/4}$. The tracking is essential, not cosmetic.
\end{remark}

\paragraph{Seminorm requirements.} Lemma~\ref{lem:E1}'s single integration by parts uses
(T1) with $J=1$ on each $a_n$, on $m$, and on the symbol factor
$(\xi^2/(\xi^2-n))^{1/4}$; Lemma~\ref{lem:E2} and the dominated-convergence exchange
use only the sup-norm tail (T2). No higher or $n$-weighted seminorm enters.

\paragraph{Logical order.} One must \emph{not} infer $[D_m,\tS]=0$ directly from
$[N,\tS]=0$: the projection only shows the diagonal contributes nothing \emph{at
frequency $\alpha_{n_0}$}; the diagonal is forced constant only in Step 3, after every
off-diagonal branch has been killed. Correct order:
$[N,\tS]=0\Rightarrow a_n=0\ (n\neq0)\Rightarrow N=D_m\Rightarrow[D_m,\tS]=0\Rightarrow m\ \text{const}$.

\section{The quantitative layer (\texorpdfstring{$\kappa_T$}{kappa T})}\label{app:gap}

\emph{Apart from two self-contained side statements (Lemma~\ref{lem:fullline} and
Proposition~\ref{prop:hsgap}), this appendix is evidence and open-problem
discussion; nothing in it is used in the main theorems.} Exact scalarity of the
full bounded commutant is Theorem~\ref{thm:layer2a};
the elementary sector theorems (Theorems~\ref{thm:C}, \ref{thm:E}) give independent
mechanisms. Both exact statements should be distinguished from the quantitative gap,
\begin{equation}
\kappa_T=\inf\bigl\{\,\|[S,N]\|:\ N\in\Comm(T),\ \mathrm{dist}(N,\C1)=1\,\bigr\}\ \overset{?}{>}\ 0.
\end{equation}
High-energy probes show: single integer shifts $u\to u+n$ are not approximate
commutants; finite multi-branch combinations can cancel in \emph{selected} scale
channels under an $L^2$/coefficient norm, but this does not survive a dense-scale /
wide-band validation (it is overfitting). Under the operator (multiplier) norm
$\|A\|_\infty$ the obstruction is restored and explained by the following.

\begin{lemma}[full-line reciprocal obstruction]\label{lem:fullline}
For continuous $1$-periodic zero-mean $A$,
\begin{equation}
\sup_{x>0}|A(x)-A(1/x)|\ge\tfrac12\,\|A\|_\infty.
\end{equation}
\end{lemma}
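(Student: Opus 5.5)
The plan is to compare $A$ on large arguments with its value at the origin, using only continuity at $0$, periodicity, and the zero-mean condition. Neither oscillation estimates nor any structure of $\tS$ should be needed. If $\|A\|_\infty=0$ there is nothing to prove, so I assume $M:=\|A\|_\infty>0$. Replacing $A$ by $-A$ changes neither side of the inequality, so I may take $M=A(t_0)$ for some $t_0\in[0,1]$; the maximum is attained by continuity and compactness of a period.

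\emph{Step 1 (the reciprocal side freezes at $A(0)$).} Fix $\varepsilon>0$. By continuity of $A$ at $0$ there is $\delta>0$ with $|A(y)-A(0)|<\varepsilon$ for $0<y<\delta$. Choose an integer $n>1/\delta$. For every $x\in[n,n+1]$ we have $1/x<\delta$, and periodicity gives $A(x)=A(x-n)$. Hence
\begin{equation}
\sup_{x>0}|A(x)-A(1/x)|\;\ge\;\sup_{t\in[0,1]}|A(t)-A(0)|-\varepsilon ,
\end{equation}
and letting $\varepsilon\to0$ reduces the lemma to the claim $\sup_t|A(t)-A(0)|\ge M/2$.

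\emph{Step 2 (zero mean forces a nonpositive value).} Since $\int_0^1A=0$ and $A$ is continuous, there is $t_1$ with $A(t_1)\le0$; otherwise $A>0$ everywhere and the mean would be positive. I then split into two cases. If $A(0)\le M/2$, then $|A(t_0)-A(0)|\ge M-M/2=M/2$. If $A(0)>M/2$, then $|A(t_1)-A(0)|=A(0)-A(t_1)>M/2$. In either case the claim holds, which proves the lemma.

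There is no serious obstacle. The only care needed is in Step 1, where the supremum is approached rather than attained; the $\varepsilon$-argument above handles this. It is also worth noting that the constant $\tfrac12$ is precisely what this two-case argument yields.
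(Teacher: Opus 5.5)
Your proof is correct and follows the paper's argument: Step 1 is the paper's own reduction, taking $x=n+y$ with $n\to\infty$ so that $A(1/x)\to A(0)$, which gives $\|A-A(0)\|_\infty\le\sup_{x>0}|A(x)-A(1/x)|$. The only difference is cosmetic and lies in the last step: the paper bounds $|A(0)|\le\|A-A(0)\|_\infty$ via $A(0)=-\int_0^1(A(y)-A(0))\,dy$ and then applies the triangle inequality, whereas you normalize the sign of $A$ and split into two cases using a point where $A\le 0$.
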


\begin{proof}
Let $D=\sup_{x>0}|A(x)-A(1/x)|$. Fix $y\in[0,1)$ and take $x=n+y$ with
$n\in\Z_{>0}$: periodicity gives $A(x)=A(y)$, while $1/x\to0$ and continuity give
$A(1/x)\to A(0)$ as $n\to\infty$. Hence $|A(y)-A(0)|\le D$ for every $y$, i.e.\
$\|A-A(0)\|_\infty\le D$. The zero mean gives
$A(0)=-\int_0^1\bigl(A(y)-A(0)\bigr)dy$, so $|A(0)|\le\|A-A(0)\|_\infty\le D$.
Therefore $\|A\|_\infty\le\|A-A(0)\|_\infty+|A(0)|\le2D$.
\end{proof}

Since $\|A\|_\infty$ is the relevant multiplier norm (the operator norm in the
bilateral/high-energy shift model), Lemma~\ref{lem:fullline} is a
degree-independent symbol-level obstruction to $\kappa_T=0$. The evidence thus
points to an operator-norm gap $\kappa_T>0$ in the tested symbol/tame classes, but
lifting the symbol obstruction to the true operator (uniform symbol-error control)
is open. A finite-dimensional Galerkin computation of regular truncated joint
commutants returns a 1D nullspace (the identity) in every truncation,
basis and quadrature; after Theorem~\ref{thm:layer2a}, this is a sanity check of the
proven exact rigidity in regular Galerkin truncations, not an ingredient of the proof.

\paragraph{A Hilbert--Schmidt gap for the joint problem.} The gap $\kappa_T$
concerns $T$-exact deformations in the operator norm. In the Hilbert--Schmidt
metric, where the Weyl operator--symbol correspondence is an exact isometry, the
\emph{joint} approximate-commutant question is not open; it has a clean positive
answer by standard representation theory.

\begin{proposition}[Hilbert--Schmidt gap for the conjugation representation]\label{prop:hsgap}
Let $\mathcal C_{2}$ be the Hilbert--Schmidt class on
$\mathcal H\cong L^2_{\mathrm{even}}(\R)$. There is a constant
$c_{\mathrm{HS}}>0$ such that every $X\in\mathcal C_{2}$ satisfies
\begin{equation}
\max\bigl\{\|[\tS,X]\|_{2},\,\|[T,X]\|_{2}\bigr\}\ \ge\ c_{\mathrm{HS}}\,\|X\|_{2}.
\end{equation}
In particular there are no Hilbert--Schmidt approximate-invariant (Weyl)
sequences for the joint commutant: no sequence of unit Hilbert--Schmidt
operators asymptotically commutes with both $\tS$ and $T$.
\end{proposition}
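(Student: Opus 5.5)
The plan is to recast the statement as a spectral-gap property of a unitary representation of the lattice $\widetilde\Gamma$, and to obtain that gap from temperedness together with non-amenability of $SL(2,\Z)$.

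\emph{Setup.} Identify $\mathcal C_2$ unitarily with $\mathcal H\otimes\overline{\mathcal H}$ via $X\leftrightarrow$ its kernel. Under this identification, conjugation $X\mapsto \rho(g)X\rho(g)^{*}$ becomes the tensor representation $\pi:=\rho\otimes\bar\rho$ of $Mp(2,\R)$, with $\rho=\omega_{\mathrm{even}}$ as in Appendix~\ref{app:weilproof}. Central elements and the constant Virasoro phases act trivially under conjugation. Hence $\|[\tS,X]\|_2=\|\pi(\sfS)X-X\|_2$ and $\|[T,X]\|_2=\|\pi(\sfT)X-X\|_2$, with $\pi$ restricted to $\widetilde\Gamma$. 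This restriction factors through $PSL(2,\Z)$, which is generated by the images of $\sfS,\sfT$. The claim is therefore exactly the statement that $\pi|_{\widetilde\Gamma}$ has no almost-invariant vectors with respect to the finite generating set $\{\sfS,\sfT\}$, i.e.\ a Kazhdan-type constant for this one representation.

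\emph{Step 1: temperedness of $\pi$.} From the Gaussian computation in Appendix~\ref{app:weilproof}, the $K$-finite matrix coefficients of $\rho$ decay like $e^{-s/2}$ on the chamber $a(s)$. I would extend this to a dense set of $K$-finite vectors: each $K$-type of the even oscillator representation is a Hermite function, and its coefficients have the same decay up to polynomial factors. Consequently the coefficients of $\pi=\rho\otimes\bar\rho$ decay like $e^{-s}$ up to polynomial factors. Against the Haar density $e^{2s}\,ds$, these lie in $L^{2+\epsilon}(G)$ for every $\epsilon>0$. By the Cowling--Haagerup--Howe criterion, $\pi$ is then weakly contained in the regular representation $\lambda_G$, i.e.\ $\pi$ is tempered. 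I expect this to be the main obstacle. The difficulty is not the decay estimate but checking that the criterion applies cleanly on the finite cover $Mp(2,\R)$. If that is awkward, I would use the fallback of passing to the genuine representation of $SL(2,\R)$, since $\pi$ is trivial on $\pm1$ lifts up to the center.

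\emph{Step 2: restriction to the lattice.} Restriction preserves weak containment, so $\pi|_{\widetilde\Gamma}\prec\lambda_G|_{\widetilde\Gamma}$. The latter is a multiple of $\lambda_{\widetilde\Gamma}$, because $G$ decomposes as a union of cosets of $\widetilde\Gamma$ under a Borel fundamental domain. The quotient $\widetilde\Gamma/\text{center}\cong PSL(2,\Z)\cong\Z/2*\Z/3$ is non-amenable. Hence, by Kesten's criterion, the Markov operator
\[
M=\tfrac14\bigl(\lambda(\sfS)+\lambda(\sfS)^{*}+\lambda(\sfT)+\lambda(\sfT)^{*}\bigr)
\]
has norm $\le 1-\eta$ for some $\eta>0$. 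Weak containment transfers this bound to $\pi$: $\|\pi(M)\|\le 1-\eta$.

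\emph{Step 3: conclusion.} For $X\in\mathcal C_2$ I would write $X-\pi(M)X$ as a sum of four terms $\pi(g)X-X$ (and $\pi(g)^{*}X-X$, which has the same norm). This gives
\[
\eta\|X\|_2\le\|X-\pi(M)X\|_2\le\max\bigl\{\|[\tS,X]\|_2,\|[T,X]\|_2\bigr\},
\]
so one may take $c_{\mathrm{HS}}=\eta$. The Weyl-sequence statement is an immediate reformulation of this inequality.
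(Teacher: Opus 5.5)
Your proposal is correct and follows the paper's proof essentially step for step. Both identify $\mathcal C_2$ with $\omega_{\mathrm{even}}\otimes\overline{\omega_{\mathrm{even}}}$, get temperedness from the $O(P(s)e^{-s})$ decay of $K$-finite (Hermite) coefficients via Cowling--Haagerup--Howe, restrict weak containment to the lattice, and use non-amenability of $PSL(2,\Z)$. Your Kesten Markov-operator bound is just an explicit form of the paper's ``no almost-invariant vectors, hence a uniform gap over a finite generating set'' step. Your worry about the cover is unnecessary: the criterion applies directly to $Mp(2,\R)$, which is connected with finite center, and in any case $\pi$ is trivial on the center and factors through $PSL(2,\R)$.
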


\begin{proof}
Since $\tS,T$ are unitary, $\|\tS X\tS^{*}-X\|_{2}=\|[\tS,X]\|_{2}$ and likewise
for $T$, so the claim is a spectral-gap statement for the conjugation
representation $\Pi(\gamma)X=\rho(\gamma)X\rho(\gamma)^{*}$, which factors
through $\Gamma=PSL(2,\Z)$ (conjugation kills the center and all phases). Under
$\mathcal C_{2}\simeq\mathcal H\otimes\overline{\mathcal H}$ one has
$\Pi\simeq\omega_{\mathrm{even}}\otimes\overline{\omega_{\mathrm{even}}}$;
equivalently, in the Weyl calculus $\Pi$ is the pullback action of linear
symplectic maps on Weyl symbols (exact metaplectic covariance), the even--even
corner being an invariant direct summand. The Gaussian computation of
Appendix~\ref{app:weilproof} displays the leading exponent $e^{-s/2}$; the same
dilation estimate in the Hermite model,
$\langle\omega(a(s))f,h\rangle=e^{-s/2}\int f(y)\,\overline{h(e^{-s}y)}\,dy$
with $f,h$ Schwartz, gives $O(P(s)\,e^{-s/2})$ for \emph{all} $K$-finite
coefficients, hence $O(P(s)\,e^{-s})$ for the coefficients of the tensor square.
Against the Cartan Haar density $\asymp e^{2s}\,ds$ these lie in
$L^{2+\epsilon}(G)$ for every $\epsilon>0$, so by the Cowling--Haagerup--Howe
almost-$L^{2}$ criterion~\cite{CHH:1988} $\Pi$ is tempered:
$\Pi\prec\lambda_{G}$. Weak containment restricts to the lattice (finite subsets
of $\Gamma$ are compact in $G$), and for a measurable fundamental domain $F$
(finite covolume suffices; compactness is not used)
$L^{2}(G)\simeq\ell^{2}(\Gamma)\otimes L^{2}(F)$ gives
$\lambda_{G}|_{\Gamma}\simeq\lambda_{\Gamma}\otimes1$, hence
$\Pi|_{\Gamma}\prec\lambda_{\Gamma}$. Since $PSL(2,\Z)\simeq\Z/2*\Z/3$ is
nonamenable, $\lambda_{\Gamma}$ admits no almost-invariant vectors
\cite{BHV:2008}, and neither does any representation weakly contained in it; for
a finitely generated group this yields a uniform gap over any finite generating
set. The images of $\sfS$ and $\sfT$ generate $PSL(2,\Z)$; $\sfS$ is an involution and
$\|\Pi(\sfT^{-1})v-v\|=\|\Pi(\sfT)v-v\|$, so the symmetric generating set is
controlled, giving the displayed estimate.
\end{proof}

\begin{remark}[why this is not a Hilbert--Schmidt $\kappa_T$]\label{rem:noHSkappa}
The proposition is a gap for the \emph{joint} approximate commutant, not a
$T$-exact statement: the exact $\Comm(T)$ contains no nonzero Hilbert--Schmidt
operator at all. Indeed, if $X\in\mathcal C_{2}$ has kernel $K(u,v)$ and
$[X,T]=0$, then $(e^{2\pi iu}-e^{2\pi iv})K(u,v)=0$ a.e., so $K$ is supported on
the null set $\bigcup_{n}\{u-v=n\}$ and vanishes in $L^{2}$. $T$-exact
deformations are thus singular with respect to the Hilbert--Schmidt metric,
which localizes the difficulty of $\kappa_T$ in the non-Hilbert--Schmidt
(multiplier-type) directions, where Lemma~\ref{lem:fullline} is the symbol-level
obstruction. The finite Galerkin/Frobenius gaps of Section~\ref{sec:numerics}
are fiber-norm diagnostics, not continuum Hilbert--Schmidt statements; the
operator-norm $\kappa_T$ remains open.
\end{remark}

\printbibliography[heading=bibliography]

\end{document}